\pgfplotsset{compat=1.11}
\newcommand{\definedas}{\mathrel{\raise.095ex\hbox{\rm :}\mkern-5.2mu=}}
\newcommand{\R}{\mathbb{R}}
\renewcommand{\d}{\,\mathrm{d}}
\newcommand{\btr}[1]{\left\vert#1\right\vert}
\newcommand{\Ric}{\mathrm{Ric}}
\newcommand{\grad}{\text{grad}}
\newcommand*{\rom}[1]{\expandafter\@slowromancap\romannumeral #1@}
\theoremstyle{plain}
\newtheorem{thm}{Theorem}[section]
\newtheorem{prop}[thm]{Proposition}
\newtheorem{lem}[thm]{Lemma}
\theoremstyle{definition}
\newtheorem{defi}[thm]{Definition}
\newtheorem{bem}[thm]{Remark}
\newtheorem{kor}[thm]{Corollary}
\begin{document}

\begin{center}
\vspace{1.3cm}
{\LARGE \textsc{Some new perspectives on the Kruskal--Szekeres extension with applications to photon surfaces}\par}
\end{center}
\vspace{2cm}
\begin{center}
\textbf{Carla Cederbaum, Markus Wolff}
\end{center}
\begin{center}
\vspace{1cm}
\end{center}
\thispagestyle{empty}
\section*{Abstract}
It is a well-known fact that the Schwarzschild spacetime admits a maximal spacetime extension in null coordinates which extends the exterior Schwarzschild region past the Killing horizon, called the Kruskal--Szekeres extension. 
This method of extending the Schwarzschild spacetime was later generalized by Brill--Hayward to a class of spacetimes of ``profile $h$'' across non-degenerate Killing horizons. Circumventing analytical subtleties in their approach, we reconfirm this fact by reformulating the problem as an ODE, and showing that the ODE admits a solution if and only if the naturally arising Killing horizon is non-degenerate. Notably, this approach lends itself to discussing regularity across the horizon for non-smooth metrics.

We will discuss applications to the study of photon surfaces, extending results by Ceder\-baum--Galloway and Cederbaum--Jahns--Vi\v{c}\'{a}nek-Mart\'{i}nez beyond the Killing horizon. In particular, our analysis asserts that photon surfaces approaching the Killing horizon must necessarily cross it.\newpage

\section{Introduction}\label{sec_intro}
\addtocounter{section}{0}
\addtocounter{thm}{0}
	In 1960, Kruskal and Szekeres both independently found the same extension of the Schwarz\-schild spacetime, which we now know to be maximal \cite{sbierski} and which is called the Kruskal--Szekeres extension \cite{kruskal, szekeres}.
	This method has been adapted to many other spacetime geometries of the general form $\mathcal{T}\times\mathcal{N}$ with metric
	\begin{align}\label{intro_metricform}
		g=-h(r)\d t^2+\frac{1}{h(r)}\d r^2+r^2g^{\mathcal{N}},
	\end{align}
	with $\mathcal{T}=\R\times\mathcal{I}$ for some open interval $\mathcal{I}\subseteq(0,\infty)$ and a complete Riemannian manifold $(\mathcal{N},g^{\mathcal{N}})$,
	to study various topics of near horizon geometry (see e.g. Gibbons--Hawking \cite{gibbhawk}, Grayson--Brill \cite{gravbrill}, Qiu--Traschen \cite{qiutraschen}). A rather general method for extending this class of spacetimes was developed by Walker \cite{walker70} in 1970, assuming that the metric coefficient $h$ has a certain algebraic structure. The general case was covered by Brill--Hayward \cite{brillhayward} who realized that Kruskal-like coordinates can be constructed across any non-degenerate Killing horizon. Brill and Hayward generalize the construction by Kruskal and Szekeres by introducing a suitable \emph{tortoise function}, see Section \ref{sec_globaltortoisefunc} below. In a numerical approach for the construction of Penrose--Carter diagrams, Schindler--Aguirre \cite{Schindler2018AlgorithmsFT} computed a global such tortoise function as the limit of a complex path integral, assuming real analyticity of $h$. Here, instead of studying the local properties of a tortoise function near the Killing horizon, we construct the spacetime extension from the solution of a global ODE (Theorem \ref{thm_main1}). We recover the result of Brill and Hayward by showing that the ODE is uniquely solvable (up to scaling) across the Killing horizon if and only if the Killing horizon is non-degenerate (Proposition \ref{prop_central}). In particular, we can then a posteriori recover a tortoise function in the style of Brill and Hayward. As the solvability relies on a version of l'H\^opital's Rule, our construction allows for precise regularity statements across the horizon for non-smooth metrics (Theorem \ref{thm_main2}). The constructed spacetime extensions are $C^2$-inextendable and geodesically incomplete under suitable conditions on $h$ (Corollary \ref{kor_inextendability}). 
	
	The Kruskal--Szekeres extension and its generalizations have many interesting applications, see e.g. \cite{gravbrill, gibbhawk, roesch, Schindler2018AlgorithmsFT}. Whenever these applications remain meaningful for non-smooth metrics, one can generalize them by applying our technique. Here, we focus on the theoretical and numerical computation of Penrose--Carter diagrams by Schindler--Aguirre \cite{Schindler2018AlgorithmsFT}, see Section \ref{sec_globaltortoisefunc}. In addition, we utilize the constructed generalized Kruskal--Szekeres coordinates and extension to analyze the behaviour of symmetric photon surfaces asymptotically near null infinity, near a non-degenerate Killing horizon, and inside the black and white hole regions of the extended spacetimes. In particular, we assert that photon surfaces approaching a non-degenerate Killing horizon must cross it (Theorem \ref{prop_crossing}) and those asymptoting to an asymptotically flat infinity ($h\to 1$ as $r\to\infty$) in fact asymptote to a lightcone (Theorem \ref{prop_lightconeasymptotics}). This extends and complements the work of Cederbaum--Galloway \cite{cedgal} and Cederbaum--Jahns--Vi\v{c}\'{a}nek-Mart\'{i}nez \cite{cedoliviasophia}. \newline\\
	This paper is structured as follows:
	In Section \ref{sec_prelim}, we will introduce the notation used throughout the paper.
	In Section \ref{sec_extension}, we will reduce the construction of a generalized Kruskal--Szekeres extension to the existence of solutions of a suitable ODE, and show that the ODE admits a solution if and only if the Killing horizon is non-degenerate. We also briefly touch upon inextendability and geodesic incompleteness.
	In Section \ref{sec_globaltortoisefunc}, we will comment on the construction of a global tortoise function using the above ODE.
	In Section \ref{sec_photonsurf}, we give an application of the generalized Kruskal--Szekeres coordinates and extension by extending the work of Cederbaum--Galloway and Cederbaum--Jahns--Vi\v{c}\'{a}nek-Mart\'{i}nez on symmetric photon surfaces across the horizon.

\subsection*{Acknowledgements.}
The authors would like to thank Gregory J. Galloway, Klaus Kr\"oncke, and Charles Robson for helpful comments and questions and Axel Fehrenbach and Maria Zioga for help with the graphics.

The first named author would like to extend thanks to the Mittag-Leffler Institute for allowing her to work in stimulating environments. The work of the first named author is supported by the focus program on Geometry at Infinity (Deutsche Forschungsgemeinschaft,  SPP 2026) and by the Institutional Strategy of the University of T\"ubingen (Deutsche Forschungsgemeinschaft, ZUK 63).

\setcounter{section}{1}
\section{Preliminaries}\label{sec_prelim}
	We consider $(n+1)$-dimensional spacetimes of a certain class $\mathfrak{H}$, which carry metrics of the above form \eqref{intro_metricform} and are fully determined by a choice of a metric coefficient $h$ and of an $(n-1)$-dimensional Riemannian manifold $({\mathcal{N}},g_{\mathcal{N}})$. Here $h\colon(0,\infty)\to\R$ is smooth, unless otherwise stated, with positive, real zeros $r_0\definedas 0<r_1<\dotsc r_i<\infty=:r_{N+1}$, $N\ge 1$. Then we say that
	\begin{align*}
		M_i&=\R\times(r_{i-1},r_i)\times {\mathcal{N}},\\
		g&=-h(r)\d t^2+\frac{1}{h(r)}\d r^2+r^2g_{\mathcal{N}},
	\end{align*}
	is a \emph{spacetime of class $\mathfrak{H}$}, where $i\in\{1,\dotsc, N+1\}$. Wherever $h>0$, the metric $g$ is static with timelike Killing vector field $\partial_t$, however as we aim to look inside black hole horizons or past cosmological horizons, we also want to consider regions where $h<0$. In either case, $\partial_t$ is a Killing vector field, and we note that the positive zeroes $r_i$ of $h$ correspond to Killing horizons $\{r=r_i\}$, see below. Both in the study of isolated systems and of cosmology, spherically symmetric spacetimes of class $\mathfrak{H}$, i.e., when $(\mathcal{N},g^{\mathcal{N}})$ is given as the round sphere, yield a large class of models which have been studied extensively, e.g. the Schwarzschild and Reissner-Nordstr\"om spacetimes, and the de\,Sitter and anti-de\,Sitter spacetimes. If we additionally assume that $(\mathcal{N},g^{\mathcal{N}})$ has constant sectional curvature, then a spacetime of class $\mathfrak{H}$ is equipped with a Birmingham--Kottler metric, see e.g. \cite{birmi, kottler, chrugalpot}.
	
	Given a function $h$ as above, we understand the spacetimes $(M_i,g)$ of class $\mathfrak{H}$ with ${M_i=\R\times(r_{i-1},r_i)\times {\mathcal{N}}}$ as different regions of a larger spacetime divided by the Killing horizons $\{r=r_i\}$, an interpretation we will make rigorous with our construction in \Cref{sec_extension}. In line with the usual convention, we denote $M_i$ corresponding to the outermost interval $(r_{i-1},r_i)$ on which $h$ is positive as \emph{Region~\rom{1}} and refer to it as the \emph{domain of outer communication}. Thus, the domain of outer communication corresponds to either $(r_N,r_{N+1})$ or $(r_{N-1},r_{N})$, where in the latter case ``$r=\infty$''  and Region~\rom{1} are separated by a cosmological Killing horizon. As we move inward with respect to the radius, we will denote the spacetimes $(M_i,g)$ corresponding to the open intervals $(r_{i-1},r_i)$ as Regions with an increasing Roman numeral given by a map $L(i)$, where
	\begin{align*}
		L(i)\definedas
		\begin{cases}
			N+2-i&h>0\text{ on }(r_N,r_{N+1}),\\
			N+1-i&h>0\text{ on }(r_{N-1},r_{N}),
		\end{cases}
	\end{align*} 
	for $i\in\{1,\dotsc, N+1\}$. Note that this unconventionally leads to the name Region~$0$ for the region outside a cosmological Killing horizon; this turns out to be convenient due to our iterative definition.
	Suppressing the coordinates on $\mathcal{N}$, we define the \emph{planes} $P_{L(i)}=\R\times(r_{i-1},r_i)$ \emph{with metric coefficient $h$} $P_{L(i)}$ for $1\le i\le N+1$. These are equipped with the induced metric 
	\[
		-h\d t^2+\frac{1}{h}\d r^2.
	\]
 	We will denote the Levi-Civita connection of $(M,g)$ by $\nabla$.
	In a spacetime, the \emph{surface gravity $\kappa$} of a Killing horizon with respect to an ``asymptotically''  timelike Killing vector field $X$ describes the failure of the integral curves of $X$ to be affinely parametrized null geodesics at the horizon. More precisely, $\kappa$ is defined by the equation
		\[
			\nabla_XX=\kappa X
		\]
	evaluated at the horizon. We call said Killing horizon \emph{non-degenerate} if $\kappa\not=0$, \emph{degenerate} if $\kappa=0$. Note that the value of $\kappa$ depends on the scaling of $X$ , so an additional restriction on $X$  is required to define $\kappa$ uniquely. In the asymptotically flat case, one prescribes a natural boundary condition on $X$ at infinity, namely that $g(X,X)\to-1$ as $r\to\infty$ \cite{wald}. Note that in the case of an asymptotically flat spacetime of class $\mathfrak{H}$ (which in particular imposes that $(\mathcal{N},g_{\mathcal{N}})$ is the round sphere), the surface gravity of a Killing horizon $\{r=r_i\}$ w.r.t. $X=\partial_t$ is well-known:
	\begin{lem}\label{lem_surfgrav}\cite[ Equation (12.5.16)]{wald}
		If $h\to 1$ as $r\to\infty$, then the surface gravity $\kappa_N$ w.r.t. $X=\partial_t$ of the Killing horizon $\{r=r_N\}$ bordering Region~\rom{1} satisfies
		\begin{align}\label{eq_surfgrav}
		\kappa_N=\pm\frac{h'(r_N)}{2}.
		\end{align}
	\end{lem}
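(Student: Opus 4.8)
The plan is to compute the coordinate-invariant scalar $(\nabla^a X^b)(\nabla_a X_b)$ for $X=\partial_t$, which is the standard object encoding $\kappa^2$ at a Killing horizon, and then to evaluate it in the limit $r\to r_N$. Because the warped-product structure of $g$ means that $\partial_t$ and every quantity built from it lives on the plane $P_{\rom{1}}$ with metric $-h\,\d t^2+\tfrac1h\,\d r^2$, the entire computation reduces to this two-dimensional factor.

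First I would record the two identities that make $\kappa$ accessible. Writing $\lambda\definedas g(X,X)=-h$ and using that $\nabla X^\flat$ is antisymmetric for a Killing field, Killing's equation gives $(\nabla_X X)_a=-\tfrac12\nabla_a\lambda$; since $X=\partial_t$ is hypersurface-orthogonal and becomes null and normal to $\{r=r_N\}$, the Frobenius condition forces $\nabla_a X_b$ to be a simple bivector built from $X_a$ and some one-form $u_a$ on the horizon. Contracting that relation and using $g(X,X)=0$ there yields the standard formula
\[
	\kappa_N^2=-\tfrac12(\nabla^a X^b)(\nabla_a X_b)\big|_{\{r=r_N\}},
\]
which is exactly the surface-gravity expression in \cite{wald}; alternatively one may simply quote it.

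The actual computation is then short. Since $X^\flat=g(X,\cdot)=-h\,\d t$ and $\nabla_a X_b=\tfrac12(\d X^\flat)_{ab}$ for a Killing field, one finds $\d X^\flat=h'\,\d t\wedge\d r$, so the only nonvanishing components are $\nabla_t X_r=-\nabla_r X_t=\tfrac12 h'$. Raising indices with $g^{tt}=-\tfrac1h$ and $g^{rr}=h$ and contracting produces $(\nabla^a X^b)(\nabla_a X_b)=-\tfrac12(h')^2$, hence $\kappa_N^2=\tfrac14\big(h'(r_N)\big)^2$ and $\kappa_N=\pm\tfrac12 h'(r_N)$ upon taking the square root, the sign reflecting the freedom in time orientation (future versus past horizon). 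The role of the hypothesis $h\to1$ as $r\to\infty$ is precisely to enforce the normalization $g(X,X)=-h\to-1$ at infinity, which pins down the scaling of $X=\partial_t$ and thereby removes the scaling ambiguity in $\kappa$ noted before the lemma.

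The hard part is conceptual rather than computational: the coordinates $(t,r)$ degenerate at $\{r=r_N\}$, so one cannot naively read off $\kappa$ from $\nabla_X X=\kappa X$ in these coordinates (indeed $X^\flat=-h\,\d t$ appears to vanish there while $(\nabla_X X)^\flat=\tfrac12 h'\,\d r$ does not). The resolution, and the step I would treat most carefully, is that $(\nabla^a X^b)(\nabla_a X_b)=-\tfrac12(h')^2$ is a genuine scalar field on Region~\rom{1}, smooth as a function of $r$, so its horizon value is legitimately obtained as the limit $r\to r_N$; this is exactly what makes the argument insensitive to the coordinate breakdown.
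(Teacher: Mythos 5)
Your proposal is correct, and it supplies a derivation where the paper gives none: Lemma~\ref{lem_surfgrav} is stated purely as a citation of Wald, and the only surface-gravity computation actually carried out in the paper is the a posteriori one in Proposition~\ref{prop_surfgrav}, performed in the $(u,v)$ coordinates of the already-constructed extension, where $\nabla_{\partial_t}\partial_t=\kappa\,\partial_t$ can be evaluated directly on $\{u=0,\,v>0\}$ because that chart does not degenerate at the horizon. Your route is the standard one behind the citation: compute the invariant $-\tfrac12(\nabla^aX^b)(\nabla_aX_b)$ in static coordinates and pass to the limit $r\to r_N$. The computation itself checks out: $\nabla_aX_b=\tfrac12(\mathrm{d}X^\flat)_{ab}$ for a Killing field, $\mathrm{d}X^\flat=h'\,\mathrm{d}t\wedge\mathrm{d}r$, and contracting with $g^{tt}=-1/h$, $g^{rr}=h$ gives $-\tfrac12(h')^2$, hence $\kappa_N^2=\tfrac14h'(r_N)^2$; and your reading of the hypothesis $h\to1$ as the normalization fixing the scaling of $X$ matches the discussion preceding the lemma. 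The trade-off between the two routes is worth recording. Your argument works intrinsically in Region~\rom{1} at the price of a limit argument whose justification requires that $\{r=r_N\}$ be a genuine Killing horizon of some ambient spacetime and that the scalar $(\nabla^aX^b)(\nabla_aX_b)$ extend continuously to it --- i.e., it tacitly presupposes that an extension across the horizon exists, which is exactly what the paper sets out to construct; this is inherent in the lemma's statement (the defining relation $\nabla_XX=\kappa X$ is evaluated \emph{at} the horizon), so it is not a gap in your proof, but it explains why the paper keeps the lemma as a citation used only to translate non-degeneracy ($\kappa_N\neq0$) into the analytic condition $h'(r_N)\neq0$. Conversely, the paper's computation in Proposition~\ref{prop_surfgrav} makes the dependence on the extension explicit and, working directly on the horizon rather than with the squared invariant, determines the sign as well, $\kappa_i=+\tfrac12h'(r_i)$, whereas your method can only pin down $\kappa_N$ up to the $\pm$ appearing in the statement.
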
 
	Using $X=\partial_t$, \eqref{eq_surfgrav} holds true for all Killing horizons $\{r=r_i\}$, $i\in\{1,\dotsc,N\}$, in general spacetimes of class $\mathfrak{H}$, as can be seen by the same straightforward computation. As we only need to differentiate between degenerate and non-degenerate Killing horizons, this scaling of $X$ is sufficient for our purposes. In fact, assuming that $(M_i,g)$ admits a generalized Kruskal--Szekeres extension, one finds that $\kappa_i=+\frac{1}{2}h'(r_i)$, see Proposition \ref{prop_surfgrav}.
	
\section{Construction of the generalized Kruskal--Szekeres extension}\label{sec_extension}
	To construct a spacetime extension joining $(M_i,g)$ and $(M_{i+1},g)$ in the spirit of the Kruskal--Szekeres extension, it suffices to show that we can join the planes $P_{L(i)}$, $P_{L(i+1)}$ across their shared boundary $\R\times\{r_i\}$ in a regular way.
	Imitating the approach presented in O'Neill \cite[Pages 386--389]{oneill}, we define the \emph{generalized Kruskal--Szekeres plane} $(\mathbb{P}^i_h,\d s^2)$ as follows.
	\begin{defi}\label{extdefi1}
		Let $h\colon (0,\infty)\to\R$ be a (smooth) function with finitely many zeros ${r_0:=0<r_1<\dotsc<r_N<r_{N+1}:=\infty}$. Assume there exists a (smooth) strictly increasing solution $f_i$ of the ODE 
		\begin{align}\label{eq_ODE_central}
		\frac{f_i}{f_i'}=K_ih
		\end{align}
		on $(r_{i-1},r_{i+1})$ for some $K_i\in\R\setminus\{0\}$, $i\in{1,\dotsc,N}$. We define the generalized Kruskal--Szekeres plane $(\mathbb{P}^i_h,\d s^2)$ (with respect to $f_i$) as
		\begin{align*}
		\mathbb{P}^i_h&\definedas\{(u,v)\in\R^2:uv\in \text{Im}(f_i)\},\\
		\d s^2&=(F_i\circ\rho)(\d u\otimes \d v+\d v\otimes \d u),
		\end{align*}
		where $F_i\definedas\frac{2K_i}{f_i'}$ and $\rho\definedas f_i^{-1}(uv)$.
	\end{defi}
	We will see in Proposition \ref{propisometry} below that the generalized Kruskal--Szekeres plane indeed gives rise to a spacetime extension of $(M_i,g)$ and $(M_{i+1},g)$. Hence, the existence of a generalized Kruskal--Szekeres extension joining $(M_i,g)$ and $(M_{i+1},g)$ solely depends on the solvability of \eqref{eq_ODE_central} for a suitable constant $K_i$.
	For a complete analysis of the ODE \eqref{eq_ODE_central}, we refer to Appendix \ref{app_solving}, but for the convenience of the reader, we state the main result of Appendix \ref{app_solving} directly here:
	\begin{prop}\label{prop_central}
	Let $h$ be as in Definition \ref{extdefi1} Then Equation \eqref{eq_ODE_central} admits a strictly increasing solution $f_i$ on $(r_{i-1},r_{i+1})$ for some $K_i\in\R\setminus\{0\}$ if and only if $h'(r_i)\not=0$. If \eqref{eq_ODE_central} has a solution, $K_i=\frac{1}{h'(r_i)}$ and $f_i$ is uniquely determined up to scaling. Unless otherwise stated, we will choose the unique solution $f_i$ such that $f_i'(r_i)=1$
	\end{prop}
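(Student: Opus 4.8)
The plan is to read \eqref{eq_ODE_central} in its polynomial form $f_i=K_ih\,f_i'$, which is meaningful at every point of $(r_{i-1},r_{i+1})$, while remembering that the original quotient $f_i/f_i'$ must be defined even at the zero $r_i$ of $h$; this already forces $f_i'(r_i)\neq0$, and together with strict monotonicity lets me take $f_i'>0$ throughout. I would then prove the three assertions in turn: first the necessity of $h'(r_i)\neq0$ together with the value $K_i=1/h'(r_i)$, then existence for $h'(r_i)\neq0$ by an explicit construction, and finally uniqueness up to scaling.

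For necessity, evaluating $f_i=K_ih\,f_i'$ at $r_i$ and using $h(r_i)=0$ immediately gives $f_i(r_i)=0$. Differentiating the same relation yields $f_i'=K_ih'f_i'+K_ihf_i''$, and evaluating this at $r_i$ (where $h$ vanishes) gives $f_i'(r_i)=K_ih'(r_i)f_i'(r_i)$, i.e.\ $f_i'(r_i)\bigl(1-K_ih'(r_i)\bigr)=0$. Since $f_i'(r_i)\neq0$, I conclude $K_ih'(r_i)=1$, so that $h'(r_i)\neq0$ and $K_i=1/h'(r_i)$ are both forced. This is the l'H\^opital-type computation the introduction alludes to, and it is exactly here that one sees why a degenerate horizon ($h'(r_i)=0$) can admit no such solution.

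For existence, set $K_i=1/h'(r_i)$ and build $f_i$ from the separated equation $f_i'/f_i=1/(K_ih)$. The only delicate point is the behaviour of the singular integrand $1/(K_ih)$ at $r_i$, and I would handle it via Hadamard's lemma: write $h=(r-r_i)\tilde h$ with $\tilde h$ smooth, $\tilde h(r_i)=h'(r_i)\neq0$, and $\tilde h$ nowhere vanishing on $(r_{i-1},r_{i+1})$ (as $h$ has no other zero there). A second application to $h'(r_i)/\tilde h-1$ produces the clean splitting $\tfrac{1}{K_ih}=\tfrac{1}{r-r_i}+\beta(r)$ with $\beta$ smooth on all of $(r_{i-1},r_{i+1})$. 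Exponentiating a primitive then gives $f_i(r)=c\,(r-r_i)\,e^{B(r)}$ with $B'=\beta$ smooth, a single smooth function on the whole interval with $f_i(r_i)=0$ and $f_i'(r_i)=c\,e^{B(r_i)}\neq0$; a direct substitution confirms it solves \eqref{eq_ODE_central}. Since $f_i$ and $K_ih$ carry matching signs on each of $(r_{i-1},r_i)$ and $(r_i,r_{i+1})$, the quotient $f_i/(K_ih)=f_i'$ is positive on both, so $f_i'>0$ throughout, and choosing $c=e^{-B(r_i)}$ normalises $f_i'(r_i)=1$.

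For uniqueness, away from $r_i$ the equation is linear and first order, so its solution space on each of $(r_{i-1},r_i)$ and $(r_i,r_{i+1})$ is one-dimensional; demanding a $C^1$ (hence, by the explicit form $c(r-r_i)e^{B}$, smooth) extension across $r_i$ with $f_i'(r_i)\neq0$ imposes a single linear relation between the two constants, leaving exactly one free scaling. I expect the existence step to be the main obstacle: one must guarantee that the solution, assembled across the singular point $r_i$ of the ODE, is \emph{genuinely smooth} and has non-vanishing derivative there, and this is precisely what pins down $K_i=1/h'(r_i)$. For any other admissible constant the integrand splits as $\alpha/(r-r_i)+\beta$ with exponent $\alpha=1/(K_ih'(r_i))\neq1$, so that $f_i\sim|r-r_i|^{\alpha}$ near $r_i$; depending on $\alpha$ this either diverges, has unbounded slope, fails to be smooth, or is smooth but vanishes to order $\geq2$ (whence $f_i'(r_i)=0$), each case being incompatible with a strictly increasing solution of \eqref{eq_ODE_central} on the full interval. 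This local regularity analysis, rather than the global solvability, is where the real work lies, and its l'H\^opital formulation is what carries over to the non-smooth metrics treated in Theorem \ref{thm_main2}.
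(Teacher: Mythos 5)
Your proof is correct, but it is organized around a different technical device than the paper's. The paper (Appendix \ref{app_solving}) never factors $h$; instead it reformulates \eqref{eq_ODE_central} as the remainder equation \eqref{eq_ODE_central2}, $1=K_ih'+R_i'h$, proves the equivalences (i)$\Leftrightarrow$(ii)$\Leftrightarrow$(iii) of Proposition \ref{propcentralA}, and obtains the solution in the normal form $f_i=h\exp\left(R_i/K_i\right)$ with $R_i=\int\frac{1-K_ih'}{h}$; the analytic core is the higher-order l'H\^opital-type Lemma \ref{lemmaauxilliary1}, which shows that the quotient $\frac{1-K_ih'}{h}$ extends in $C^k$ when $h$ is only finitely differentiable. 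You instead split the singular integrand against the model $\frac{1}{r-r_i}$ rather than against $\frac{h'}{h}$: two applications of Hadamard's lemma give $\frac{1}{K_ih}=\frac{1}{r-r_i}+\beta$ with $\beta$ smooth, hence the normal form $f_i=c\,(r-r_i)e^{B}$, $B'=\beta$. These are genuinely different decompositions (your regular remainder $\beta$ differs from the paper's $R_i'/K_i$ by $\frac{h'}{h}-\frac{1}{r-r_i}$), though the overall skeleton — necessity by differentiating $f_i=K_ihf_i'$ at $r_i$, existence by an explicit formula, uniqueness from one-dimensional solution spaces on either side of $r_i$ plus $C^1$ matching — is the same; your necessity computation is exactly the paper's chain (i)$\Rightarrow$(ii)$\Rightarrow$(iii). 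What each route buys: yours is more elementary and makes the simple-zero structure of $f_i$ transparent, but Hadamard's lemma in the form you invoke it presupposes smoothness, and at finite regularity the double application costs derivative bookkeeping; the paper's formulation is tailored to the sharp $C^k$ statements (Remark \ref{bem_central}, Theorem \ref{thm_main2}) and, moreover, the remainder function $R_i$ and equation \eqref{eqbrillhaywardanalogue} are not just proof scaffolding — they are reused to build the global tortoise function in Section \ref{sec_globaltortoisefunc} and the functions $R_{\lambda,i}$ in the proof of Theorem \ref{prop_crossing}, which is why the paper routes the argument through \eqref{eq_ODE_central2} rather than through a factorization of $h$. One small remark: your closing case analysis for $K_i\neq 1/h'(r_i)$ (the exponent $\alpha\neq1$ discussion) is redundant, since your own necessity argument already excludes those constants.
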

	\begin{bem}\label{bem_central}
		We see from the construction in Appendix \ref{app_solving} that $f_i$ is explicitly given by
		\begin{align}\label{eqexaxtformsolf}
		f_i(r)&=K_ih(r)\exp\left(\frac{1}{K_i}\int\limits_{r_i}^r\frac{1-K_ih'(s)}{h(s)}\d s\right).
		\end{align}
		Assuming that $h\in C^k$ and that $h$ is locally $(k+1)$-times differentiable around $r_i$ for some $k\ge 1$, we see that $\frac{1-K_ih'}{h}$ extends through $r_i$ in $C^k$ by \Cref{lemmaauxilliary1} below, so a-priori $f_i\in C^k$.  However, by the precise formula \eqref{eqexaxtformsolf} for the solution $f_i$, we see that
		\[
		f_i'=\exp\left(\frac{1}{K_i}\int\limits_{r_i}^r\frac{1-K_ih'(s)}{h(s)}\d s\right),
		\]
		so that in fact $f_i\in C^{k+1}$.
	\end{bem}
	By Lemma \ref{lem_surfgrav}, we note that the condition $h'(r_i)\not=0$ is equivalent to the fact that the Killing horizon $\{r=r_i\}$ is non-degenerate. Thus, assuming that the Killing horizon ${\{r=r_i\}}$ is non-degenerate, we know that \eqref{eq_ODE_central} admits a well-defined, strictly increasing (smooth) solution $f_i$ on $(r_{i-1},r_{i+1})$ with $K_i\not=0$ such that $f_i'(r_i)=1$.
		\begin{figure}[H]
		\centering
		\includegraphics[scale=0.7]{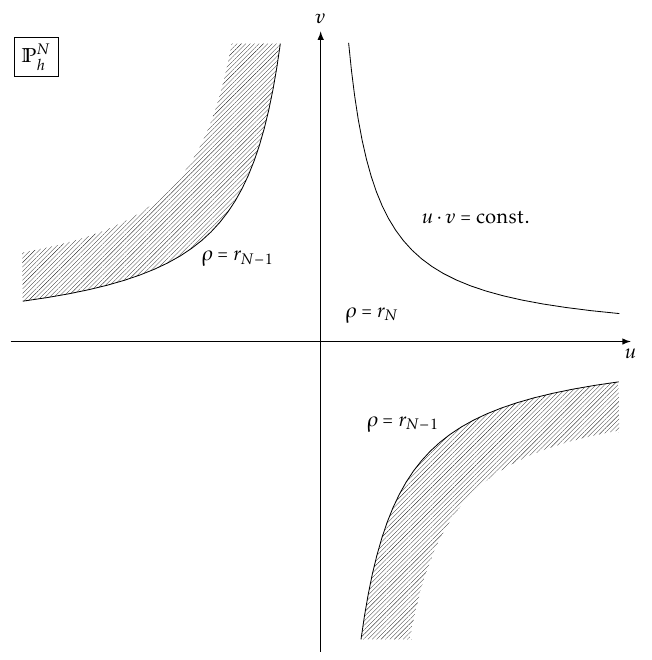}
		\caption{The generalized Kruskal--Szekeres plane $\mathbb{P}^N_h$.}
	\end{figure}
	Next, note that $Im(f_i)$ is an open subset of $\R$ containing $0$, and just as in the Kruskal--Szekeres plane for the Schwarzschild spacetime, the level set curves of $\rho=f^{-1}(uv)$ are hyperbolas $uv=\text{const}.$ if $\rho\not= r_i$, and the coordinate axes if $r=r_i$. Let $Q^i_1,\dotsc,Q^i_4$ be the open quadrants of $\mathbb{P}_h$, where $Q^i_1\definedas\{u,v>0\}$, $Q^i_2\definedas\{v>0, u<0\}$, $Q^i_3\definedas\{u,v<0\}$, and $Q^i_4\definedas\{u>0, v<0\}$.
	We recover the analogous statement to \cite[Lemma 13.23]{oneill}.
	\newpage
	\begin{lem}\label{lem1}
		Let $h$ be as in Definition \ref{extdefi1}, $f_i$ a strictly increasing solution of \eqref{eq_ODE_central} on $(r_{i-1},r_{i+1})$ with $K_i\not=0$, and $(\mathbb{P}^i_h,\d s^2)$ the generalized Kruskal--Szekeres plane with respect to $f_i$. Recalling that $F_i=\frac{2K_i}{f_i'}$ and $\rho(u,v)=f_i^{-1}(uv)$, and defining $\tau\definedas K_i\ln\btr{\frac{v}{u}}$ on the open quadrants $Q^i_1,\dotsc,Q^i_4$ of $\mathbb{P}^i_h$, we find
		\begin{align}
		\begin{split}\label{lemeq1}
				F_if_i&=2K_i^2h,\\
				F_if_i'&=2K_i,\\
				\frac{f_i}{f_i'}&=K_ih,
		\end{split}
		\end{align}
		\begin{align}
		\begin{split}\label{lemeq2}
				\d \tau&=K_i\left(\frac{\d v}{v}-\frac{\d u}{u}\right),\\
				\d \rho&=K_ih\left(\frac{\d u}{u}+\frac{\d v}{v}\right).
		\end{split}
		\end{align}
	\end{lem}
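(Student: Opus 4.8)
The plan is to treat the five identities in two groups: the three pointwise relations in \eqref{lemeq1}, which are purely algebraic consequences of the definition of $F_i$ together with the defining ODE \eqref{eq_ODE_central}, and the two differential identities in \eqref{lemeq2}, which follow by differentiating the defining expressions for $\tau$ and $\rho$ on each open quadrant. Throughout, strict monotonicity of $f_i$ will be used to guarantee $f_i'>0$, so that $F_i=2K_i/f_i'$ and the inverse $f_i^{-1}$ on $\text{Im}(f_i)$ are well-defined.

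First I would dispatch \eqref{lemeq1}. The third line is nothing but the ODE \eqref{eq_ODE_central} restated, hence holds by hypothesis. For the second line, substituting $F_i=2K_i/f_i'$ gives $F_if_i'=(2K_i/f_i')f_i'=2K_i$. For the first line I combine the two, computing $F_if_i=2K_i(f_i/f_i')=2K_i\cdot K_ih=2K_i^2h$, where the ODE enters in the last step. All three are identities in the radial variable (evaluated at $\rho$ once pulled back to $\mathbb{P}^i_h$).

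Next I would establish \eqref{lemeq2} on the open quadrants $Q^i_1,\dotsc,Q^i_4$, where $u$ and $v$ are both nonzero and of fixed sign. Writing $\tau=K_i\lr{\ln\btr{v}-\ln\btr{u}}$ and differentiating yields $\d\tau=K_i\lr{\d v/v-\d u/u}$ at once. For $\rho$, I would differentiate the defining relation $f_i(\rho)=uv$ implicitly to obtain $f_i'(\rho)\,\d\rho=v\,\d u+u\,\d v$, hence $\d\rho=(v\,\d u+u\,\d v)/f_i'(\rho)$. To match the claimed form, I rewrite the right-hand side as $K_ih\lr{\d u/u+\d v/v}=(K_ih/uv)(v\,\d u+u\,\d v)$ and use $uv=f_i(\rho)$ together with the ODE $K_ih=f_i/f_i'$ to see that $K_ih/uv=(f_i/f_i')/f_i=1/f_i'$, which makes the two expressions coincide.

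The computation is essentially routine, and I do not expect a genuine obstacle; the only points requiring care are that \eqref{lemeq2} is valid solely on the open quadrants, where the logarithmic coordinate $\tau$ and the singular one-forms $\d u/u$, $\d v/v$ make sense and $\rho\neq r_i$ (so that one stays away from the coordinate axes, which correspond to the horizon), and that the strict monotonicity of $f_i$ underwrites both the division by $f_i'$ and the use of $f_i^{-1}$ in the definition of $\rho$.
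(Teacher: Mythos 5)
Your proposal is correct and follows essentially the same route as the paper: \eqref{lemeq1} is immediate from the definition of $F_i$ and the ODE \eqref{eq_ODE_central}, and \eqref{lemeq2} follows by differentiating $\tau$ and $\rho$ on the open quadrants, with your implicit differentiation of $f_i(\rho)=uv$ being just a repackaging of the paper's direct computation of $\partial_u\rho$, $\partial_v\rho$ via the inverse function rule. Your added care about $f_i'>0$ and validity only on the quadrants matches the paper's implicit assumptions.
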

	\begin{proof}
			\eqref{lemeq1} is satisfied by construction. Furthermore, a straightforward computation yields
			\begin{align*}
				\d \tau&=\partial_u\tau\d u+\partial_v\tau\d v
				=K_i\left(\frac{u}{v}\frac{\d v}{u}-\frac{u}{v}\frac{v\d u}{u^2}\right)
				=K_i\left(\frac{\d v}{v}-\frac{\d u}{u}\right),\\
				\d \rho&=\partial_v\rho\d v+\partial_u\rho\d u
				=\frac{u}{f_i'(f_i^{-1}(uv))}\d v +\frac{v}{f_i'(f_i^{-1}(uv))}\d u
				=K_ih\left(\frac{\d v}{v}+\frac{\d u}{u}\right).
			\end{align*}
	\end{proof}
	\begin{prop}\label{propisometry}
		Let $h$ be as in Definition \ref{extdefi1}, and let $P_{L(i+1)}$, $P_{L(i)}$ be the planes with shared boundary $\R\times \{r=r_i\}$. Let $f_i$ be a strictly increasing solution of \eqref{eq_ODE_central} on $(r_{i-1},r_{i+1})$ with $K_i\not=0$, and $(\mathbb{P}^i_h,\d s^2)$ the generalized Kruskal--Szekeres plane with respect to $f_i$. Let $\tau\definedas K_i\ln\btr{\frac{v}{u}}$ where defined, and let $Q^i_1,Q^i_2$ be the first two open quadrants of $\mathbb{P}^i_h$.
		Then the function
		\[
		\psi\colon Q^i_1\cup Q^i_2\to P_{L(i+1)}\cup P_{L(i)}, (u,v)\mapsto(\tau(u,v),\rho(u,v))
		\]
		is a quadrant preserving isometry. Therefore $\mathbb{P}^i_h\times_{\rho^2}\mathcal{N}$ is a spacetime extension  joining $(M_i,g)$ and $(M_{i+1},g)$.
	\end{prop}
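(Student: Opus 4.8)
The plan is to check that $\psi$ is a quadrant-preserving diffeomorphism onto $P_{L(i+1)}\cup P_{L(i)}$, then that it pulls the plane metric back to $\d s^2$, and finally to pass to the warped product. First I would pin down the value $f_i(r_i)$: evaluating the identity $f_i=K_i h\,f_i'$ from \eqref{lemeq1} at $r_i$ and using $h(r_i)=0$ with $f_i'(r_i)$ finite and positive gives $f_i(r_i)=0$, so by strict monotonicity $f_i>0$ on $(r_i,r_{i+1})$ and $f_i<0$ on $(r_{i-1},r_i)$. Hence $uv>0\iff\rho=f_i^{-1}(uv)\in(r_i,r_{i+1})$ and $uv<0\iff\rho\in(r_{i-1},r_i)$. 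On $Q^i_1=\{u,v>0\}$ this forces $\rho\in(r_i,r_{i+1})$ and sends $Q^i_1$ into $P_{L(i+1)}=\R\times(r_i,r_{i+1})$, while on $Q^i_2=\{u<0,v>0\}$ it forces $\rho\in(r_{i-1},r_i)$ and sends $Q^i_2$ into $P_{L(i)}=\R\times(r_{i-1},r_i)$, which is the asserted quadrant correspondence. Bijectivity I would get by inverting $\psi$ explicitly: fixing $(\tau,\rho)$, the equations $uv=f_i(\rho)$ and $|v/u|=e^{\tau/K_i}$ determine $|u|,|v|$ as smooth, nowhere-vanishing functions of $(\tau,\rho)$, with signs fixed by the quadrant; since $\tau$ sweeps out all of $\R$, $\psi$ is a diffeomorphism onto each plane.

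The isometry is then a direct computation built on Lemma \ref{lem1}. Writing $a\definedas\d u/u$ and $b\definedas\d v/v$, \eqref{lemeq2} reads $\d\tau=K_i(b-a)$ and $\d\rho=K_i h(a+b)$, so that $-h\,\d\tau\otimes\d\tau+\tfrac1h\,\d\rho\otimes\d\rho$ collapses: the diagonal terms $a\otimes a$ and $b\otimes b$ cancel while the cross terms reinforce, leaving $2K_i^2 h\,(a\otimes b+b\otimes a)$. Substituting $a\otimes b+b\otimes a=(uv)^{-1}(\d u\otimes\d v+\d v\otimes\d u)$, together with $uv=f_i(\rho)$ and the algebraic relation $2K_i^2 h/f_i=F_i$ from \eqref{lemeq1}, produces exactly $F_i(\d u\otimes\d v+\d v\otimes\d u)=\d s^2$. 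This shows $\psi^*\!\bigl(-h\,\d t^2+\tfrac1h\,\d r^2\bigr)=\d s^2$, so $\psi$ is an isometry onto $P_{L(i+1)}\cup P_{L(i)}$.

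Finally, to obtain the extension I would extend $\psi$ by the identity on $\mathcal{N}$ and use that $\rho$ pulls back to $r$: the warped metric $\d s^2+\rho^2 g^{\mathcal{N}}$ then pulls back to $-h\,\d t^2+\tfrac1h\,\d r^2+r^2 g^{\mathcal{N}}=g$, so $(Q^i_1\cup Q^i_2)\times_{\rho^2}\mathcal{N}$ is isometric to $M_{i+1}\cup M_i$. What makes $\mathbb{P}^i_h\times_{\rho^2}\mathcal{N}$ a genuine extension rather than a mere reparametrization of the two regions is that $\d s^2$ extends smoothly and non-degenerately across the axes $\{uv=0\}$, where the original coordinates break down: the conformal factor $F_i=2K_i/f_i'$ is smooth and nowhere vanishing precisely because $f_i$ is a smooth strictly increasing solution of \eqref{eq_ODE_central}, which is exactly where the non-degeneracy $h'(r_i)\neq 0$ (Proposition \ref{prop_central}) enters. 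I expect the main obstacle to be not the metric identity, which is mechanical once Lemma \ref{lem1} is in hand, but the careful sign bookkeeping in the quadrant correspondence together with confirming that $\d s^2$ is a smooth Lorentzian metric on all of $\mathbb{P}^i_h$, including along the horizon $\{\rho=r_i\}$.
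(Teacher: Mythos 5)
Your proposal is correct and follows essentially the same route as the paper: the quadrant correspondence via $f_i(r_i)=0$ and strict monotonicity, and the pullback computation $-h\,\d\tau^2+\tfrac1h\,\d\rho^2=2K_i^2h\left(\tfrac{\d u}{u}\otimes\tfrac{\d v}{v}+\tfrac{\d v}{v}\otimes\tfrac{\d u}{u}\right)=(F_i\circ\rho)(\d u\otimes\d v+\d v\otimes\d u)$ built on Lemma \ref{lem1} are exactly the paper's argument. The only cosmetic difference is that you certify the diffeomorphism property by writing down the explicit smooth inverse (as in Remark \ref{bem_isometry}), whereas the paper computes $\det D\psi=-2K_i/f_i'\neq 0$; your added remark on smoothness and non-degeneracy of $\d s^2$ across the axes makes explicit a point the paper leaves implicit in Definition \ref{extdefi1}.
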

	Recall further that the intersection $\{u=v=0\}$ of the (connected and smooth) components $\{u=0\}$ and $\{v=0\}$ of the Killing horizon $\{r=r_i\}$ in the spacetime extension of $M_i$, $M_{i+1}$ is called the \emph{bifurcation surface}.
	\begin{figure}[H]
		\centering
		\includegraphics[scale=0.7]{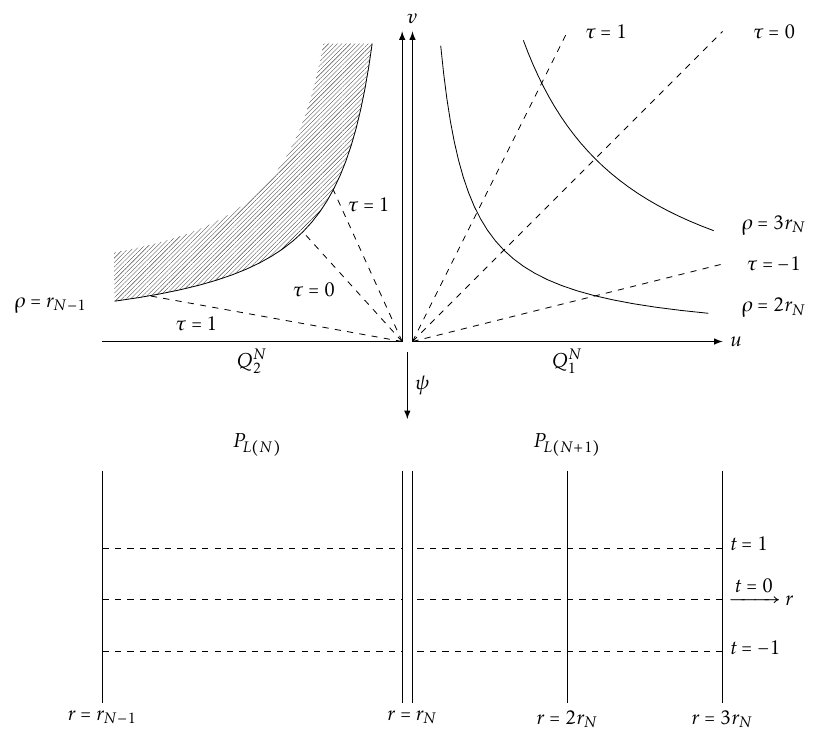}
		\caption{The isometry $\Psi$ mapping $Q_1^N$ into $P_{L(N+1)}$, and $Q_2^N$ into $P_{L(i)}$, respectively.}
	\end{figure}
	\begin{bem}\label{bem_isometry}
	Since $uv=(-u)(-v)$ the map $\Phi\colon\mathbb{P}_h\to\mathbb{P}_h,(u,v)\mapsto(-u,-v)$ is a quadrant interchanging isometry. Therefore, just as it is the case for the Schwarzschild manifold, $\mathbb{P}^i_h$ contains two copies of $P_{L(i+1)}$ and $P_{L(i)}$.
	
	Moreover, note that by the explicit definitions of $\rho$ and $\tau$, one can directly verify that
	\begin{align*}
		u^2&=\btr{f_i(\rho)}\exp\left(-\frac{\tau}{K_i}\right),\\
		v^2&=\btr{f_i(\rho)}\exp\left(+\frac{\tau}{K_i}\right),
	\end{align*}
	which uniquely determines $(u,v)$ on each quadrant $Q^i_1,\dotsc,Q^i_4$.
	\end{bem}
	\begin{proof}
		The fact that $f_i^{-1}$ and $\ln$ are bijective, and that the level sets of $u\cdot v$ and $\btr{\frac{v}{u}}$ intersect in unique points implies that $\psi$ is bijective.
		Furthermore
		\begin{align*}
		\det D\psi_{(u,v)}
		&=-2\frac{K_i}{f'},
		\end{align*}
		and since $h(r)=0$ if and only if $r=r_i$ resp. $uv=0$, we have $\det D\psi_{(u,v)}\not=0$ on $Q_1\cup Q_2$. 
		
		Therefore $\Psi$ is a diffeomorphism and in fact an isometry, since
		\begin{align*}
		\Psi^*\left(-h\d t^2+\frac{1}{h}\d r^2\right)
		=&-\left(h\circ\rho \right)\d \tau^2+\frac{1}{\left(h\circ\rho \right)}\d \rho^2 \\
		=&-K_i^2\cdot\left(h\circ\rho \right)\left(\frac{\d u}{u}-\frac{\d v}{v}\right)^2+\frac{K_i^2\left(h\circ\rho \right)^2}{\left(h\circ\rho \right)}\left(\frac{\d u}{u}+\frac{\d v}{v}\right)^2 \\ 
		=&\frac{2K_i^2\cdot\left(h\circ\rho \right)}{\left(f_i\circ\rho \right)}(\d u\otimes \d v+\d v\otimes \d u) \\
		=&(F_i\circ\rho)(\d u\otimes \d v+\d v\otimes\d u).
		\end{align*}
		
		Lastly $f_i'>0$, so $\rho(u_1,v_1)>\rho(u_2,v_2)$ if and only if $u_1v_1>u_2v_2$. Since $f_i(r_i)=0$, it holds that $\rho(u,v)>r_i$ for $uv>0$ and $\rho(u,v)<r_i$ for $uv<0$. Hence, $\Psi(Q^i_2)=P_{L(i)}$ and $\Psi(Q^i_1)=P_{L(i+1)}$. This concludes the proof.
	\end{proof}
	We will henceforth call the resulting spacetime extension a \emph{generalized Kruskal--Szekeres extension}, and have seen that a spacetime of class $\mathfrak{H}$ admits such an extension across a Killing horizon if and only if the Killing horizon is non-degenerate. We can further directly compute the surface gravity $\kappa_i$, $1\le i\le N$, in the \emph{double null coordinates} $u$, $v$.
	\begin{prop}\label{prop_surfgrav}
		Let $(M_i,g)$, $1\le i\le N$ be a spacetime of class $\mathfrak{H}$ admitting a generalized Kruskal--Szekeres extension across the Killing horizon $\{r=r_i\}$. Then the surface gravity $\kappa_i$ of $\{r=r_i\}$ is given by 
		\[
		\kappa_i=\frac{1}{2K_i}=\frac{h'(r_i)}{2}.
		\]
	\end{prop}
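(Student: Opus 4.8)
The plan is to carry out the defining relation $\nabla_X X = \kappa_i X$ directly in the double null coordinates $(u,v)$, where by Definition \ref{extdefi1} the metric has the simple off-diagonal form with $g_{uv}=g_{vu}=F_i\circ\rho$ and $g_{uu}=g_{vv}=0$. The first task is to express the Killing field $X=\partial_t$ in these coordinates. Since the isometry $\psi$ of Proposition \ref{propisometry} sends $(u,v)$ to $(\tau,\rho)$ with $t=\tau=K_i\ln\btr{\frac{v}{u}}$ and $r=\rho$, the field $X$ is the one on $\mathbb{P}^i_h$ corresponding to $\partial_\tau$ under $\psi$. Using the identities $\d\tau=K_i(\frac{\d v}{v}-\frac{\d u}{u})$ and $\d\rho=K_ih(\frac{\d u}{u}+\frac{\d v}{v})$ from Lemma \ref{lem1}, I would solve $\d\tau(X)=1$, $\d\rho(X)=0$ to obtain
\[
X=\frac{1}{2K_i}\left(v\,\partial_v-u\,\partial_u\right).
\]
(Equivalently, this follows by differentiating $u^2=\btr{f_i(\rho)}\exp(-\frac{\tau}{K_i})$, $v^2=\btr{f_i(\rho)}\exp(+\frac{\tau}{K_i})$ from Remark \ref{bem_isometry}.) This exhibits $X$ as a boost generator; it vanishes precisely on the bifurcation surface $\{u=v=0\}$ and is null and tangent to the horizon along each branch $\{u=0\}$ and $\{v=0\}$.

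Next I would compute the Levi-Civita connection of $\d s^2$, writing $F\definedas F_i\circ\rho$. Because the only nonvanishing metric components are $g_{uv}=g_{vu}=F$, the inverse is $g^{uv}=g^{vu}=\frac{1}{F}$, and a short calculation shows that the only nonzero Christoffel symbols are
\[
\Gamma^u_{uu}=\frac{\partial_u F}{F},\qquad \Gamma^v_{vv}=\frac{\partial_v F}{F}.
\]
The one point that needs care is that the coordinate derivatives $\partial_u\rho=\frac{v}{f_i'}$ and $\partial_v\rho=\frac{u}{f_i'}$ (read off from the proof of Lemma \ref{lem1}) look singular on the axes; however, combining these with $F=\frac{2K_i}{f_i'}$ yields the manifestly regular expressions $\frac{u\,\partial_u F}{F}=\frac{v\,\partial_v F}{F}=-\frac{f_i''\,uv}{(f_i')^2}$, which vanish on the horizon $\{uv=0\}$ thanks to the positivity and regularity of $f_i'$ at $r_i$ recorded in Remark \ref{bem_central}.

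Finally I would assemble $(\nabla_X X)^k=X^i\partial_i X^k+\Gamma^k_{ij}X^iX^j$ with $X^u=-\frac{u}{2K_i}$ and $X^v=\frac{v}{2K_i}$, obtaining
\[
(\nabla_X X)^u=\frac{u}{4K_i^2}\left(1-\frac{f_i''\,uv}{(f_i')^2}\right),\qquad (\nabla_X X)^v=\frac{v}{4K_i^2}\left(1-\frac{f_i''\,uv}{(f_i')^2}\right).
\]
Evaluating on the horizon branch $\{u=0\}$ — the shared boundary of $P_{L(i)}$ and $P_{L(i+1)}$ in the chart $\psi$ — the correction term drops out and $X=\frac{v}{2K_i}\partial_v$, so that $\nabla_X X=\frac{v}{4K_i^2}\partial_v=\frac{1}{2K_i}X$, giving $\kappa_i=\frac{1}{2K_i}$; Proposition \ref{prop_central} then supplies $K_i=\frac{1}{h'(r_i)}$ and hence $\kappa_i=\frac{h'(r_i)}{2}$. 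I expect the only genuine obstacle to be bookkeeping: pinning down $X$ with the correct orientation, and confirming that the apparently singular coordinate expressions are in fact regular at the horizon. A minor subtlety worth flagging is that the same computation on the opposite branch $\{v=0\}$ yields $\nabla_X X=-\frac{1}{2K_i}X$; this sign flip between the two null generators is exactly the ambiguity already present in Lemma \ref{lem_surfgrav}, and the stated value corresponds to the branch $\{u=0\}$.
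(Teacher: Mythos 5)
Your proposal is correct and follows essentially the same route as the paper: express $\partial_t=\frac{1}{2K_i}(v\partial_v-u\partial_u)$ in the double null coordinates, compute $\nabla_{\partial_t}\partial_t$ via the Christoffel symbols of $\d s^2$ (your $-\frac{f_i''uv}{(f_i')^2}$ is exactly the paper's $-K_ih\frac{f_i''}{f_i'}$ by the ODE \eqref{eq_ODE_central} and $f_i=uv$), and evaluate on the branch $\{u=0\}$ where the correction term vanishes, then invoke Proposition \ref{prop_central} for $K_i=\frac{1}{h'(r_i)}$. Your additional remarks on regularity of the apparently singular coordinate expressions and on the sign flip on the opposite branch $\{v=0\}$ are sound and consistent with the $\pm$ ambiguity in Lemma \ref{lem_surfgrav}.
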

	\begin{proof}
		We compute $\nabla_{\partial_t}\partial_t$ in the global null coordinates $u,v$ introduced in Definition \ref{extdefi1} and use the properties of the generalized Kruskal--Szekeres extension stated in Lemma \ref{lem1}. In this coordinates
		\[
		\partial_t=-h\cdot\grad \tau=\frac{1}{2K_i}\left(v\partial_v-u\partial_u\right).
		\]
		The Killing horizon corresponds to the null hypersurface $\{u=0,v>0\}$, therefore 
		\[
		\partial_t=\frac{v}{2K_i}\partial_v
		\]
		at the Killing horizon.
		A straightforward computation yields
		\[
		\nabla_{\partial_t}\partial_t=\frac{1}{4K_i^2}\left(1-K_ih\frac{f_i''}{f_i'}\right)(v\partial_v+u\partial_u),
		\]
		so at the horizon, where $u,h=0$, we get
		\[
		\nabla_{\partial_t}\partial_t=\frac{1}{2K_i}\partial_t.
		\]
		This concludes the proof, as we know that $K_i=\frac{1}{h'(r_i)}$ by Proposition \ref{prop_central}.
	\end{proof}
	We summarize the above into our first main result:
	\begin{thm}\label{thm_main1}
		Let $h\colon (0,\infty)\to\R$ be a (smooth) function with finitely many zeros\linebreak ${r_0:=0<r_1<\dotsc<r_N<r_{N+1}:=\infty}$, and let $(\mathcal{N},g_{\mathcal{N}})$ be an $(n-1)$-dimensional Riemannian manifold, $n\ge 3$. Then, for $1\le i\le N$, the spacetimes $(M_i,g)$, $(M_{i+1},g)$ of class $\mathfrak{H}$ can be joined across the Killing horizon $\{r=r_i\}$ by a generalized Kruskal--Szekeres extension if and only if the Killing horizon $\{r=r_i\}$ has non-vanishing surface gravity $\kappa_i=\frac{h'(r_i)}{2}\not=0$.
		The extension is fully determined by the unique, strictly increasing solution $f_i$ of \eqref{eq_ODE_central} with $K_i=\frac{1}{2\kappa_i}$ such that $f'(r_i)=1$, and $f_i$ and the metric coefficient $F_i$ satisfy 
		\begin{align*}
		\frac{f_i}{f_i'}&=K_ih,\\
		F_if_i'&=2K_i,\\
		F_if_i&=2K_i^2h,
		\end{align*}
		where $f_i=uv$ at a point $(u,v,p)\in \mathbb{P}^i_h\times\mathcal{N}$.\newline
		If $h\in C^k(r_{i-1},r_{i+1})$ and $(k+1)$-times differentiable locally around $r_i$ for some $k\ge1$, then ${f\in C^{k+1}(r_{i_1},r_{i+1})}$, so that the metric $g$ extends in $C^k$ across the Killing horizon $\{r=r_i\}$.
	\end{thm}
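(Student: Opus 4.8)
The plan is to assemble this summary statement from the results already established above; essentially no new computation is required, so the work lies in checking that the pieces fit together and, above all, in tracking regularity carefully.

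First I would settle the equivalence. By Definition \ref{extdefi1}, a generalized Kruskal--Szekeres extension joining $(M_i,g)$ and $(M_{i+1},g)$ exists precisely when the ODE \eqref{eq_ODE_central} admits a strictly increasing solution $f_i$ for some $K_i\neq 0$: the ``if'' direction is Proposition \ref{propisometry}, which produces the quadrant-preserving isometry $\psi$ and hence the extension $\mathbb{P}^i_h\times_{\rho^2}\mathcal{N}$, while the ``only if'' direction holds by definition, since the extension is built from such an $f_i$. By Proposition \ref{prop_central}, this ODE is solvable exactly when $h'(r_i)\neq 0$, with $K_i=1/h'(r_i)$. Combining with Lemma \ref{lem_surfgrav} and the remark following it, which give $\kappa_i=\pm h'(r_i)/2$, the condition $h'(r_i)\neq 0$ is equivalent to non-degeneracy $\kappa_i\neq 0$; and once the extension exists, Proposition \ref{prop_surfgrav} sharpens this to $\kappa_i=h'(r_i)/2=1/(2K_i)$, so that $K_i=1/(2\kappa_i)$. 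This proves the equivalence together with the stated value of $K_i$.

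Next I would record that $f_i$ is unique up to scaling, again by Proposition \ref{prop_central}, so that the normalization $f_i'(r_i)=1$ pins it down and thereby determines $(\mathbb{P}^i_h,\d s^2)$ and the entire extension. The three displayed identities are then nothing but the content of \eqref{lemeq1} in Lemma \ref{lem1}, and the relation $f_i=uv$ at a point $(u,v,p)$ is just the inverse form $f_i(\rho)=uv$ of the defining equation $\rho=f_i^{-1}(uv)$. I would simply quote these.

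The only part requiring genuine care is the regularity statement, and this is where I expect the main subtlety to lie. The apparent difficulty is that the defining data involve $h$ in a denominator at the horizon, where $h(r_i)=0$; the key input is Remark \ref{bem_central}, which uses the l'H\^opital-type \Cref{lemmaauxilliary1} to show that $\frac{1-K_ih'}{h}$ extends through $r_i$ in $C^k$ under the stated hypotheses, whence the explicit formula \eqref{eqexaxtformsolf} yields $f_i\in C^{k+1}$. Granting this, I would finish by propagating regularity through the construction: $f_i\in C^{k+1}$ gives $f_i'\in C^k$, hence $F_i=2K_i/f_i'\in C^k$ since $f_i'>0$, while strict monotonicity with $f_i'>0$ makes $f_i^{-1}$, and therefore $\rho=f_i^{-1}(uv)$, of class $C^{k+1}$; composing, the metric coefficient $F_i\circ\rho$ is $C^k$, so $\d s^2$, and hence $g$, extends in $C^k$ across $\{r=r_i\}$. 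The one point to watch is that passing from $f_i$ to $f_i'$ costs exactly one degree of regularity, so the extended metric lands in $C^k$ and, absent further assumptions on $h$, no better.
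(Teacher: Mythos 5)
Your proposal is correct and follows essentially the same route as the paper, which states Theorem \ref{thm_main1} as a summary of the preceding results: Proposition \ref{prop_central} and Proposition \ref{propisometry} for the equivalence, Lemma \ref{lem_surfgrav} and Proposition \ref{prop_surfgrav} for the surface gravity, Lemma \ref{lem1} for the displayed identities, and Remark \ref{bem_central} (via \Cref{lemmaauxilliary1}) for the $C^{k+1}$-regularity of $f_i$ and hence the $C^k$-extension of the metric. Your careful propagation of regularity through $F_i=2K_i/f_i'$ and $\rho=f_i^{-1}(uv)$ matches the intended argument exactly.
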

	Assuming now that any positive zero $r_i$ of the function $h:(0,\infty)\to\R$ is simple, i.e., $h'(r_i)\not=0$ for all $i=1,\dotsc,N$, we can join any two spacetimes of class $\mathfrak{H}$ corresponding to the planes $P_{L(i+1)}$, $P_{L(i)}$ with metric coefficient $h$ and shared boundary $\R\times\{r_i\}$ by constructing the generalized Kruskal--Szekeres extension corresponding to the generalized Kruskal--Szekeres plane $\mathbb{P}_h^i$, containing the quadrants $Q_1^i,\dotsc,Q_4^i$, with respect to the unique strictly increasing solution $f_i$ satisfying
	\[
	\frac{f_i}{f_i'}=K_ih
	\]
	with $f'(r_i)=1$ and $K_i\definedas\frac{1}{h'(r_i)}$. Going forward, we will omit the index $i$ for the sake of simplicity wherever confusion seems unlikely. Moreover, we will from now on join all $(M_i,g)$ into a (disconnected) spacetime $(M,g)$. Hence, there exists a spacetime extension containing all positive radii which is covered by a countable atlas which is regular across any non-degenerate Killing horizon.
	\begin{thm}\label{thm_main2}
		Let $k\ge 1$. Let $(M,g)$ be a spacetime of class $\mathfrak{H}$ with metric coefficient $h$ and fibre $(\mathcal{N},g_{\mathcal{N}})$, such that $h\in C^k(0,\infty)$ and $h$ is $(k+1)$-times differentiable locally around its positive, simple zeros $r_1,\dotsc,r_N$. Then $(M,g)$ admits a (connected) spacetime extension $(\widetilde{M}, \widetilde{g})$, such that $\widetilde{M}$ is covered by a countable $C^{k+1}$-atlas, where each chart is fully determined by a strictly increasing solution $f_i$ of \eqref{eq_ODE_central} on $(r_{i-1},r_{i+1})$, and the metric $\widetilde{g}$ in each chart is $C^k$ across their respective non-degenerate Killing horizon $\{r=r_i\}$.
	\end{thm}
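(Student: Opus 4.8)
The plan is to obtain $(\widetilde M,\widetilde g)$ by gluing countably many copies of the local building blocks $\mathbb{P}^i_h\times_{\rho^2}\mathcal{N}$, $i\in\{1,\dots,N\}$, furnished by \Cref{thm_main1}, along their overlaps. The structural observation that makes the gluing clean is that, by \Cref{propisometry} and \Cref{bem_isometry}, each chart $\mathbb{P}^i_h$ is a \emph{full} neighborhood of exactly one non-degenerate Killing horizon $\{r=r_i\}$: its quadrants $Q^i_1,Q^i_3$ each carry an isometric copy of $M_{i+1}$ (where $r>r_i$) and $Q^i_2,Q^i_4$ each carry an isometric copy of $M_i$ (where $r<r_i$), while the horizon itself---both null components $\{u=0\}$, $\{v=0\}$ and the bifurcation surface $\{u=v=0\}$---is contained in this single chart. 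Hence no gluing ever occurs across a horizon; charts are to be overlapped only on the open regions $M_i$, which avoid all horizons and where the metric is automatically smooth.

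First I would record regularity: by \Cref{bem_central} each $f_i\in C^{k+1}(r_{i-1},r_{i+1})$ with $f_i'>0$, hence $f_i^{-1}\in C^{k+1}$ on $\mathrm{Im}(f_i)$, and by \Cref{thm_main1} the metric of $\mathbb{P}^i_h$ is $C^k$ up to and across $\{r=r_i\}$; warping by $\rho^2>0$ over $(\mathcal{N},g_{\mathcal{N}})$ preserves both. Next I would fix the gluing combinatorics: a copy of $M_{i+1}$ sitting in $\mathbb{P}^i_h$ (say in $Q^i_1$) is identified with a copy of $M_{i+1}$ sitting in $\mathbb{P}^{i+1}_h$ (in $Q^{i+1}_2$) via the common $(\tau,\rho)$-coordinates, i.e.\ via the transition $\psi_{i+1}^{-1}\circ\psi_i$. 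Iterating this prescription at every region boundary---inward, outward, and through the quadrant-interchanging isometry $\Phi$ of \Cref{bem_isometry}---produces a countable family of charts, and I would define $\widetilde M$ as the quotient of their disjoint union by the equivalence relation generated by these identifications, with second countability immediate from the countability of the index set.

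The verifications are then largely routine. On each overlap the transition map is $\psi_{i+1}^{-1}\circ\psi_i$, which through the explicit relations $u^2=\btr{f_i(\rho)}\exp(-\tau/K_i)$, $v^2=\btr{f_i(\rho)}\exp(\tau/K_i)$ of \Cref{bem_isometry} is a composition of the $C^{k+1}$ maps $f_i,f_i^{-1},f_{i+1},f_{i+1}^{-1}$ with $\ln$ and $\exp$ on a set strictly between consecutive horizons; since it never meets a zero of $f_i$ or $f_{i+1}$ it is a $C^{k+1}$ diffeomorphism, so $\widetilde M$ is a $C^{k+1}$-manifold. The metric $\widetilde g$ is well defined because the two charts meeting on any overlap are, by construction, isometric to the same region $(M_{i+1},g)$; and its regularity is $C^k$ in every chart, the only nontrivial point---regularity across $\{r=r_i\}$---being covered by the single chart $\mathbb{P}^i_h$ and already settled in \Cref{thm_main1}. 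Connectedness follows since consecutive charts are linked through their overlap regions $M_{i+1}$, and $(\widetilde M,\widetilde g)$ is an extension of $(M,g)$ because each $(M_i,g)$, $1\le i\le N+1$, embeds isometrically as a quadrant of some chart.

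The main obstacle I expect is not analytic but organizational: making the gluing prescription precise enough to guarantee that $\widetilde M$ is Hausdorff. This is the usual pitfall for maximally extended spacetimes, where identifying too few copies can create branch points along the horizons at which sequences from distinct diamonds acquire indistinguishable limits. I would circumvent it by the arrangement already singled out above---each horizon lies in exactly one chart and all identifications take place on the open interior regions $M_i$, whose closures within the two adjacent charts agree---so that the Hausdorff property of each individual chart descends to the quotient.
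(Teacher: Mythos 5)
Your proposal is correct and takes essentially the same approach as the paper: the paper gives no standalone proof of this theorem, treating it as a summary of \Cref{thm_main1} (with the regularity statement from \Cref{bem_central}) combined with the iterative attachment of countably many copies of the planes $\mathbb{P}^i_h\times_{\rho^2}\mathcal{N}$ glued along the interior regions $M_i$, which is exactly your construction. Your extra verifications (the $C^{k+1}$ transition maps $\psi_{i+1}^{-1}\circ\psi_i$ and the Hausdorff property) are sound and more explicit than the paper's own discussion; the only imprecision is the phrase that the closures of an overlap in the two adjacent charts ``agree''---they do not, since each chart adds a \emph{different} horizon to the boundary, and it is precisely this disjointness that prevents non-Hausdorff limit points.
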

	\begin{bem}\label{bem_main2}
		By the nature of our approach, it is easy to see that the construction readily extends to general warped product metrics of the form
		\[
			-h(r)\d t^2+\frac{1}{h(r)}\d r^2+\omega(r)g_\mathcal{N},
		\]
		where $\omega$ is a positive function on $(0,\infty)$. For example, Gibbons--Maeda--Garfinkle--Horowitz--Strominger (GMGHS) dilation black hole model is of the above form with $\omega'\not=0$, see \cite{gibbonsmaeda, garfinklehorowitzstrominger}. For spacetimes of the above form with $\omega'\not=0$, one can perform a change of the radial coordinate such that the metric is of the form
		\[
			-q(s)\d t^2+\frac{1}{p(s)}\d s^2+s^2g_\mathcal{N},
		\]
		where $s$ coincides with the volume radius and $p=aq$ for some strictly positive function $a=a(s)$, which implies that the zeros of $p$ and $q$ coincide. Metrics of this form seem to play a role in the study of effective one-body mechanics, c.f. \cite{BD}, although in general not under the assumption that the zeros of $p$ and $q$ coincide. It is unclear to us whether one can construct a spacetime extension in a similar manner if the above condition $p=aq$ is violated.
	\end{bem}
	\begin{bem}\label{bem_main2_2}
		One might also think of considering even more general metrics of the form
		\[
		-h(r)\d t^2+\frac{1}{h(r)}\d r^2+\omega(t,r)g_\mathcal{N},
		\]
		where $\omega$ is a positive function on $\R\times(0,\infty)$, which satisfies certain conditions as $r\to r_i$ that ensure that $\omega$ glues smoothly across the Killing horizon in $(u,v)$-coordinates (or at least as regular as $h$). Of course, in this setting the zeros $r_i$ no longer correspond to Killing horizons in general. It is easy to see that the above construction still works provided that $\omega$ is independent of $t$ near the horizon, and has extremely high falloff rates as $t\to\pm\infty$.
	\end{bem}
	Recall from Remark \ref{bem_isometry} that any generalized Kruskal--Szekeres plane $\mathbb{P}_h^i$ corresponding to the respective solution $f_i$ on $(r_{i-1},r_{i+1})$ contains two copies of the planes $P_{L(i+1)}$, $P_{L(i)}$. Thus, unless $N=1$, $(\widetilde{M}, \widetilde{g})$ in fact contains infinitely many, countable copies of each Region.
	
	We can endow $(\widetilde{M},\widetilde{g})$ with a time-orientation in the following way: If $N=1$, $(\widetilde{M},\widetilde{g})$ is covered by a single chart and either $\partial_v-\partial_u$ or $\partial_v+\partial_u$ is a global timelike vector, depending on the sign of $K$, and the time-orientation of $\partial_t$ within a chosen copy of domain of outer communication extends to all of $(\widetilde{M},\widetilde{g})$. We will adopt the notation described below also for this case.
	
	Now assume that $N>1$. Let $1\le i\le N$ be such that $L(i+1)=1$. Pick any copy of the generalized Kruskal--Szekeres plane $\mathbb{P}_h^i$. By the choice of $i$, one sees that $K_i>0$ and that $\mathbb{P}_h^i$ contains two copies of the domain of outer communication Region~\rom{1}, where $\partial_t=\frac{1}{2K_i}(v\partial_v-u\partial_u)$ is timelike. We define $\partial_t$ to be future-pointing in the copy of Region~\rom{1} corresponding to the quadrant $Q^i_1=\{u,v>0\}$ and denote it henceforth as Region~\rom{1}+. As $K_i>0$, we note that 
	\[
		\widetilde{g}(\partial_v-\partial_u,\partial_v-\partial_u)=-\frac{4K_i}{f_i'}<0,
	\]
	and
	\[
		\widetilde{g}(\partial_t,\partial_v-\partial_u)=-\frac{1}{f_i'}(v+u),
	\]
	so $\partial_v-\partial_u$ is timelike, future-pointing everywhere on $\mathbb{P}_h^i$. Observe that $\partial_t$ is past-pointing on the copy of Region~\rom{1} corresponding to $\{u,v<0\}$, and we will denote it henceforth as Region~\rom{1}$-$. Since $K_i>0$, $\mathbb{P}_h^i$ further contains two copies of Region~\rom{2} corresponding to the quadrants $\{u>0,v<0\}$, $\{u<0,v>0\}$, where $\partial_r=\frac{1}{2K_ih}(v\partial_v+u\partial_u)$ is timelike. We observe that $\partial_r$ is past-pointing on $\{v>0,u<0\}$ and future-pointing on $\{v<0,u>0\}$, which we will denote as Region~\rom{2}$+$ and Region~\rom{2}$-$, respectively. 

	We can then extend this choice of time-orientation iteratively to all of $(\widetilde{M},\widetilde{g})$ in the following way: First note by way of construction that any chart corresponding to generalized Kruskal--Szekeres coordinates $(u,v)$ overlaps with at least one other such chart in a region where either $\partial_t$ or $\partial_r$ is timelike everywhere in this region. We may then assume that the time-orientation of $\partial_t$ or $\partial_r$ is already determined via the overlapping coordinate system. Without loss of generality, we may associate the coordinate chart with a copy of the generalized Kruskal--Szekeres plane $\mathbb{P}^i_h$ containing two copies of both $P_{L(i)}$ and $P_{L(i+1}$, for some $1\le i\le N$, associated to Regions with the corresponding Roman numerals, and we differentiate between two cases: If $L(i+1)$ is odd, we have $K_i>0$ and extend the time-orientation to $\mathbb{P}^i_h$ via the timelike vector field $\partial_v-\partial_u$, while if $L(i+1)$ is even, we have $K_i<0$ and extend the time-orientation to $\mathbb{P}^i_h$ via the timelike vector field $\partial_v+\partial_u$. We add a $+$ to a Roman numeral, if either $\partial_t$ is timelike and future-pointing, or else if $\partial_r$ is timelike and past-pointing. Otherwise, we add a $-$ to the Roman numeral. Note that we attach a ``new''  copy of the respective Kruskal--Szekeres plane at each step in our iterative process which has not yet been endowed with a time-orientation, as we want to consider $(\widetilde{M},\widetilde{g})$ to be ``maximal''. In this way, the time-orientation on $(\widetilde{M},\widetilde{g})$ will be well-defined by virtue of the construction.
	
	Lastly, we obtain a globally defined, future timelike vector field by a linear combination of all the locally defined vector fields with respect to an appropriate choice of partition of unity of $(0,\infty)$. Compare the figure below for a Penrose--Carter diagram of the generalized Kruskal--Szekeres spacetime if $N=2$. This distinction of the difference in time-orientation of different copies of the same region is in fact important for the discussion of symmetric photon surfaces in \Cref{sec_photonsurf}. 
\newpage

	\begin{figure}[H]
		\centering
		\includegraphics[scale=0.75]{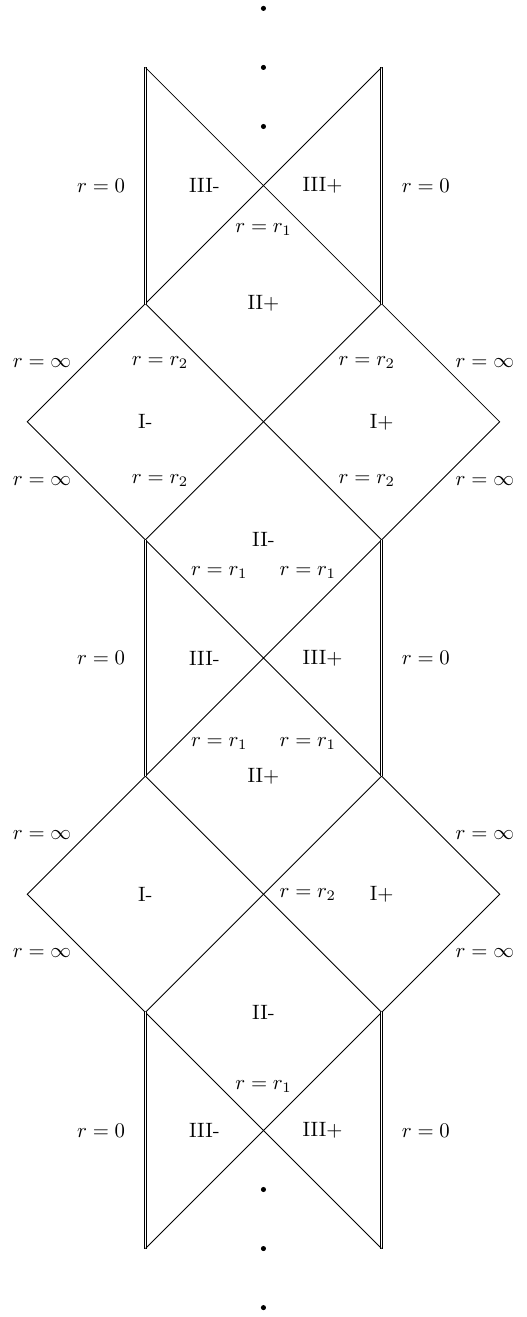}
		\caption{The generalized Kruskal--Szerekes extension of the subextremal Reissner-Nordstr\"om spacetime.}
	\end{figure}
\newpage
	Let us now touch on the topic of maximality and geodesic (in-) completeness.
	\begin{prop}\label{prop_kretschmann}
		Let $(M,g)$ be a spacetime of class $\mathfrak{H}$ with metric coefficient $h$ and fibre $(\mathcal{N},g_{\mathcal{N}})$, and let $(\widetilde{M},\widetilde{g})$ be the corresponding generalized Kruskal--Szekeres spacetime. The Kretschmann scalar $\widetilde{\mathcal{K}}\definedas\vert{\widetilde{\operatorname{Rm}}}\vert^2$ of $(\widetilde{M},\widetilde{g})$ is given as
		\begin{align}
			\begin{split}\label{eq_kretschmann}
			\widetilde{\mathcal{K}}=&\,(h'')^2+2(n-1)\frac{(h')^2}{r^2}+\frac{2(n-1)(n-2)}{r^4}\left(\frac{\operatorname{R}_\mathcal{N}}{(n-1)(n-2)}-h\right)^2\\
			&\,+\frac{1}{r^4}\left(\mathcal{K}_\mathcal{N}-\frac{2\operatorname{R}_\mathcal{N}^2}{(n-1)(n-2)}\right),
			\end{split}
		\end{align}
		where $\mathcal{K}_\mathcal{N}$ and $\operatorname{R}_\mathcal{N}$ denote the Kretschmann scalar and scalar curvature of the fibre $(\mathcal{N},g_\mathcal{N})$, respectively.
	\end{prop}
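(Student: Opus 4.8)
The plan is to exploit that the Kretschmann scalar $\widetilde{\mathcal{K}}=\vert\widetilde{\operatorname{Rm}}\vert^2$ is a pointwise curvature invariant, hence coordinate-independent, and therefore need not be computed in the $(u,v)$-chart at all. By Proposition \ref{propisometry}, each static region $(M_i,g)$ is isometric to an open dense subset of $(\widetilde{M},\widetilde{g})$ on which $h\neq 0$, while the right-hand side of \eqref{eq_kretschmann} is a function of $r$ alone (built from $h,h',h'',\operatorname{R}_\mathcal{N},\mathcal{K}_\mathcal{N}$, with no division by $h$) and so extends smoothly across every horizon $\{r=r_i\}$. Thus it suffices to compute $\widetilde{\mathcal{K}}$ in the static coordinates $(t,r)$ where $h\neq 0$ and conclude on all of $\widetilde{M}$ by continuity and density.

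On such a region I would treat $(M_i,g)$ as the warped product $B\times_{r}\mathcal{N}$, with two-dimensional Lorentzian base $(B,g_B)$, $g_B=-h\,\d t^2+h^{-1}\d r^2$, warping function $f=r$, and fibre $(\mathcal{N},g_\mathcal{N})$. Fix the $\widetilde{g}$-orthonormal frame $e_0=h^{-1/2}\partial_t$, $e_1=h^{1/2}\partial_r$ on the base and $E_\alpha=r^{-1}\widehat{E}_\alpha$ on the fibre, where $\{\widehat{E}_\alpha\}_{\alpha=2}^n$ is $g_\mathcal{N}$-orthonormal. O'Neill's warped-product curvature formulas \cite[Chapter 7]{oneill} then split $\widetilde{\operatorname{Rm}}$ according to the number of vertical frame indices into a base block, a mixed block, and a fibre block, all components with an odd number of vertical indices vanishing. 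Since every surviving component carries the timelike index $0$ an even number of times, raising indices produces no sign, so $\widetilde{\mathcal{K}}=\widetilde{R}_{ABCD}\widetilde{R}^{ABCD}$ is a genuine sum of squares and the three blocks contribute additively.

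For the base block I would compute the Gaussian curvature of $(B,g_B)$, finding $K_B=-\tfrac12 h''$; the two-dimensional identity for $\operatorname{Rm}$ in terms of $K_B$ then yields $\vert\operatorname{Rm}_B\vert^2=4K_B^2=(h'')^2$, the first term. For the mixed block I would compute the base Hessian of the warping function, obtaining $\operatorname{Hess}^B r(e_0,e_0)=-\tfrac12 h'$, $\operatorname{Hess}^B r(e_1,e_1)=\tfrac12 h'$, and vanishing off-diagonal part; O'Neill's formula $\widetilde{R}(X,V,Y,W)=-f^{-1}\operatorname{Hess}^B f(X,Y)\,\widetilde{g}(V,W)$ then gives the only nonzero components $\widetilde{R}_{0\alpha0\alpha}=\tfrac{h'}{2r}$ and $\widetilde{R}_{1\alpha1\alpha}=-\tfrac{h'}{2r}$, and summing their squares with multiplicity four per unordered pair over the $n-1$ fibre directions produces $2(n-1)(h')^2/r^2$, the second term.

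For the fibre block I would use $\vert\operatorname{grad}^B r\vert^2=h$ and the fact that scaling $g_\mathcal{N}$ by $r^2$ scales its $(4,0)$-curvature by $r^{-2}$, so that O'Neill's fibre formula gives
\[
\widetilde{R}(E_\alpha,E_\beta,E_\gamma,E_\delta)=\frac{1}{r^2}\widehat{R}_{\alpha\beta\gamma\delta}-\frac{h}{r^2}\bigl(\delta_{\alpha\gamma}\delta_{\beta\delta}-\delta_{\alpha\delta}\delta_{\beta\gamma}\bigr),
\]
with $\widehat{R}$ the Riemann tensor of $(\mathcal{N},g_\mathcal{N})$. Squaring and summing, and using the elementary identities $\sum(\delta_{\alpha\gamma}\delta_{\beta\delta}-\delta_{\alpha\delta}\delta_{\beta\gamma})^2=2(n-1)(n-2)$, $\sum\widehat{R}_{\alpha\beta\gamma\delta}(\delta_{\alpha\gamma}\delta_{\beta\delta}-\delta_{\alpha\delta}\delta_{\beta\gamma})=2\operatorname{R}_\mathcal{N}$, and $\sum\widehat{R}_{\alpha\beta\gamma\delta}^2=\mathcal{K}_\mathcal{N}$, gives the fibre contribution $r^{-4}\bigl[\mathcal{K}_\mathcal{N}-4h\operatorname{R}_\mathcal{N}+2(n-1)(n-2)h^2\bigr]$; completing the square in $h$ rewrites this as the last two terms of \eqref{eq_kretschmann}. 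The main obstacle is bookkeeping rather than conceptual: verifying that the Lorentzian sign from raising the timelike index cancels on each surviving component (so that $\widetilde{\mathcal{K}}$ really is the sum of the three blocks), and pinning down the combinatorial multiplicities and the two trace identities for the fibre block exactly.
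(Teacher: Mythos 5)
Your proposal is correct and follows essentially the same route as the paper: the paper's proof likewise reduces by continuity to the static regions in $(t,r)$-coordinates and then invokes "a standard computation," which is precisely the warped-product curvature computation you carry out explicitly (and your block decomposition, trace identities, and completion of the square all check out, e.g.\ against the Schwarzschild value $48m^2/r^6$). You have simply supplied the details the paper leaves implicit.
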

	\begin{proof}
		By continuity, we can perform all computations in the open regions away from the horizon in $(t,r)$-coordinates. The claim then follows by a standard computation.
	\end{proof}
	Since each of the four terms  in \eqref{eq_kretschmann} is manifestly non-negative, it is immediate to see that $\widetilde{\mathcal{K}}\to\infty$ as $r\to\infty$ unless possible when $n=3$ or when $(\mathcal{N},g_\mathcal{N})$ is a metric of constant sectional curvature. Thus, the Birmingham--Kottler metrics where $(\mathcal{N},g_\mathcal{N})$ is a metric of constant sectional curvature cf. \cite{birmi, kottler}, seem the most relevant examples to discuss (in-)extendability of the generalized Kruskal--Szekeres spacetimes, in particular in higher dimensions.
	
	A complete discussion about (in-)extendability and geodesic (in-)completeness of a generalized Kruskal--Szekeres spacetime would be beyond the scope of this paper. However, Proposition \ref{prop_kretschmann} gives some direct criteria for $C^2$-inextendability:
	\begin{kor}\label{kor_inextendability}
		Let $(M,g)$ be a spacetime of class $\mathfrak{H}$ with metric coefficient $h$ and fibre $(\mathcal{N},g_{\mathcal{N}})$, and let $(\widetilde{M},\widetilde{g})$ be the corresponding generalized Kruskal--Szekeres spacetime. Assume that $h''$ or $rh'$ or
		\[
			r^2\left(h-\frac{\operatorname{R}^\mathcal{N}}{(n-1)(n-2)}\right)
		\]
		are unbounded as $r\to0$.
		Then $(\widetilde{M},\widetilde{g})$ is $C^2$-inextendable. \newline
		If further 
		\[
			\int\limits_0^{r_1}\frac{1}{\sqrt{\btr{h(r)}}}\d r<\infty,
		\]
		then $(\widetilde{M},\widetilde{g})$ is geodesically incomplete.
	\end{kor}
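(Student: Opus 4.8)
The plan is to prove the two assertions of \Cref{kor_inextendability} separately, exploiting the explicit formula \eqref{eq_kretschmann} for the Kretschmann scalar from \Cref{prop_kretschmann}. For the $C^2$-inextendability claim, the standard strategy is to show that the Kretschmann scalar $\widetilde{\mathcal K}$ is a curvature invariant that blows up along some incomplete causal geodesic approaching $\{r\to 0\}$; a genuine $C^2$-extension would force this scalar to extend continuously, contradicting its unboundedness. First I would observe that, since each of the four summands in \eqref{eq_kretschmann} is manifestly non-negative (as already noted in the text just after \Cref{prop_kretschmann}), the unboundedness of any single one of $(h'')^2$, $\tfrac{(h')^2}{r^2}$, or $\tfrac{1}{r^4}\bigl(h-\tfrac{\scl^{\mathcal N}}{(n-1)(n-2)}\bigr)^2$ already forces $\widetilde{\mathcal K}\to\infty$. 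Each of the three hypotheses ($h''$ unbounded, $rh'$ unbounded, $r^2\bigl(h-\tfrac{\scl^{\mathcal N}}{(n-1)(n-2)}\bigr)$ unbounded) corresponds precisely to one of these three terms being unbounded as $r\to 0$: the term $\tfrac{(h')^2}{r^2}=\tfrac{(rh')^2}{r^4}$, and $\tfrac{1}{r^4}\bigl(h-\tfrac{\scl^{\mathcal N}}{(n-1)(n-2)}\bigr)^2=\tfrac{1}{r^8}\bigl(r^2(h-\cdots)\bigr)^2\cdot r^4$, so a short rewriting identifies the hypotheses with divergence of the respective invariant contribution.

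The key step is then to promote ``$\widetilde{\mathcal K}$ is unbounded on $\{r\to 0\}$'' to genuine $C^2$-inextendability. The cleanest route is to invoke the well-known principle that if a spacetime admits a $C^2$-extension across a boundary point, then all polynomial curvature invariants (in particular the Kretschmann scalar, which is built from the $C^0$-data of $\widetilde{\Rm}$) remain bounded in a neighbourhood of that point; this is standard and traces back to the usual definition of $C^2$-inextendability used in the literature on the strong cosmic censorship and inextendability (cf. Sbierski \cite{sbierski}, cited in the introduction). To apply it I would argue that $\{r\to 0\}$ is genuinely a boundary along which one could attempt an extension — i.e. that curves of decreasing $r$ reach it at finite ``extension parameter'' — but for the purely invariant-theoretic inextendability statement one only needs that $r\to 0$ is approached within any putative extension, so that a continuous extension of $\widetilde g$ in $C^2$ would yield a bounded $\widetilde{\mathcal K}$ there, contradicting the hypothesis.

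For the geodesic-incompleteness claim I would exhibit an explicit incomplete causal geodesic running into $\{r=0\}$. The natural candidate is a radial null geodesic in a region where $h<0$ (so inside the innermost horizon, where $\partial_r$ is timelike and $r$ is a time function). In $(u,v)$-coordinates the metric is $\widetilde g=(F\circ\rho)(\d u\otimes\d v+\d v\otimes\d u)+\rho^2 g_{\mathcal N}$; restricting to a null ray, say $\{v=\text{const}\}$, gives a pregeodesic, and I would reparametrise to affine parameter and estimate the affine length needed to reach $r=0$. Equivalently, and more transparently, one passes back to $(t,r)$-coordinates on the interior region $(r_0,r_1)=(0,r_1)$ where $h<0$: a radial timelike (or null) geodesic has its affine length controlled by an integral of the form $\int_0^{r_1}\tfrac{\d r}{\sqrt{|h(r)|}}$ (up to the conserved energy along the Killing direction $\partial_t$, which only improves convergence), so the hypothesis $\int_0^{r_1}\tfrac{1}{\sqrt{|h(r)|}}\,\d r<\infty$ gives finite affine length, hence an incomplete geodesic that cannot be continued because $r=0$ is a curvature singularity by the first part.

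The main obstacle I anticipate is the second part, making the affine-length estimate rigorous: one must reduce the geodesic equations in the warped-product interior to the effective radial problem, correctly handle the conserved quantities associated to $\partial_t$ and to the $\mathcal N$-directions, and verify that the relevant affine-parameter integral is dominated (up to constants depending on the conserved energy and angular momentum) by $\int_0^{r_1}|h|^{-1/2}\d r$. The inextendability part is comparatively routine once the curvature-invariant principle is invoked; the subtlety there is only to phrase the blow-up so that it is a coordinate-independent (invariant) statement, which $\widetilde{\mathcal K}=|\widetilde{\Rm}|^2$ automatically provides.
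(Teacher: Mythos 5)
Your proposal follows essentially the same route as the paper on both counts: the paper likewise deduces $C^2$-inextendability from the fact that the Kretschmann scalar of Proposition \ref{prop_kretschmann} is a coordinate-independent scalar that blows up as $r\to 0$ under the stated hypotheses (so it could not extend continuously to any $C^2$-extension), and it proves geodesic incompleteness by exhibiting a radial timelike pregeodesic hitting $r=0$ with total proper length $2\int_0^{r_1}|h(r)|^{-1/2}\,\mathrm{d}r<\infty$. The paper's explicit curve $s\mapsto(s,-s,p)$ in $\mathbb{P}^1_h\times\mathcal{N}$ is exactly the $E=0$ (i.e., $t=\operatorname{const.}$) case of your conserved-energy reduction in $(t,r)$-coordinates, routed through the bifurcation surface so that both ends run into $r=0$; your allowance for general energy (and for null geodesics, whose affine parameter is automatically finite since $\dot r=\pm E$ there) only strengthens the same argument.
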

	\begin{proof}
		As a coordinate independent scalar given by second derivatives of the metric, the Kretschmann scalar would be $C^0$ across $r=0$ for any $C^2$-extension of $(\widetilde{M},\widetilde{g})$. In particular, it would remain bounded as $r\to 0$. As $\widetilde{\mathcal{K}}\to\infty$ for $r\to 0$ under the assumptions of this corollary by Proposition \ref{prop_kretschmann}, no $C^2$-extension can exist.
		
		Now assume that
		\[
		\int\limits_0^{r_1}\frac{1}{\sqrt{\btr{h(r)}}}\d r<\infty.
		\]
		Fix any $p\in\mathcal{N}$ and consider the regularly parametrized curve $\gamma\colon(-u_0,u_0)\to \mathbb{P}_h^1\times \mathcal{N}\subseteq \widetilde{M}$ with $\gamma(s)=(u(s),v(s),p):=(s,-s,p)$, where $u_0>0$ is determined by $f_1^{-1}(-u^2)\to0$ as $u\to u_0$. Note that by this choice of $u_0$ $\gamma$ is inextendable in $\widetilde{M}$. A direct computation shows that $\widetilde{g}(\dot\gamma,\dot\gamma)=-2F_1\circ f_1^{-1}(-s^2)\not=0$ and
		\[
			\widetilde{\nabla}_{\dot\gamma}\dot\gamma=\dot\gamma
		\]
		for some function $b$ along $\gamma$. In particular, we can reparametrize $\gamma$ as the geodesic $\widetilde{\gamma}$ in $\widetilde{M}$ satisfying $\gamma(0)=(0,0,p)$, $\dot\gamma(0)=\partial_u-\partial_v$. To prove geodesic incompleteness, it remains to show that the length of $\gamma$ is finite. As both $\gamma\vert_{(0,u_0)}$ and $\gamma\vert_{(u_0,0)}$ can be identified with a radial curve in the set $\{t=0\}$ in $\mathbb{P}_h^1$, a direct computation gives that
		\[
			L[\gamma]=2\int\limits_0^{r_1}\frac{1}{\sqrt{\btr{h(r)}}}\d r<\infty
		\]
		by assumption.
	\end{proof}

\section{Construction of a global tortoise function}\label{sec_globaltortoisefunc}
	In their recent paper \cite{Schindler2018AlgorithmsFT}, Schindler and Aguirre numerically implemented an algorithm for the construction of Penrose--Carter diagrams for spherically symmetric spacetimes of a \emph{class SSS}, which corresponds to class $\mathfrak{H}$ in spherical symmetry, i.e., for $(\mathcal{N},g_\mathcal{N})$ being the round sphere. More specifically, they use an algorithm to construct global Penrose coordinates in which the metric extends continuously across the non-degenerate Killing horizon. To do so, they construct a \emph{global tortoise function} $R^*$, i.e., a primitive of $\frac{1}{h}$. Their tortoise function $R^*$ is well-defined on all of $(0,\infty)$ except at the finitely many roots $r_i$ of $h$, where $R^*$ satisfies Equation \eqref{eqbrillhaywardanalogue} (as stated in Appendix \ref{app_solving}) in a neighborhood of each $r_i$ simultaneously up to a possible additive constant $c_i$ at each root. They show that this yields a metric in ``Kruskal-like coordinates'' in the same manner as the construction by Brill and Hayward in \cite{brillhayward}.\newline
	More precisely, their algorithm relies on the fact that the tortoise function $R^*$, which is well-defined on $(0,\infty)\setminus\{r_i\}_{i=1}^N$, satisfies
	\[
	\lim_{r\to r_i}\left(R^*(r)-k_i\ln(\btr{r-r_i})\right)=c_i,
	\]
	for constants $c_i\in\R$, where $k_i\definedas h'(r_i)$. The metric then takes the form
	
	\[
	g=\frac{4}{k_i^2}\btr{h(r)}\exp\left(-k_iR^*(r)\right)\d U\d V+r^2\d\Omega,
	\]
	with $\btr{UV}=\exp(k_iR^*(r))$. In their approach, the global tortoise function $R^*$ is obtained as the limit of complex path integrals over $\frac{1}{h}$ along a contour line which avoid the roots $r_i$ by semicircles of arbitrarily small radius. For the path integrals to be well-defined along the small semicircles and to conclude the above properties of the tortoise function, Schindler--Aguirre impose real analyticity of $h$ at each horizon radius $r_i$. Away from the horizon radii $r_i$, they impose rather mild regularity conditions, assuming $h$ to be only differentiable. Our above analysis requires $h\in C^1$ and $h$ twice differentiable near the horizon radii $r_i$. This is clearly a stronger global assumption, but of course a significantly weaker one near the Killing horizons. Hence, with the additional assumption $h\in C^1$, we can generalize the construction by Schindler--Aguirre (to class $\mathfrak{H}$) by constructing a global tortoise function with the desired properties in our setting. This can be seen as follows:
	
	By Theorem \ref{thm_main2}, we get a strictly increasing solution $f_i$ of \eqref{eq_ODE_central} with constant $K_i=\frac{1}{h'(r_i)}$ on $(r_{i-1},{r+1})$ for each $i\in\{1,\dotsc,N\}$ which is at least $C^2$, depending on our assumptions on $h$. Then the function $R^*_i\definedas K_i\ln(\btr{f_i})$ is well-defined and a primitive of $\frac{1}{h}$ on $(r_{i-1},r_i)\cup(r_i,r_{i+1})$. However, by the fundamental theorem of calculus, for each $i\in\{1,\dotsc, N\}$, there exists a constant $C_i$ depending only on the solutions $f_1,\dotsc, f_N$, on $h$, and on a possible global constant of integration, such that the function
	\begin{align}
	\begin{split}\label{eq_globaltortoisefunc}
	&R^*\colon(0,\infty)\setminus\{r_1,\dotsc, r_N\}\to \R,\\
	&r\mapsto K_i\ln(\btr{f_i(r)})+C_i=K_i\ln(\btr{K_ih})+\int\limits^r_{r_i}\frac{1-K_ih'}{h}\d s+C_i,\text{ if }r\in (r_{i-1},r_{i+1})\setminus\{r_i\}
	\end{split}
	\end{align}
	is well-defined and at least $C^2$. We thus a posteriori recover a global tortoise function with the same properties as that of Schindler and Aguirre, since
	\[
	R^*(r)-K_i\ln(\btr{r-r_i})\to C_i\text{ as }r\to r_i.
	\] 
	Our construction is converse (up to a factor) to the approach of Brill and Hayward, since 
	\[
	\btr{UV}=\exp(h'(r_i)R^*)=\exp(C_ih'(r_i))\btr{f_i},
	\]
	and the metric coefficient satisfies
	\[
	g_{UV}=\frac{2}{h'(r_i)^2}\btr{h(r)}\exp(-h'(r_i)R^*(r))=\exp(-h'(r_i)C_i)F_i.
	\]
	The constant of integration $C_i$ corresponds to a rescaling of $f_i$ by a factor $\exp(C_ih'(r_i))$ within the $1$-parameter class of solutions. It is not surprising that this rescaling is in general necessary, as each tortoise function $R^*_i$ initially corresponds to the solution $f_i$ with ${f'_i(r_i)=1}$. Note however, that this matching up to a constant only works on the level of tortoise functions and not on the level of solutions $f_i$ of \eqref{eq_ODE_central}: Even rescaled and sign-switched solutions $f_{i-1}$ and $f_i$ will not coincide where they overlap as they necessarily solve different ODEs.
	
	By its definition, global tortoise function must be unique up to an additive constant, hence we recover the global tortoise function of Schindler and Aguirre. In view of numerical implementation, the effort of computing the global tortoise function via \eqref{eq_globaltortoisefunc} seems at most comparable to computing it via the complex contour integrals by Schindler and Aguirre. Our results assert that this algorithm converges as long as $h\in C^1$ and $h$ is twice differentiable near each $r_i$, for arbitrary fibre $(\mathcal{N},g_{\mathcal{N}})$.

\section{Photon surfaces in the generalized Kruskal--Szekeres spacetime}\label{sec_photonsurf}
	We consider photon surfaces in a generalized Kruskal--Szekeres spacetime $(\widetilde{M},\widetilde{g})$ with respect to the metric coefficient $h\colon (0,\infty)\to\R$ with finitely many, simple zeros $r_i$ ($1\le i\le N$) and fibre $(\mathcal{N},g_{\mathcal{N}})$. In this section, we will perform all local computations in a $(u,v)$ coordinate patch of $(\widetilde{M},\widetilde{g})$, on which we have a (unique) strictly increasing solution $f$ of \eqref{eq_ODE_central} that determines the radial coordinate $\rho(u,v)=f^{-1}(u\cdot v)$ and metric coefficient $F(r)=\frac{2K}{f'(r)}$. Recall that from the coordinate axis, the time coordinate is given by $\tau(u,v)=K\ln\btr{\frac{v}{u}}$. 
	
	A \emph{photon surface} $P^n$ is defined as a null totally geodesic timelike (connected) hypersurface. These are of interest in geometric optics and more generally for understanding trapping phenomena, see e.g. \cite{cedgal, CVE, Perlick}. Under the assumption that $(\mathcal{N},g_\mathcal{N})$ is a round sphere, ``spherically symmetric'' photon surfaces $P^n$ in a spacetime of class $\mathfrak{H}$ are characterized by their ``radial profile''  satisfying a certain ODE in the domain of outer communication of a non-degenerate black hole in $r$-$t$-coordinates (\cite[Theorem 3.5]{cedgal}). Existence and solutions to said ODE is extensively discussed in \cite{cedoliviasophia} by Cederbaum--Jahns--Vi\v{c}\'{a}nek-Mart\'{i}nez. Considering also translations in time (and time-reflections), solutions in the domain of outer communication of a non-degenerate black hole look as follows:
	
	\begin{figure}[h!]
		\centering
		\includegraphics[scale=0.65]{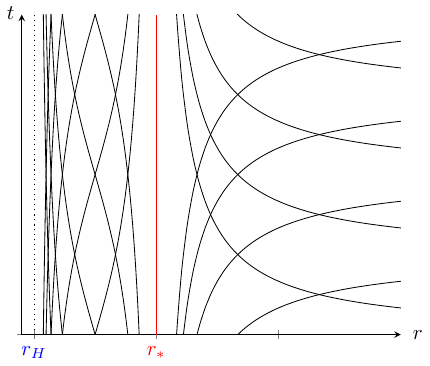}
		\includegraphics[scale=0.65]{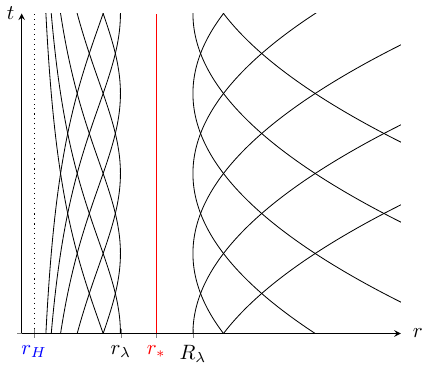}
		\includegraphics[scale=0.65]{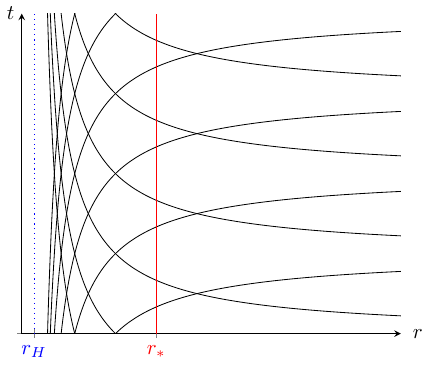}
		\caption{Examples of photon surfaces. The \textcolor{red}{red vertical line} represents a photon sphere. The \textcolor{blue}{dotted line} represents the Killing horizon. \cite[Figure 2]{cedoliviasophia}.}
		\label{fig:casesintro}
	\end{figure}
	
	The aim of this section is to extend the analysis of \cite{cedgal} and \cite{cedoliviasophia} across the horizon and to get a refined understanding of their asymptotic behavior in the asymptotically flat case ($h\to 1$ as $r\to\infty$). We also consider the non-spherical case $\mathcal{N}\not=\mathbb{S}^{n-1}$. In this context, we say a photon surface $P^n$ in $\widetilde{M}$ is \emph{symmetric}, if $P^n$ is generated by a future timelike \emph{profile curve} $\gamma\colon I\to \widetilde{M}$ with $\dot{\gamma}=\dot\gamma^u\partial_u+\dot\gamma^v\partial_v$, i.e.
	\[
		P^n=\{(u,v,p)\colon u=\gamma^u(s),\text{ }v=\gamma^v(s),\text{ }p\in\mathcal{N}\}
	\]
	in all $(u,v)$ coordinate patches, generalizing \cite[Definition 3.3]{cedgal}. In particular, any choice of a spacelike unit normal $\eta$ to $P^n$ is of the form $\eta=a\partial_u+b\partial_v$ (in local $(u,v)$ coordinates), and we note that the tangent space of $\mathcal{N}$ at a point $p\in \mathcal{N}$ is contained in the tangent space of $P^n$ for every point $(u,v,p)\in P^n$. Furthermore, we will identify $\gamma$ with a curve in $\mathbb{P}_h$ whenever convenient. We say $P^n$ is a \emph{symmetric photon sphere}, if $\gamma^u\cdot \gamma^v$ is constant on $I$, i.e., $\rho$ is constant along $P^n$.
	
	We immediately observe the following fact, naturally extending the spherically symmetric case. 
	\begin{prop}\label{prop_constantumbillicalfactor}
		Let $P^n$ be a symmetric photon surface in $(\widetilde{M},\widetilde{g})$ with umbilicity factor $\lambda$. Then $\lambda$ is constant along $P^n$.
	\end{prop}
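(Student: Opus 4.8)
The plan is to exploit the warped-product structure of \emph{both} the ambient spacetime $\widetilde M=\mathbb{P}_h\times_{\rho}\mathcal N$ and of the photon surface itself. Since $P^n$ is symmetric, it is the warped product $P^n=\gamma\times_{\rho}\mathcal N$ of the profile curve $\gamma$ with the fibre, and any spacelike unit normal $\eta=a\partial_u+b\partial_v$ is horizontal, i.e.\ tangent to $\mathbb{P}_h$. Because every quantity entering the umbilicity factor in the $\eta$-direction depends only on the base point $(u,v)$, the function $\lambda$ will turn out to be independent of $p\in\mathcal N$; hence it suffices to show that $\lambda$ does not change along $\gamma$.

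First I would compute $\lambda$ using the fibre directions. For $W$ tangent to $\mathcal N$, the warped-product covariant derivative gives $\nabla_W W=\nabla^{\mathcal N}_W W-\frac{\widetilde g(W,W)}{\rho}\,\mathrm{grad}\,\rho$, whose only contribution along the horizontal normal $\eta$ comes from the second term. Projecting onto $\eta$ and dividing by $\widetilde g(W,W)$ yields, by total umbilicity, $\lambda=-\frac{\eta(\rho)}{\rho}$ (up to the fixed sign convention for $\mathrm{II}$). This already exhibits $\lambda$ as a function of $(u,v)$ alone, that is, of the profile parameter.

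To prove constancy along $\gamma$, I would apply the Codazzi equation of $P^n\subset\widetilde M$ with $X=\dot\gamma$ and $Y=Z=W$. Writing $\mathrm{II}=\lambda\,\widetilde g|_{P}$ and using $\widetilde g(\dot\gamma,W)=0$, the Codazzi identity collapses to $\dot\gamma(\lambda)\,\widetilde g(W,W)=\langle R(\dot\gamma,W)W,\eta\rangle$, where $R$ is the curvature of $\widetilde g$. By O'Neill's warped-product curvature formulas \cite{oneill}, the $\eta$-component of $R(\dot\gamma,W)W$ is proportional to the \emph{base} Hessian $\mathrm{Hess}_{\mathbb{P}_h}\rho(\dot\gamma,\eta)$, with factor $\widetilde g(W,W)/\rho$. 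The crux is then to compute $\mathrm{Hess}_{\mathbb{P}_h}\rho$: using that the only nonvanishing Christoffel symbols of $\d s^2=F(\d u\otimes\d v+\d v\otimes\d u)$ are $\Gamma^u_{uu}=\partial_u\ln F$ and $\Gamma^v_{vv}=\partial_v\ln F$, together with the relation $1-\frac{ff''}{(f')^2}=Kh'$ obtained by differentiating $\frac{f}{f'}=Kh$, one finds the clean identity $\mathrm{Hess}_{\mathbb{P}_h}\rho=\frac{h'(\rho)}{2}\,\d s^2$. Being pure trace, it annihilates the $\widetilde g$-orthogonal pair $(\dot\gamma,\eta)$, so $\mathrm{Hess}_{\mathbb{P}_h}\rho(\dot\gamma,\eta)=\frac{h'(\rho)}{2}\,\widetilde g(\dot\gamma,\eta)=0$. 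Hence $\dot\gamma(\lambda)=0$, and since $\lambda$ is independent of $\mathcal N$, it is constant along $P^n$.

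I expect the main obstacle to be bookkeeping rather than conceptual: keeping the sign conventions of the second fundamental form, of the Codazzi equation, and of O'Neill's warped-product curvature formulas mutually consistent, and carrying out the Hessian computation so that the ODE $\frac{f}{f'}=Kh$ reduces it to the pure-trace form $\frac{h'}{2}\,\d s^2$. Once that identity is established, the orthogonality $\widetilde g(\dot\gamma,\eta)=0$ makes the conclusion immediate.
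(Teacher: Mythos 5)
Your proof is correct, and while it shares the paper's backbone---the Codazzi equation for the totally umbilic hypersurface $P^n$ combined with the warped-product structure of $(\widetilde{M},\widetilde{g})$---the implementation is genuinely different. The paper contracts Codazzi to obtain $(n-1)\nabla_X\lambda=\Ric(\eta,X)$ for $X$ tangent to $P^n$, and then quotes its appendix computation (Proposition \ref{prop_curvatureextension}): the only nonvanishing Ricci components are $\Ric_{uv}$ and $\Ric_{IJ}$, so $\Ric(\eta,e_I)=0$ outright, and $\Ric(\eta,e_0)$ vanishes because the $\Ric_{uv}$-contributions cancel by antisymmetry of the coefficients. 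You instead keep Codazzi uncontracted, evaluate it on the triple $(\dot\gamma,W,W)$ with $W$ vertical, use O'Neill's mixed curvature formula to reduce the normal component of $R(\dot\gamma,W)W$ to $\frac{\widetilde{g}(W,W)}{\rho}\,\mathrm{Hess}_{\mathbb{P}_h}\rho(\dot\gamma,\eta)$, and then establish the identity $\mathrm{Hess}_{\mathbb{P}_h}\rho=\frac{h'(\rho)}{2}\,\d s^2$; your Christoffel symbols and the relation $1-\frac{ff''}{(f')^2}=Kh'$ check out against Proposition \ref{prop_curvatureextension} and \eqref{eq_ODE_central}, so this identity is correct. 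Your fibre-direction step, $\lambda=\pm\eta(\rho)/\rho$, is likewise sound and is the paper's Equation \eqref{eq_lambda} in disguise, since $\eta(\rho)=(u\dot{v}-v\dot{u})/f'(\rho)$. The two curvature inputs are in fact equivalent: because the base is two-dimensional its Ricci tensor is automatically pure trace, so by the warped-product Ricci formula the paper's statement $\Ric_{uu}=\Ric_{vv}=0$ is precisely the statement that $\mathrm{Hess}_{\mathbb{P}_h}\rho$ is pure trace. As for what each route buys: the paper's contraction is shorter given that the Ricci tensor is already tabulated in its appendix; your version isolates the geometric mechanism---$\mathrm{grad}\,\rho$ is a closed conformal (concircular) field on the base, so its Hessian annihilates the orthogonal pair $(\dot\gamma,\eta)$---and it is immune to the sign-convention bookkeeping you flag as the main obstacle, since the conclusion only needs the right-hand side to vanish identically, which no choice of conventions can spoil.
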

	\begin{proof}
		Let $e_1,\dotsc, e_{n-1}$ denote a local ON frame of $T\mathcal{N}$ along $P^n$. We assume without loss of generality that $\gamma$ is parametrized by proper time and denote $e_0\definedas \dot{\gamma}$. Since $e_0=\alpha\partial_u+\beta\partial_v$ for $\alpha:=\dot\gamma^u$, $\beta:=\dot\gamma^v$, $\eta$ must necessarily satisfy
		\[
		\eta=\pm\left(\beta\partial_v-\alpha\partial_u\right)
		\]
		with uniform sign along on $P^n$. Then using the explicit form of the Ricci curvature tensor as in Proposition \ref{prop_curvatureextension} and the umbilicity of $P^n$ with umbilicity factor $\lambda$, the Codazzi equations imply
		\begin{align*}
			(n-1)\nabla_{e_I}\lambda&=\Ric(\eta,e_I)=0,\\
			(n-1)\nabla_{e_0}\lambda&=\Ric(\eta,e_0)=(\alpha\beta-\beta\alpha)\Ric_{uv}=0.
		\end{align*}
		Therefore $\lambda$ is constant along $P^n$.
	\end{proof}
	In analogy with Cederbaum--Galloway \cite{cedgal}, we will show that symmetric photon surfaces are fully characterized by a system ODEs for the profile curve $\gamma$, and we will show that this system of ODEs coincides with the that in \cite{cedgal} on the original manifold corresponding to Region~\rom{1}+.
	\begin{prop}\label{prop_charakterizationsymmetricphotonsurfaces}
		Let $P^n$ be a symmetric photon surface with future directed profile curve $\gamma\colon I\to \widetilde{M}$, $\gamma(s)=(u(s),v(s),p)$, parametrized by proper time and with umbilicity factor $\lambda$ with respect to the choice of unit normal $\eta=\dot{v}\partial_{v}-\dot{u}\partial_{u}$. Then the following system of ODEs is satisfied along $\gamma$:
		\begin{align}
			\begin{split}\label{symmetricODEs}
			\lambda&=\frac{1}{\rho f'(\rho)}\left(\dot{v}u-\dot{u}v\right)\\
			\left(\dot{\rho}\right)^2&=\rho^2\lambda^2-h(\rho).
			\end{split}
		\end{align}
		Conversely, let $\gamma$ be a future directed timelike curve $\gamma$ with $\dot{\rho}\not=0$ everywhere along $\gamma$ such that $\dot\gamma$ is orthogonal to $\mathcal{N}$ along $\gamma$. If $\gamma$ satisfies the first order system \eqref{symmetricODEs} for some constant $\lambda$ then $\gamma$ is the profile curve of a symmetric photon surface with umbilicity factor $\lambda$.\newline
		Additionally, a symmetric photon surface $P^n$ is a \emph{photon sphere}, with $\rho=\rho_*$ along $P^n$, if and only if \eqref{symmetricODEs} is satisfied for a profile curve $\gamma$ as above with $\rho\vert_{\gamma}=\rho_*$ and $\rho_*$ is a critical point of $\frac{h}{r^2}$.
	\end{prop}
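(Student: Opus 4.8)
The plan is to realize $P^n=\gamma\times\mathcal{N}$ as a hypersurface in the warped product $\mathbb{P}_h\times_\rho\mathcal{N}$ and to compute its second fundamental form blockwise in the frame $\{e_0=\dot\gamma,e_1,\dots,e_{n-1}\}$, where $e_1,\dots,e_{n-1}$ is a local ON frame of $T\mathcal{N}$. Since $\gamma$ lies in the base $\mathbb{P}_h$ and $\dot\gamma\perp\mathcal{N}$, both $e_0$ and the unit normal $\eta=\dot v\partial_v-\dot u\partial_u$ are horizontal, while the $e_I$ are vertical. Parametrizing $\gamma$ by proper time gives $\widetilde g(\dot\gamma,\dot\gamma)=2(F\circ\rho)\dot u\dot v=-1$, hence $\dot u\dot v=-\tfrac{f'}{4K}$ by Lemma \ref{lem1}, and one checks directly that $\eta$ is a spacelike unit normal orthogonal to $\dot\gamma$. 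I read off the umbilicity factor via the shape operator, so that total umbilicity means $\widetilde\nabla_X\eta=\lambda X$ for all $X$ tangent to $P^n$, i.e. $\two(X,Y)\definedas\widetilde g(\widetilde\nabla_X\eta,Y)=\lambda\,\widetilde g(X,Y)$.

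For the vertical and mixed blocks I would invoke O'Neill's warped product formulas \cite{oneill}. The mixed block vanishes automatically, since $\widetilde\nabla_{e_0}e_I$ is vertical and hence orthogonal to $\eta$, giving $\two(e_0,e_I)=0$. For the vertical block, using $\widetilde\nabla_{e_I}e_J=\nabla^{\mathcal{N}}_{e_I}e_J-\frac{\widetilde g(e_I,e_J)}{\rho}\,\grad^{\mathbb{P}_h}\rho$ and $\two(e_I,e_J)=-\widetilde g(\eta,\widetilde\nabla_{e_I}e_J)$ one obtains
\[
\two(e_I,e_J)=\frac{\eta(\rho)}{\rho}\,\widetilde g(e_I,e_J),\qquad \eta(\rho)=\frac{u\dot v-v\dot u}{f'(\rho)},
\]
using $\d\rho=\tfrac{1}{f'}(v\,\d u+u\,\d v)$. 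Thus $P^n$ is automatically umbilic in the vertical directions with factor $\mu\definedas\eta(\rho)/\rho$, which is exactly the right-hand side of the first equation in \eqref{symmetricODEs}. In the forward direction $\lambda=\mu$ is constant by Proposition \ref{prop_constantumbillicalfactor}, giving the first ODE; in the converse it is supplied as a hypothesis.

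The crux is the horizontal block and the second ODE. Since $\nabla^{\mathbb{P}_h}_{\dot\gamma}\dot\gamma\perp\dot\gamma$ by proper-time parametrization and $\{\dot\gamma,\eta\}$ spans the base, I may write $\nabla^{\mathbb{P}_h}_{\dot\gamma}\dot\gamma=\kappa\eta$; total umbilicity then amounts to the single scalar condition $\kappa=\mu$ (equality of the horizontal and vertical factors). Two computations drive the argument. First, a purely algebraic manipulation of $\dot\rho=\frac{u\dot v+v\dot u}{f'}$ and $\mu=\frac{u\dot v-v\dot u}{\rho f'}$ with the proper-time relation $\dot u\dot v=-\frac{f'}{4K}$, the identity $uv=f(\rho)$, and $\frac{f}{f'}=Kh$ (Lemma \ref{lem1}) yields the first integral $\dot\rho^2=\rho^2\mu^2-h(\rho)$, which is precisely the second equation of \eqref{symmetricODEs} (with $\mu$ in place of the constant $\lambda$); this already settles the second ODE in the forward direction. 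Second, I would establish the key identity $\operatorname{Hess}^{\mathbb{P}_h}\rho(\dot\gamma,\dot\gamma)=-\tfrac12 h'(\rho)$ by a direct computation in $(u,v)$-coordinates, using that the only nonvanishing Christoffel symbols of $(F\circ\rho)(\d u\otimes\d v+\d v\otimes\d u)$ are $\Gamma^u_{uu}=\partial_uF/F$ and $\Gamma^v_{vv}=\partial_vF/F$, the relation $f''=\frac{f'(1-Kh')}{Kh}$ from differentiating $f=Khf'$, and again the proper-time relation. Feeding these into $\ddot\rho=(\nabla^{\mathbb{P}_h}_{\dot\gamma}\dot\gamma)(\rho)+\operatorname{Hess}^{\mathbb{P}_h}\rho(\dot\gamma,\dot\gamma)=\kappa\rho\mu-\tfrac12 h'(\rho)$ and comparing with the derivative of the first integral shows $\kappa=\mu+\frac{\rho\dot\mu}{\dot\rho}$, so $\kappa=\mu$ if and only if $\dot\mu=0$. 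This closes the converse: the hypothesis that $\lambda=\mu$ be constant forces $\kappa=\lambda$, i.e. horizontal umbilicity with factor $\lambda$. I expect the Hessian identity and its clean cancellation to $-\tfrac12 h'$ to be the main obstacle; working in $(u,v)$ rather than $(\tau,\rho)$ is essential so that everything remains valid across the horizon $\{h=0\}$.

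For the photon sphere characterization I specialize to $\rho\equiv\rho_*$, i.e. $\dot\rho\equiv0$. The first integral then forces $\rho_*^2\lambda^2=h(\rho_*)$, while $\ddot\rho=0$ together with the radial equation $\ddot\rho=\kappa\rho\mu-\tfrac12 h'=\rho_*\lambda^2-\tfrac12 h'(\rho_*)$ (using $\kappa=\mu=\lambda$) yields $\rho_*\lambda^2=\tfrac12 h'(\rho_*)$. Eliminating $\lambda$ between these relations gives $2h(\rho_*)=\rho_*h'(\rho_*)$, equivalently $\frac{d}{dr}\!\big(h/r^2\big)\big|_{\rho_*}=0$; conversely this critical-point condition makes $\rho\equiv\rho_*$ a genuine equilibrium solution of the system, completing the equivalence.
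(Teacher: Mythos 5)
Your block decomposition is a genuinely different route from the paper's proof: the paper computes $\nabla_{e_0}\eta$ explicitly, shows that umbilicity is equivalent to the scalar equation \eqref{eq_lambda} together with the second-order system \eqref{eq_secondorder}, and then manipulates that system directly, whereas you reduce everything to the single horizontal condition $\kappa=\mu$ via the warped-product formulas and the identity $\operatorname{Hess}^{\mathbb{P}_h}\rho(\dot\gamma,\dot\gamma)=-\tfrac12 h'(\rho)$. The forward direction of your argument is correct and arguably cleaner than the paper's: the vertical block computation, the algebraic first integral, and the Hessian identity all check out (indeed $\operatorname{Hess}\rho_{uu}=\operatorname{Hess}\rho_{vv}=0$ and $\operatorname{Hess}\rho_{uv}=Kh'/f'$, so the proper-time relation $\dot u\dot v=-f'/(4K)$ gives exactly $-h'/2$).

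However, your converse has a genuine gap at $\lambda=0$. Comparing $\ddot\rho=\kappa\rho\mu-\tfrac12 h'$ with the differentiated first integral actually yields $\rho\mu\left(\kappa-\mu-\tfrac{\rho\dot\mu}{\dot\rho}\right)=0$; your formula $\kappa=\mu+\tfrac{\rho\dot\mu}{\dot\rho}$ comes from dividing by $\mu$. When the hypothesis is $\mu\equiv\lambda=0$, the identity reads $0=0$ and gives no control on $\kappa$, so horizontal umbilicity is not established. This case is not vacuous: $\lambda=0$ together with $\dot\rho\neq0$ occurs precisely in $h<0$-regions, and these surfaces (the $\{t=\operatorname{const.}\}$-slices) are exactly the ones the paper later needs, as they cross the horizon through the bifurcation surface (cf.\ Theorem \ref{prop_crossing}). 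The paper's proof never divides by $\lambda$: it extracts two independent linear combinations of the components of \eqref{eq_secondorder} — one from differentiating the radial equation, one from $\dot\lambda=0$ via \eqref{eq_derivative_lambda} — and concludes both components vanish, uniformly in $\lambda$. To patch your argument you would need a separate treatment of $\mu\equiv0$, e.g.\ noting that $u\dot v-v\dot u\equiv0$ forces $\gamma$ to lie on a radial line $\{u=\alpha v\}$, which is the fixed-point set of the isometry $(u,v)\mapsto(\alpha v,\alpha^{-1}u)$ of $\mathbb{P}_h$ and hence a geodesic of the base, giving $\kappa=0$ directly.

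Two smaller gaps. First, your converse (and your photon-sphere discussion) uses the proper-time relation throughout, but proper-time parametrization is not a hypothesis of the converse; it must be derived from the system, as the paper does: the system implies $hf'\left(2F\dot u\dot v+1\right)=0$, and since $\dot\rho\neq0$ forces $h\neq0$ away from finitely many parameter values, continuity gives $2F\dot u\dot v=-1$ everywhere. Second, the photon-sphere converse is asserted rather than proven: saying the critical-point condition makes $\rho\equiv\rho_*$ an ``equilibrium solution'' is not the claim — one must show the constant-radius surface is umbilic (or, as the paper does, reduce to \cite[Theorem 3.5]{cedgal} after establishing $h(\rho_*)>0$). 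Your own identities do this when $\lambda\neq0$, since $0=\ddot\rho=\kappa\rho_*\lambda-\tfrac12h'(\rho_*)$ and the critical-point condition give $\kappa\lambda=\lambda^2$; but again the degenerate possibility $\lambda=0$, equivalently $h(\rho_*)=0$, must first be excluded (the paper rules it out by showing it forces $u=v=0$ along $\gamma$, contradicting timelikeness) before you may divide by $\lambda$.
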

		\begin{bem}
		As in the spherically symmetric case (\cite[Remark 3.15]{cedgal}), $\lambda$ will in general not be constant along symmetric photon surfaces in generalized Kruskal--Szekeres extensions of spacetimes with more general metrics as considered in Remark \ref{bem_main2}. Nonetheless, Cederbaum--Senthil Velu \cite{cederbaumsenthilvelu} show that there exists a scalar function depending on the umbilicity factor $\lambda$,  as well as on the metric coefficient $\omega$ and its derivative, such that this function is constant along any symmetric photon surface. This scalar reduces to $\lambda$ if $\omega=r^2$. Cederbaum and Senthil Velu show that by replacing $\lambda$ with this more general constant function, one recovers the ODE system as in \cite{cedgal}. Furthermore the analysis in \cite{cedoliviasophia} directly extends to such spacetimes via the aforementioned generalization. In particular, the extension of the analysis in \cite{cedoliviasophia} across the Killing horizons presented in this section should easily generalize to the setting of Remark \ref{bem_main2}. 
	\end{bem}
	\begin{proof}
		Let $\gamma(s)=(u(s),v(s),p)$, $s\in I$, $p\in\mathcal{N}$, denote the future directed timelike profile curve of a symmetric photon surface $P^{n}$ in a $(u,v)$-coordinate patch of a generalized Kruskal--Szekeres spacetime with ${\dot{\gamma}=\dot{u}\partial_{u}+\dot{v}\partial_{v}}$. Assume that $\gamma$ is parametrized by proper time, i.e.,
		\begin{align}\label{eq_arclength}
		2(F\circ \rho)(uv)\,\dot{u}\dot{v}&=-1.
		\end{align}
		This implies that $\dot u\not=0\not=\dot v$ everywhere along $\gamma$. 
		We extend $e_{0}\definedas\dot{\gamma}$ to a local orthonormal tangent frame  $\lbrace{e_{0},e_{J}\rbrace}_{J=1}^{n-1}$ for $P^{n}$, where $\lbrace{e_{J}\rbrace}_{J=1}^{n-1}$ is a (local) ON-frame of $T\mathcal{N}$ along $P^n$ as before. By assumption, we consider the unit normal $\eta$ to $P^{n}$ given by
		\begin{align}\label{eq_eta}
		\eta=&\dot{v}\partial_{v}\vert_{\gamma}-\dot{u}\partial_{u}\vert_{\gamma}
		\end{align}
		along $\gamma$, which one can directly verify to be indeed orthogonal to $P^n$. A direct computation using Proposition \ref{prop_curvatureextension} shows that 
		\begin{align}
		\nabla_{e_{J}}\eta&=\frac{1}{\rho f'(\rho)}\left(u\dot{v}-\dot{u}v\right)e_{J}
		\end{align}
		for all $J=1,\dots,n-1$ so that the second fundamental form $\mathfrak{h}$ of $P^{n}$ in $(\widetilde{M},\widetilde{g})$ reduces to
		\begin{align}
		\mathfrak{h}(e_{I},e_{J})&=\frac{1}{\rho f'(\rho)}\left(u\dot{v}-\dot{u}v\right)\delta_{IJ}.
		\end{align}
		Hence, the umbilicity factor $\lambda$ satisfies
		\begin{align}\label{eq_lambda}
		\lambda&=\frac{1}{\rho f'(\rho)}\left(u\dot{v}-\dot{u}v\right).
		\end{align}
		By a straightforward computation using Proposition \ref{prop_curvatureextension}, we find that
		\begin{align}
		\nabla_{e_{0}}\eta&=\left(\ddot{v}-\dot{v}^{2}u\frac{f''(\rho)}{(f'(\rho))^2}\right)\partial_{v}-\left(\ddot{u}-\dot{u}^{2}v\frac{f''(\rho)}{(f'(\rho))^2}\right)\partial_{u}.
		\end{align}
		On the other hand, from the umbilicity of $P^n$, we know that
		\begin{align}
		\nabla_{e_{0}}\eta&=\lambda e_{0}
		\end{align}
		and thus
		\begin{align}
		\begin{split}\label{eq_secondorder}
		\lambda\dot{u}&=-\left(\ddot{u}-\dot{u}^{2}v\frac{f''(\rho)}{(f'(\rho))^2}\right),\\
		\lambda\dot{v}&=\phantom{-(}\;\ddot{v}-\dot{v}^{2}u\frac{f''(\rho)}{(f'(\rho))^2}.
		\end{split}
		\end{align}
		As $\mathfrak{h}(e_{0},e_{J})=0$, we conclude that umbilicity of $P^{n}$ with umbilicity factor $\lambda$ is indeed equivalent to~\eqref{eq_lambda}, \eqref{eq_secondorder}. Taking a derivative of \eqref{eq_lambda}, we see that
		\begin{align}\label{eq_derivative_lambda}
			\dot{\lambda}=\frac{1}{\rho f'(\rho)}\left(u\left(\ddot{v}-\lambda\dot{v}-\dot{v}^2u\frac{f''(\rho)}{(f'(\rho))^2}\right)-v\left(\ddot{u}+\lambda\dot{u}-\dot{u}^{2}v\frac{f''(\rho)}{(f'(\rho))^2}\right)\right).
		\end{align}
	We can therefore again verify that $\lambda$ is constant using the second order system \eqref{eq_secondorder}. Moreover, Equation \eqref{eq_lambda} and the parametrization by proper time \eqref{eq_arclength} imply that
	\begin{align}\label{eq_rhodot}
		(\dot{\rho})^2&=\frac{1}{(f'(\rho))^2}\left(\dot{v}u+\dot{u}v\right)^2
		=\rho^2\lambda^2-\frac{2uv}{(f'(\rho))^2F(\rho)}
		=\rho^2\lambda^2-h(\rho),
	\end{align}
	concluding the proof of the first claim.
	
	Let us now assume that $\gamma$ is a future directed timelike with $\dot{\rho}\not=0$ everywhere and everywhere orthogonal to $\mathcal{N}$. Assume further that $\gamma$ satisfies the first order system \eqref{symmetricODEs} for some constant $\lambda$. Using that $\dot{\rho}f'(\rho)=\dot{u}v+\dot{v}u$, the system \eqref{symmetricODEs} immediately implies that
	\[
		(2F(\rho)\dot{v}\dot{u}+1)h(\rho)=0
	\]
	along $\gamma$.
	Since $\dot{\rho}\not=0$ along $\gamma$, $h$ can only vanish for finitely many $s_i\in I$ (in fact $\gamma$ can cross each Killing horizon at most once as $\dot\rho\not=0$ has a fixed sign), so that $2F(\rho)\dot{v}\dot{u}+1=0$ and $u(s)\not=0\not=v(s)$ almost everywhere along $\gamma$. By continuity, $\gamma$ is parametrized by proper time everywhere. Using this and taking one radial derivative of \eqref{eq_ODE_central}, we see that
	\begin{align*}
		h'(\rho)=\frac{1}{K}\left(1-\frac{f''(\rho)f(\rho)}{(f'(\rho))^2}\right)
		=\frac{2}{f'(\rho)F(\rho)}\left(1-\frac{f''(\rho)f(\rho)}{(f'(\rho))^2}\right)
		=-\frac{4\dot{u}\dot{v}}{f'(\rho)}\left(1-\frac{f''(\rho)f(\rho)}{(f'(\rho))^2}\right)
	\end{align*}
	along $\gamma$. Taking a derivative of the second equation in \eqref{symmetricODEs} with respect to the curve parameter $s$ and using that $\lambda$ is constant by assumption, we see that
	\[
		2\dot{\rho}\ddot{\rho}=2\lambda^2\dot{\rho}{\rho}-h'(\rho)\dot{\rho}.
	\]
	Since $\dot{\rho}\not=0$ everywhere along $\gamma$ by assumption, we get
	\begin{align*}
		0&=\ddot\rho-\lambda^2\rho+\frac{h'(\rho)}{2}\\
		&=\frac{1}{f'(\rho)}\left(v\ddot{u}+u\ddot{v}+2\dot{u}\dot{v}-\frac{f''(\rho)}{(f'(\rho))^2}(\dot{u}v+u\dot{v})^2+\frac{f'(\rho)h'(\rho)}{2}-\lambda(\dot{v}u-v\dot{u})\right)\\
		&=\frac{1}{f'(\rho)}\left(u\left(\ddot{v}-\lambda\dot{v}-\dot{v}^2u\frac{f''(\rho)}{(f'(\rho))^2}\right)+v\left(\ddot{u}+\lambda\dot{u}-\dot{u}^2v\frac{f''(\rho)}{(f'(\rho))^2}\right)\right).
	\end{align*}
	Invoking again that $\lambda$ is constant and using the explicit form of its derivative \eqref{eq_derivative_lambda}, we can conclude that indeed the second order system \eqref{eq_secondorder} is satisfied along with equation \eqref{eq_lambda}. Therefore, $\gamma$ is the profile curve of a symmetric photon surface $P^n$ with umbilicity factor $\lambda$.
	
	Lastly, let us address the photon sphere case. Assume now that $P^n$ is a symmetric photon sphere, i.e., a symmetric photon surface with $\rho=\rho_*>0$ along $P^n$. Then the system of ODEs \eqref{symmetricODEs} is satisfied by the above analysis. Moreover, as $P^n$ is timelike by assumption, we know that $\rho_*\not=r_i$ for all $1\le i\le N$, and that $h(\rho_*)>0$. We may thus work in $(t,r)$-coordinates with $\partial_t$ timelike. We conclude
	\[
		\left(\frac{h}{r^2}\right)'(\rho_*)=0
	\]
	by the \emph{photon sphere condition} \cite[Theorem 3.5, (3.20)]{cedgal}. Conversely, assume that $\gamma$ is a future directed timelike curve satisfying \eqref{symmetricODEs} with $\rho\vert_{\gamma}=\rho_*$. From the second equation in \eqref{symmetricODEs}, we see that
	\[
		0=\rho_*^2\lambda^2-h(\rho_*),
	\]
	so that $h(\rho_*)\ge 0$ with equality if and only if $\lambda=0$. However, as $\dot u\not=0\not=\dot v$ along $\gamma$, $h(\rho_*)=0$ and $\lambda=0$ imply that $u=v=0$ along $\gamma$, where we used the first equation in \eqref{symmetricODEs} for $\lambda$. In particular, $\gamma$ is constant, which gives a contradiction. Hence, $h(\rho_*)>0$. Invoking again the results of Cederbaum--Galloway \cite[Theorem 3.5]{cedgal}, $P^n$ is a photon sphere. This concludes the proof.
\end{proof}

	Note that the converse claim in Proposition \ref{prop_charakterizationsymmetricphotonsurfaces} only addresses the cases when either $\dot\rho\not=0$ or $\dot\rho=0$ everywhere along the  profile curve. However, we see from the system of ODEs \eqref{symmetricODEs} that there can also be isolated parameter values $\overline{s}\in I$ of the profile curve $\gamma$ of a symmetric photon surface $P^n$ such that $\dot\rho(\overline{s})=0$.  The radii $\rho(\overline{s})$ for such parameter values $\overline{s}\in I$ depend only on the value of $\lambda^2$, in consistency with the second equation in \eqref{symmetricODEs}.
	This subtlety in the analysis of symmetric photon surfaces was studied and resolved by Cederbaum--Jahns--Vi\v{c}\'{a}nek-Mart\'{i}nez in \cite{cedoliviasophia}, showing that at any such point, a symmetric photon surface can be regularly joined to a reflection of itself across an appropriate $\{t=\operatorname{const.}\}$-slice, see Figure~\ref{fig:casesintro}. As
	\[
		\dot\tau=\frac{1}{h(\rho)f'(\rho)}(u\dot v-v\dot u)
	\]
	holds along the profile curve $\gamma$ whenever defined, we see that \eqref{symmetricODEs} is indeed equivalent to the system of ODEs derived by Cederbaum--Galloway (\cite[Lemma 3.4]{cedgal}). In \cite{cedoliviasophia}, Cederbaum--Jahns--Vi\v{c}\'{a}nek-Mart\'{i}nez completely analyze solutions to \eqref{symmetricODEs} globally in a domain of outer communication in $(\widetilde{M},\widetilde{g})$ (in $(t,r)$-coordinates). Their analysis of the behavior of solutions near parameter values $\overline{s}$ with $\dot\rho(\overline{s})=0$ is local in nature and hence applies in any region of $(\widetilde{M},\widetilde{g})$ where $h>0$. Note that by \eqref{symmetricODEs}, $\dot\rho(\overline{s})=0$ is excluded in regions of $(\widetilde{M},\widetilde{g})$ where $h<0$. Thus, to complete their analysis globally in $(\widetilde{M},\widetilde{g})$, it remains to discuss the properties of symmetric photon surfaces in regions where $h<0$ and to analyze the behavior of solutions of \eqref{symmetricODEs} across Killing horizons $\{r=r_i\}$, as these are the cases left open in \cite{cedoliviasophia}.
	
	As we know that $\lambda$ is constant along $P^n$ by Proposition \ref{prop_constantumbillicalfactor}, this gives us some a priori information how symmetric photon surfaces extend into the generalized Kruskal--Szekeres spacetime assuming that they indeed cross a Killing horizon. First, it is important to note that any choice of time orientation in a $h>0$-region of $(\widetilde{M},\widetilde{g})$ fixes the sign of the umbilicity factor $\lambda$ simultaneously for all symmetric photon surfaces in said region. Other than in the analysis in \cite{cedoliviasophia} where only a single such $h>0$-region  was considered, our choice of time orientation (see the end of Section \ref{sec_extension}) forces different signs on $\lambda$ in different copies of the said region. More explicitly, $\lambda>0$ holds in $h>0$-regions carrying a $+$ and $\lambda<0$ in $h>0$-regions carrying a $-$. Furthermore, note that there is no restriction on $\lambda$ in $h<0$-regions (no matter the choice of time orientation), and indeed all values of $\lambda$, including $\lambda=0$, do occur. For example, in $h<0$-regions the $\{t=\operatorname{const.}\}$-slices are the unique symmetric photon surfaces with $\lambda=0$, which cross the Killing horizon once through the bifurcation surface $\{u=v=0\}$. 
	
	Hence, a symmetric photon surface can never extend into two $h>0$-regions carrying opposite signs. Thus, any symmetric photon surface approaching a Killing horizon from a $h>0$-region can only cross said horizon away from the bifurcation surface, and only into a $h<0$-region.

	In what follows, we will concentrate our analysis on symmetric photon surfaces in regions carrying a $+$ as any symmetric photon surface in a region carrying a $-$ arises as the point reflection of a symmetric photon surface in the corresponding region carrying a $+$ (in any $(u,v)$-coordinate patch).
	\newpage
	\begin{thm}\label{prop_crossing}
		Let ${P}^n$ be a symmetric photon surface with umbilicity factor $\lambda$ in a generalized Kruskal--Szekeres spacetime $(\widetilde{M},\widetilde{g})$. Assume that all positive zeroes $r_1,\dotsc,r_N$ of $h$ are simple.
 		If $\rho\to r_i$ along ${P}^n$ for some $1\le i\le N$ then ${P}^n$ crosses the Killing horizon $\{r=r_i\}$ in $(\widetilde{M},\widetilde{g})$. In fact, it will cross the Killing horizon $\{r=r_i\}$ away from the bifurcation surface, unless $\lambda=0$. If, conversely, $\lambda=0$, it must cross the Killing horizon $\{r=r_i\}$ through the bifurcation surface.
	\end{thm}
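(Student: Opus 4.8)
The plan is to carry out the entire argument along the profile curve $\gamma(s)=(u(s),v(s),p)$ in double null coordinates, parametrized by proper time, using the first-order system \eqref{symmetricODEs} together with the constancy of $\lambda$ from Proposition \ref{prop_constantumbillicalfactor}. The guiding principle is that the apparent degeneracy of the horizon in $(t,r)$-coordinates is an artifact of those coordinates: in $(u,v)$-coordinates the metric, and hence the ODE system governing $\gamma$, extends smoothly across $\{r=r_i\}=\{uv=f(r_i)=0\}$ by Theorem \ref{thm_main1}. Thus the hypothesis $\rho\to r_i$ should be upgraded to the statement that $\gamma$ reaches $\{uv=0\}$ at a finite proper-time value $s_*$, after which $\gamma$ continues as a solution of the (smooth) second order system past $s_*$.

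First I would extract from the second equation in \eqref{symmetricODEs} and $h(r_i)=0$ the limiting relation $(\dot\rho)^2\to r_i^2\lambda^2$ as $\rho\to r_i$, and then split into two cases according to $\lambda$. In the case $\lambda\neq 0$, one has $(\dot\rho)^2\to r_i^2\lambda^2>0$, so $\btr{\dot\rho}$ is bounded below in a neighborhood of $r_i$; hence $\rho$ reaches $r_i$ in finite proper time $s_*$ and, by smoothness of the extended system, $\gamma(s_*)=(u_*,v_*)$ exists with $u_*v_*=f(r_i)=0$. Evaluating the first equation in \eqref{symmetricODEs} at $s_*$ (recalling $f'(r_i)=1$) gives $r_i\lambda=u_*\dot v_*-v_*\dot u_*$, which cannot vanish; this rules out $u_*=v_*=0$, so exactly one of $u_*,v_*$ is zero and the crossing occurs on $\{u=0\}\cup\{v=0\}$ away from the bifurcation surface. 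Transversality, $\dot\rho(s_*)\neq 0$, together with the smooth extension of the solution past $s_*$, then shows that $\gamma$---and therefore $P^n$---genuinely crosses the horizon.

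In the case $\lambda=0$, the first equation in \eqref{symmetricODEs} forces $u\dot v-v\dot u\equiv 0$, so $v/u$ is constant along $\gamma$; it is a nonzero constant $C$, since away from the horizon a timelike $\gamma$ has $u,v\neq 0$. Then $uv=Cu^2=f(\rho)\to 0$ as $\rho\to r_i$ forces $u,v\to 0$, i.e.\ $\gamma$ approaches the bifurcation surface. To see this is reached in finite proper time, I would estimate $\int\mathrm{d}\rho/\sqrt{-h(\rho)}$ near $r_i$ and observe that the integrand has only the integrable singularity $\sim\btr{\rho-r_i}^{-1/2}$, so $s_*<\infty$. Finally the proper-time normalization $2(F\circ\rho)\dot u\dot v=-1$, evaluated at $s_*$ where $F(r_i)=2K\neq 0$, yields $\dot u(s_*),\dot v(s_*)\neq 0$; hence $\gamma$ passes through the origin transversally along the line $v=Cu$, crossing the horizon through the bifurcation surface.

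The main obstacle I anticipate is precisely the finite-proper-time claim: one must rule out that $\rho\to r_i$ merely asymptotically as $s\to\pm\infty$ and instead reach $\{uv=0\}$ at a finite parameter value, so that Theorem \ref{thm_main1} can be invoked to continue $\gamma$ across. This is handled by the lower bound on $\btr{\dot\rho}$ when $\lambda\neq 0$ and by the integrability of $(-h)^{-1/2}$ when $\lambda=0$. Once finiteness of $s_*$ and the smooth continuation are in place, the dichotomy between the two crossing loci follows directly from the first equation in \eqref{symmetricODEs} (for $\lambda\neq 0$) and from the relation $v=Cu$ (for $\lambda=0$), with no further subtlety.
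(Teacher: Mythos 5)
Your route is genuinely different from the paper's: you treat the profile curve dynamically, as a solution of the regular second-order system \eqref{eq_secondorder} in double null coordinates, to be continued past the horizon by ODE theory, whereas the paper never works in proper time --- it writes the surface as a graph $T_\lambda(\rho)$ as in \eqref{eq_tlambda}, applies Proposition \ref{prop_central} to the modified coefficient $h_\lambda=h\sqrt{1-v_{\text{eff}}^\lambda}$ to split $T_\lambda$ into a logarithmic term plus a regular remainder, and reads off from the closed-form expressions \eqref{eq_coordinates} that $u\to 0$ while $v$ tends to a positive constant, matching this graph with the corresponding graph in the adjacent $h<0$ region. Your $\lambda=0$ argument ($v/u\equiv C\neq 0$, hence $u,v\to 0$; finite proper time by integrability of $(-h)^{-1/2}$ at a simple zero; limits of $\dot u,\dot v$ exist since $v=Cu$ forces $\dot u^2=-1/(2FC)$) is complete and is arguably cleaner than the paper's appeal to the classification of $\lambda=0$ surfaces as $\{t=\mathrm{const.}\}$ slices.

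In the $\lambda\neq 0$ case, however, there is a genuine gap at the crux: the inference ``by smoothness of the extended system, $\gamma(s_*)=(u_*,v_*)$ exists'' is not valid. Smoothness of an ODE does not prevent finite-time blow-up, and finiteness of $s_*$ only controls $\rho$; it does not exclude the scenario $u\to 0$, $v\to\infty$ with $uv=f(\rho)\to 0$, i.e., the curve escaping every compact subset of phase space while hugging the horizon. Your next step also silently needs finite limits: if $v\to\infty$ and $\dot u\to 0$, the combination $u\dot v-v\dot u$ can stay equal to $\lambda r_i\neq 0$ with ``$u_*=0$, $v_*=\infty$'', so no contradiction with the bifurcation surface is reached. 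Ruling this out is exactly the analytic content that the paper's formulas \eqref{eq_coordinates} supply, and it must be supplied in your setting too. It can be, from relations you already have: differentiating $uv=f(\rho)$ and combining with \eqref{symmetricODEs} gives $u\dot v=\tfrac{f'}{2}\lr{\dot\rho+\lambda\rho}$ and $v\dot u=\tfrac{f'}{2}\lr{\dot\rho-\lambda\rho}$, hence, using $\tfrac{f}{f'}=Kh$ and $\lr{\lambda\rho}^2-\dot\rho^2=h$,
\[
\frac{\dot v}{v}=\frac{u\dot v}{uv}=\frac{1}{2K\lr{\lambda\rho-\dot\rho}},
\qquad
\frac{\dot u}{u}=\frac{v\dot u}{uv}=-\frac{1}{2K\lr{\lambda\rho+\dot\rho}}.
\]
Since $\dot\rho^2\to\lambda^2r_i^2>0$, exactly one of the two denominators stays bounded away from zero (which one is determined by the sign of $\lambda\dot\rho$); say the first. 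Then $\ln\btr{v}$ is Lipschitz up to $s_*$, so $v\to v_*\neq 0$, whence $u=f(\rho)/v\to 0$ and $\dot u,\dot v$ converge with $\dot u_*\neq 0$; only now do the escape lemma, your evaluation of \eqref{symmetricODEs} at $s_*$, and the continuation across $\{u=0\}$ apply. With this insertion (plus the observation that the continued curve still generates a photon surface, e.g.\ by conservation of $\lambda$ along \eqref{eq_secondorder} and the converse part of Proposition \ref{prop_charakterizationsymmetricphotonsurfaces}), your argument is correct and reaches the paper's conclusion by a more dynamical route.
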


	\begin{bem}\label{bem_crossing1}
		Before proving Theorem \ref{prop_crossing}, let us briefly mention for the convenience of the reader that the solution analysis in \cite{cedoliviasophia} distinguishes between different cases for fixed values of $\lambda^2>0$ in relation to the \emph{effective potential} $v_{\text{eff}}$ defined as
		\begin{align}\label{eq_effectivepotential}
		v_{\text{eff}}(r):=\frac{h(r)}{r^2}.
		\end{align}
		Any critical point of $v_{\text{eff}}$ (in a region where $h>0$) corresponds to a photon sphere, and if $P^n$ is not a photon sphere, then $\dot\rho$ vanishes at an isolated radius $r_\lambda$ along $P^n$ if and only if $v_{\text{eff}}(r_\lambda)=\lambda^2$. Away from horizons, i.e., where the time function $\tau=K\ln\btr{\frac{v}{u}}$ is well-defined, these facts generalize to our setting and the system of ODEs \eqref{symmetricODEs} in particular implies that
		\begin{align}\label{eq_carlasophiaoliviaODE}
		\frac{\d \rho}{\d \tau} = \mp h\sqrt{1-\lambda^{-2}v_{\text{eff}}}
		\end{align}
		for symmetric photon surfaces with $\lambda\not=0$. We refer to \cite[Section 3]{cedoliviasophia} for more details and enlightening figures. see also Figure \ref{fig_effpotential}.
		In particular, if $\frac{\d \rho}{\d \tau}\not=0$ on an open neighborhood of radii, then the profile curve can be written as the graph of a function $T_\lambda$ on said neighborhood, where 
		\begin{align}\label{eq_tlambda}
		T_\lambda(\rho)=\mp\int\limits^\rho\frac{1}{h(r)\sqrt{1-v_{\text{eff}}^\lambda(r)}}\d r,
		\end{align}
		with $v_{\text{eff}}^\lambda:=\lambda^{-2}v_{\text{eff}}$. Hence, if the open interval $(r_i,r_i{+1})$ corresponds to an $h<0$-region then $P^n$ can be globally written as a graph of $T_\lambda$ in this region, and approaches both $\{r=r_i\}$ and $\{r=r_{i+1}\}$. By Theorem \ref{prop_crossing} it will cross both of these horizons into (different) $h>0$-regions (unless $r_i=0$ or $r_{i+1}=\infty$, in which case $P^n$ will cross one horizon and approach the singularity $\rho=0$ or $\rho=\infty$, respectively).
	\end{bem}
	\begin{proof}
		As already discussed above, the case of $\lambda=0$ occurs only in $h<0$-regions and only for $\{t=\text{const.}\}$-slices. This forces $\lambda=0$-symmetric photon surfaces to extend through the bifurcation surface and stops them from crossing any Killing horizon away from its bifurcation surface. 
		
		Now, let us consider the case $\lambda\not=0$. We further assume without loss of generality that as $r\to r_i$, $P^n$ approaches the Killing horizon  from a region with $h>0$ corresponding to the first quadrant in the generalized Kruskal--Szekeres coordinates, i.e., $u,v>0$.\footnote{In particular, this addresses the possibly most interesting case of a symmetric photon surface in the domain of outer communication approaching a black hole or white hole region.} All other cases follow from almost identical arguments, possibly changing some signs and powers.

		Since $h\to 0$ as $r\to r_i>0$, we have $v_{\text{eff}}\to0$, so there exists $\delta>0$, such that ${\frac{\d \rho}{\d \tau}\not=0}$ on $(r_i,r_i+\delta)$ by \eqref{eq_carlasophiaoliviaODE}. Hence, in $\R\times (r_i,r_i+\delta)$, the radial profile can be written as the graph of a function $T_\lambda$ given by \eqref{eq_tlambda}.
		Hence $T_\lambda$ is the primitive of $\mp \frac{1}{h_\lambda}$, where ${h_\lambda\definedas h(r)\sqrt{1-v_{\text{eff}}^\lambda(r)}}$ with $h_\lambda(r_i)=0$ and ${h'_\lambda(r_i)=h'(r_i)\not=0}$. In particular, Proposition \ref{prop_central} guarantees the existence of a strictly increasing solution $f^\lambda_i$ of \eqref{eq_ODE_central} with respect to $h_\lambda$ and $K_i=\frac{1}{h'(r_i)}$ on $(r_i-\delta',r_i+\delta')$ for some appropriate $0<\delta'\le\delta$. Notice that $K_i\ln(f^\lambda_i)$ is a primitive of $\frac{1}{h_\lambda}$ on $(r_i,r_i+\delta')$, which yields that
		\begin{align}\label{eq_graphTathorizon}
		T_\lambda=\mp\left(K_i\ln(\btr{h})+\frac{K_i}{2}\ln(1-v_{\text{eff}}^\lambda)+R_{\lambda,i}\right)+C,
		\end{align}
		on $(r_i,r_i+\delta')$ by the fundamental theorem of calculus, where $R_{\lambda,i}$ is a well-defined, regular remainder function on $(r_i-\delta',r_i+\delta')$, cf.\ Proposition \ref{propcentralA} below, and where $C$ is a constant of integration. For simplicity, we will only address the $-$ case, as the $+$ case follows analogously. As $u,v>0$, the explicit expressions for the coordinate functions $u$, $v$ given in Remark \ref{bem_isometry} and for a solution of \eqref{eq_ODE_central} given in Remark \ref{bem_central} yield that
		\begin{align}
		\begin{split}\label{eq_coordinates}
			v(r)&=(1-v_{\text{eff}}^\lambda)^{-\frac{1}{4}} \exp\left(\frac{1}{2K_i}\left(R_i-R_{\lambda,i}\right)\right)\\
			u(r)&=h(1-v_{\text{eff}})^{\frac{1}{4}}\exp\left(\frac{1}{2K_i}\left(R_i+R_{\lambda,i}\right)\right),
		\end{split}
		\end{align}
		where we recall that $R_i:=\int\limits_{r_i}^r\frac{1-K_ih'}{h}$ is regular on $(r_i-\delta',r_i+\delta')$, cf. Proposition \ref{propcentralA} below.
		Thus, $u(r)\to 0$ and $v(r)$ converges to a strictly positive constant as $r\to r_i$. In particular, the symmetric photon surface does not go towards the bifurcation surface. As $h<0$ in $Q_2$, we note that a symmetric photon surface in Quadrant $Q_2=\{v>0,u<0\}$ with the same umbilicity factor $\lambda$ can similarly be described as the graph of $T_\lambda$, cf. Remark \ref{bem_crossing1}. Choosing the same constant of integration, one can verify that $u,v$ converge to the same values as $r\to r_i$, in fact $u,v$ still satisfy \eqref{eq_coordinates} on $(r_i-\delta',\delta)$. Thus the symmetric photon surface regularly extends across the Killing horizon (with its regularity depending on the regularity of $h$ in view of Theorem \ref{thm_main1}). This concludes the proof.
	\end{proof}
	\begin{bem}\label{bem_crossing2}
		Note that by \eqref{eq_carlasophiaoliviaODE} we can always locally write the profile curve as a graph over $r$ whenever $\dot\rho\not=0$, and otherwise employ the local result of Cederbaum--Jahns--Vi\v{c}\'{a}nek-Mart\'{i}nez \cite[Theorems 3.8, 3.9, 3.10]{cedoliviasophia}. Thus any maximally extended symmetric photon surface which crosses at least one Killing horizon either approaches the singularity $\rho=0$ or can be indefinitely extended in the generalized Kruskal--Szekeres extension. It is easy to see that the Kruskal--Szekeres extension of the Schwarzschild spacetime is an example where $\rho\to0$ for all symmetric photon surfaces crossing the horizon, see Figure \ref{fig_photonsurfaces}. On the other hand, the sub-extremal Reissner--Nordstr\"om spacetimes contain examples of indefinitely extended symmetric photon surfaces crossing infinitely many Killing horizons, see Figure \ref{fig_RNphotonsurfaces}.
		
		To see this, recall that the sub-extremal Reissner--Nordstr\"om spacetime with positive mass and non-trivial charge corresponds to the choice $h=1-\frac{2m}{r}+\frac{q^2}{r^2}$ with $m>\btr{q}>0$ (in spherical symmetry). Clearly, $h$ has two strictly positive, simple zeros $r_1$, $r_2$. Hence the generalized Kruskal--Szekeres spacetime, containing non-degenerate Killing horizons corresponding to the sets $\{r=r_1\}$, $\{r=r_2\}$, can be constructed. Note that on $(r_2,\infty)$, the corresponding effective potential $v_{\text{eff}}$ attains exactly one strict maximum with value $\lambda_*$. Following the construction and analysis of Cederbaum--Jahns--Vi\v{c}\'{a}nek-Mart\'{i}nez in \cite{cedoliviasophia}, we see that for $0<\lambda<\lambda_*$ there exists a symmetric photon surface in Region~\rom{1}$+$ that approaches the Killing horizon $\{r=r_2\}$ both for $t\to\infty$ and $t\to-\infty$. Hence, the photon surfaces crosses both the subsets $\{u=0\}$ and $\{v=0\}$ of the Killing horizon by Theorem \ref{prop_crossing}. As $h<0$ in any copy of Region~\rom{2}, the photon surface will further cross the Killing horizon $\{r=r_1\}$ into a copy of Region~\rom{3}$+$. Note that $h>0$ in Region~\rom{3} and $v_{\text{eff}}\to\infty$ as $\rho\to0$. Hence, there exists another turning point, i.e., a radius where $\dot\rho=0$ and where the surface is extended by reflection just as in the analysis by Cederbaum--Jahns--Vi\v{c}\'{a}nek-Mart\'{i}nez and thus approaches the Killing horizon $\{r=r_1\}$ again to pass into another copy of Region~\rom{2}. From there, it will again approach $\{r=r_2\}$ and extend into a copy of Region~\rom{1}$+$. Hence, it extends indefinitely. Again, see Figure \ref{fig_RNphotonsurfaces} below.
		
		Note that by the Penrose singularity theorem, $\{r=0\}$ still remains a causal singularity. Moreover, in the case of Reissner--Nordstr\"orm, $\{r=0\}$ is also a spacelike singularity by Proposition \ref{prop_kretschmann}.
	\end{bem}
	
	\begin{bem}\label{bem_crossing3}
		Note that the indefinitely extended symmetric photon surfaces in Reissner--Nordstr\"om discussed in Remark \ref{bem_crossing2} , see also Figure \ref{fig_RNphotonsurfaces}, are ``trapped'', between the singularity $\rho=0$ and the asymptotic end $\rho=\infty$. This provides an example questioning whether trapping of null geodesics should mean between a horizon and infinity or, more generally, between a singularity and $\infty$. See also \cite[Subsection 3.1]{cedoliviasophia} for a related example of a symmetric photon surface  trapped between the singularity and $\infty$ in superextremal Reissner--Nordstr\"om.
	\end{bem}
	\vfill\newpage
	\begin{figure}[H]
		\centering
		\includegraphics[scale=0.75]{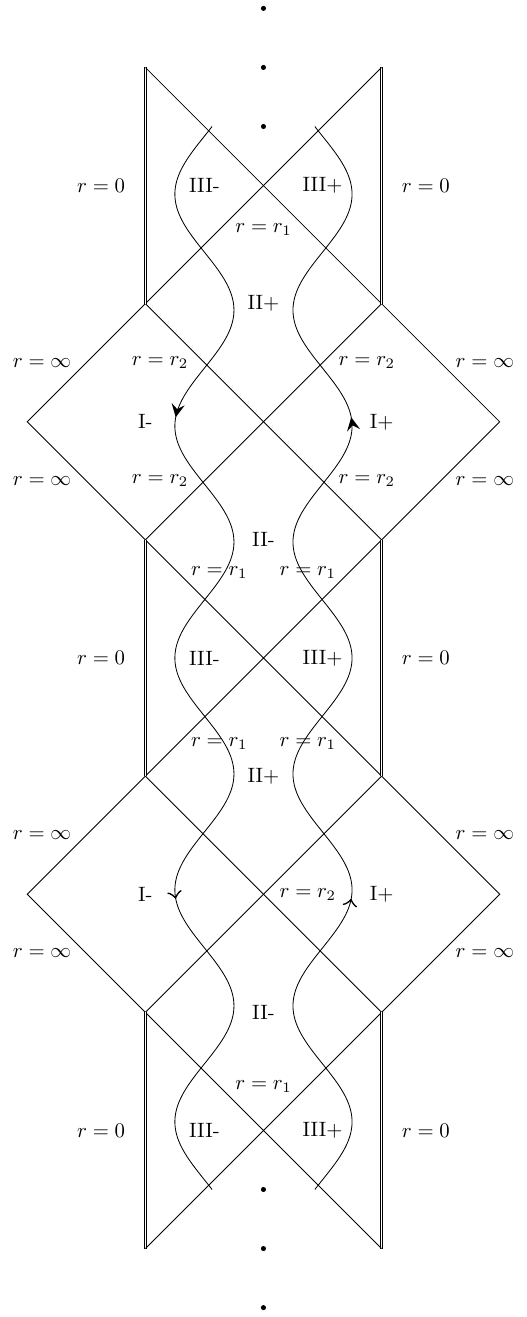}
		\caption{Indefinitely extendable photon surfaces in the generalized Kruskal--Szekeres spacetime of the subextremal Reissner--Nordstr\"om spacetime writing $r=\rho$ to match with the usual names in Reissner--Nordstr\"om.}
		\label{fig_RNphotonsurfaces}
	\end{figure}
	\newpage
	Last but not least, let us address the asymptotic behavior of symmetric photon surfaces as $\rho\to\infty$ if $h>0$ on $(r_N,\infty)$. For simplicity, we will assume a mild version of asymptotic flatness, requiring $h\to 1$ as $\rho\to\infty$ but without specifying any decay rate. Cederbaum--Jahns--Vi\v{c}\'{a}nek-Mart\'{i}nez \cite{cedoliviasophia} conjectured that any part of such a photon surface tending towards $\rho\to\infty$ should asymptote to the one-sheeted hyperboloid in the Minkowski spacetime and hence approach a lightcone in the given spacetime. Note that a concise notion of lightcones in a copy of Region~\rom{1}$+$ in generalized Kruskal--Szekeres coordiantes is conveniently given by the principal null hypersurfaces $\{v=\operatorname{const.}\}$, $\{u=\operatorname{const.}\}$ for any positive constant, respectively. Due to our choice of time-orientation, we call the sets $\{u=\text{const.}\}$ and $\{v=\text{const.}\}$ in Region~\rom{1}+ the future-directed lightcones and the past-directed lightcones, respectively. We prove the conjectured behavior with the next proposition utilizing the existence of generalized Kruskal--Szekeres coordinates. Note that sufficiently far out, $\dot\rho=0$ holds everywhere, and the second author found an explicit formula for the metric of a symmetric photon surface whenever $\dot\rho=0$, cf. \cite{wolff} Remark 3.5. In particular, this establishes that the metric converges to the metric of the one-sheeted hyperboloid with the precise rate of convergence depending on the asymptotic behavior of $h$.
	\begin{prop}\label{prop_lightconeasymptotics}
		Let ${P}^n$ be a symmetric photon surface in the domain of outer communication Region~\rom{1}+ of a generalized Kruskal--Szekeres spacetime under the same assumptions on $h$ as in Theorem \ref{prop_crossing}, and assume that $h\to1$ as $\rho\to\infty$. If $\rho\to\infty$ along some part of $P^n$ then this part of $P^n$ asymptotes to a lightcone.
	\end{prop}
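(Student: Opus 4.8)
The plan is to show directly that, along the part of $P^n$ on which $\rho\to\infty$, one of the null coordinates converges to a strictly positive constant while the other diverges, so that the surface approaches a null hypersurface $\{u=\operatorname{const.}\}$ or $\{v=\operatorname{const.}\}$, i.e.\ a lightcone in the sense fixed above. First I would pin down the admissible regime. Since $P^n$ lies in Region~\rom{1}+, where $h>0$ and $K>0$, the second equation in \eqref{symmetricODEs} forces $\rho^2\lambda^2\ge h$; in particular $\lambda\ne0$, as $\lambda=0$ would give $(\dot\rho)^2=-h<0$. Because $v_{\text{eff}}=h/\rho^2\to0$ as $\rho\to\infty$ (using $h\to1$), there is an $r_*$ with $\lambda^{-2}v_{\text{eff}}<1$ on $(r_*,\infty)$, so $\dot\rho\ne0$ there and, by \eqref{eq_carlasophiaoliviaODE} and \eqref{eq_tlambda}, the profile curve is the graph $\tau=T_\lambda(\rho)$ with
\[
\frac{\d\tau}{\d\rho}=\frac{1}{h\sqrt{1-\lambda^{-2}v_{\text{eff}}}},
\]
choosing the branch on which $\rho\to+\infty$ as $\tau\to+\infty$; the past-directed case $\tau\to-\infty$ follows by the time reflection $\tau\mapsto-\tau$, which exchanges $u$ and $v$.

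The heart of the argument is an integrability estimate for $\ln u$ along the curve. Using the coordinate formula $u^2=f(\rho)\exp(-\tau/K)$ from Remark~\ref{bem_isometry} (with $f>0$ in Region~\rom{1}+, since $\rho>r_i$), together with $f'/f=1/(Kh)$ from \eqref{eq_ODE_central}, I would compute
\[
\frac{\d\ln u}{\d\rho}=\frac12\frac{f'}{f}-\frac{1}{2K}\frac{\d\tau}{\d\rho}
=\frac{1}{2Kh}\left(1-\frac{1}{\sqrt{1-\lambda^{-2}v_{\text{eff}}}}\right).
\]
The decisive point is that the two $O(1)$ contributions cancel: since $1-(1-x)^{-1/2}=\landau{x}$ for small $x$ and $\lambda^{-2}v_{\text{eff}}=\lambda^{-2}h/\rho^2=\landau{\rho^{-2}}$, while $(2Kh)^{-1}$ stays bounded (as $h\to1$), one obtains $\frac{\d\ln u}{\d\rho}=\landau{\rho^{-2}}$, which is integrable on $(r_*,\infty)$.

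Consequently $\ln u$ converges to a finite limit, so $u\to u_\infty>0$ as $\rho\to\infty$. The parallel computation for $v^2=f(\rho)\exp(\tau/K)$ gives
\[
\frac{\d\ln v}{\d\rho}=\frac{1}{2Kh}\left(1+\frac{1}{\sqrt{1-\lambda^{-2}v_{\text{eff}}}}\right)\longrightarrow\frac1K>0,
\]
so $\ln v$ grows without bound and $v\to\infty$ (consistently, $uv=f(\rho)\to\infty$). Hence the part of $P^n$ with $\rho\to\infty$ approaches the future-directed lightcone $\{u=u_\infty\}$, and the time-reflected case yields $\{v=v_\infty\}$, completing the proof.

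The main obstacle is precisely the integrability step, i.e.\ isolating the exact cancellation of the leading-order terms in $\frac{\d\ln u}{\d\rho}$: the individual pieces $\tfrac12 f'/f$ and $\tfrac{1}{2K}\,\d\tau/\d\rho$ are each of order one near infinity, and it is only their difference — controlled by the rate $v_{\text{eff}}=\landau{\rho^{-2}}$ at which the effective potential decays — that is integrable. Everything else is elementary once $h\to1$ keeps $h$, $1/h$, and $f'/f$ bounded away from $0$ and $\infty$; in particular no delicate asymptotics of $f$ are required, only the exact relation $f'/f=1/(Kh)$ furnished by \eqref{eq_ODE_central}.
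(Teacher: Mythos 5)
Your proof is correct and follows essentially the same route as the paper: both express $u$ and $v$ along the profile curve via Remark \ref{bem_isometry}, the relation $f'/f=1/(Kh)$ from \eqref{eq_ODE_central}, and the graph function $T_\lambda$, and both reduce the claim to the convergence of $\int^\infty \frac{1}{h}\bigl(1-(1-\lambda^{-2}v_{\text{eff}})^{-1/2}\bigr)\d\rho$, which holds because the integrand is $\landau{\rho^{-2}}$. Your differential formulation of this cancellation (via $\d\ln u/\d\rho$ and the Taylor bound) is just a cosmetic variant of the paper's integral estimate obtained by rationalizing the integrand.
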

	\begin{proof}
	Consider a part of $P^n$ with $\rho\to\infty$. As $h\to 1$, we note that by \eqref{symmetricODEs} $\dot\rho\not=0$ for $\rho$ large enough and we pick a point $(u_0,v_0)$ with radius $\rho_0\ge r_N$, such that $\dot\rho\not=0$ along $P^n$ for all $\rho\ge \rho_0$. Without loss of generality, $v_{\text{eff}}^\lambda<1$ for all $\rho\ge \rho_0$.
	Note that on $(r_N,\infty)$ we can express the solution $f_N$ of \eqref{eq_ODE_central} as
	\[
		f_N=\exp\left(\frac{1}{K_N}\int\frac{1}{h}+C\right),
	\]
	see Appendix \ref{app_solving} below.
	From this, we derive from Remark \ref{bem_isometry} that
	\begin{align*}
		v(r)&=v_0\exp\left(\frac{1}{2K_N}\left(\,\int\limits_{\rho_0}^r\frac{1}{h}+T_\lambda\right)\right)
		=v_0\exp\left(\frac{1}{2K_N}\left(\,\int\limits_{\rho_0}^r\frac{1}{h}\mp\frac{1}{h_\lambda}\right)\right)\\
		u(r)&=u_0\exp\left(\frac{1}{2K_N}\left(\,\int\limits_{\rho_0}^r\frac{1}{h}-T_\lambda\right)\right)
		=u_0\exp\left(\frac{1}{2K_N}\left(\,\int\limits_{\rho_0}^r\frac{1}{h}\pm\frac{1}{h_\lambda}\right)\right),
	\end{align*}
	where we used again that away from radii with $\dot\rho=0$, we can write the radial profile as a graph of a function $T_\lambda$ such that $T_\lambda$ satisfies \eqref{eq_graphTathorizon}, with $h_\lambda$ defined as above.
	We can therefore conclude that $v\to \text{const.}$ in the $-$ case, and $u\to\text{const.}$ in the $+$ case, respectively, once we show that the indefinite integral
	\[
	\int\limits_{\rho_0}^\infty\frac{1}{h(\rho)}\left(1-\frac{1}{\sqrt{1-v_{\text{eff}}^\lambda(\rho)}}\right)\d\rho
	\]
	converges. Since the integrand is strictly negative, it suffices to show that the integral remains bounded. Since $h\to 1$ as $\rho\to\infty$, there exists $\rho_1\ge\rho_0$, such that
	\[
		\sqrt{1-v_{\text{eff}}^\lambda}\left(1+\sqrt{1-v_{\text{eff}}^\lambda}\right)\ge 1
	\]
	for all $\rho\ge\rho_1$, recalling that $v_{\text{eff}}^\lambda(\rho)=\frac{h(\rho)}{\lambda^2\rho^2}$. Define 
	\[
		C_1\definedas-\int\limits_{\rho_0}^{\rho_1}\frac{1}{h(\rho)}\left(1-\frac{1}{\sqrt{1-v_{\text{eff}}^\lambda(\rho}}\right)\d\rho.
	\]
	Then, we estimate
	\begin{align*}
	0&\le -\int\limits_{\rho_0}^\infty\frac{1}{h(\rho)}\left(1-\frac{1}{\sqrt{1-v_{\text{eff}}^\lambda(\rho)}}\right)\d\rho\\
	&=C_1-\int\limits_{\rho_1}^\infty\frac{1}{h(\rho)}\left(1-\frac{1}{\sqrt{1-v_{\text{eff}}^\lambda(\rho)}}\right)\d\rho\\
	&=C_1+\int\limits_{\rho_1}^\infty\frac{1}{\sqrt{1-v_{\text{eff}}^\lambda(\rho)}\left(1+\sqrt{1-v_{\text{eff}}^\lambda(\rho)}\right)}\frac{1}{\lambda^2\rho^2}\d\rho\\
	&\le C_1+\frac{1}{\lambda^2}\int\limits_{\rho_1}^\infty\frac{1}{\rho^2}\d\rho< \infty
	\end{align*}
	This concludes the proof.
	\end{proof}
	We close this section with some figures illustrating our asymptotic ($\rho\to0,\,\rho\to\infty$) results, see also Remark \ref{bem_crossing2}. For simplicity, we assume $N=1$, $h\to 1$ as $\rho\to\infty$ and that $v_{\text{eff}}$ has exactly one strict, positive maximum in $(r_1,\infty)$ attaining the value $\lambda_*$. In particular, one can think of the Kruskal--Szekeres extension of the Schwarzschild spacetime. In analogy with Cederbaum--Jahns--Vi\v{c}\'{a}nek-Mart\'{i}nez, we distinguish between the three cases $0<\btr{\lambda}<\lambda_*$, $\btr{\lambda}=\lambda_*$ and $\btr{\lambda}>\lambda_*$.
	\begin{figure}[H]
		\centering
		\begin{subfigure}[b]{0.4\textwidth}
			\centering
			\includegraphics[width=\textwidth]{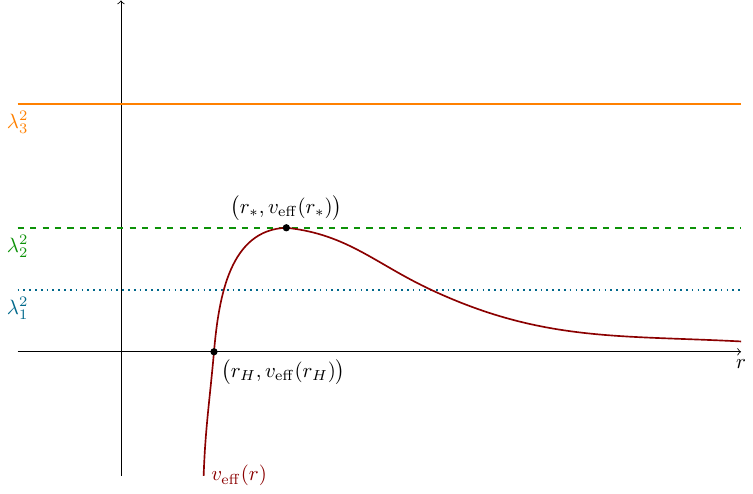}
			\caption{Effective potential with different values for $\btr{\lambda}$}
			\label{fig_effpotential}
		\end{subfigure}
		\hfill
		\begin{subfigure}[b]{0.4\textwidth}
			\centering
			\includegraphics[width=\textwidth]{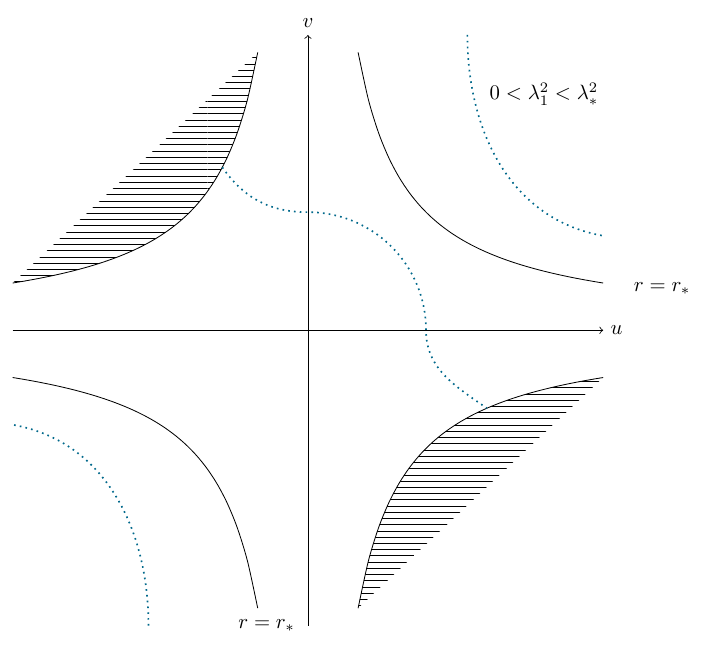}
			\caption{Photon surfaces with $\btr{\lambda}<\lambda_*$}
			\label{fig_kruskal1}
		\end{subfigure}\\
	\begin{subfigure}[b]{0.4\textwidth}
		\centering
		\includegraphics[width=\textwidth]{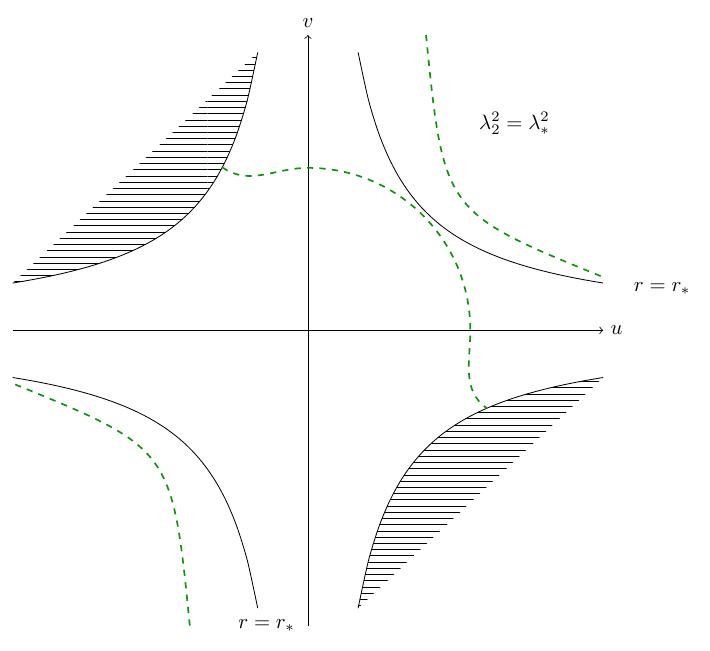}
		\caption{Photon surfaces with $\btr{\lambda}=\lambda_*$}
		\label{fig_kruskal2}
	\end{subfigure}
	\hfill
	\begin{subfigure}[b]{0.4\textwidth}
		\centering
		\includegraphics[width=\textwidth]{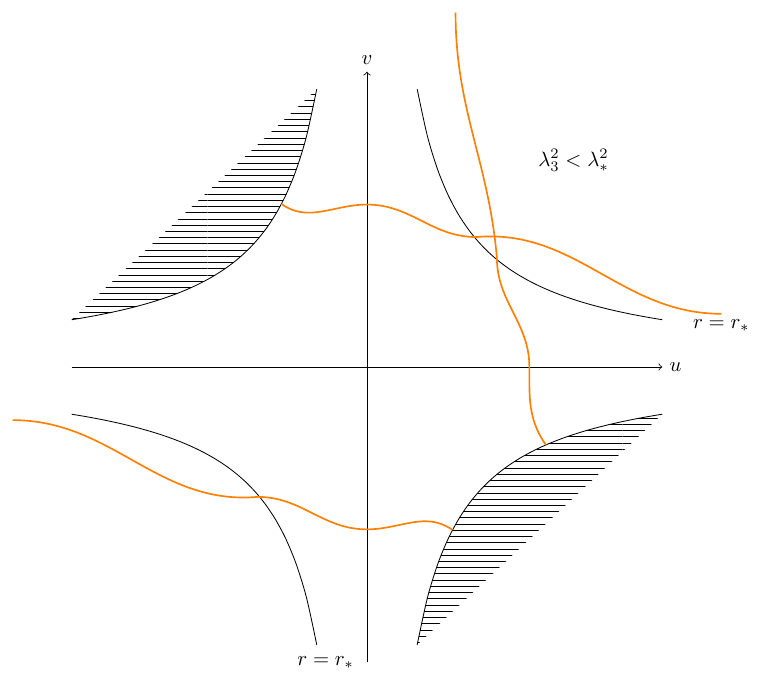}
		\caption{Photon surfaces with $\btr{\lambda}>\lambda_*$}
		\label{fig_kruskal3}
	\end{subfigure}
\caption{Symmetric photon surfaces in the Kruskal--Szekeres extension of the Schwarzschild spacetime.}
\label{fig_photonsurfaces}
	\end{figure}
	
\begin{appendix}
	\section{Solving the ODE}\label{app_solving}
		It remains to prove Proposition \ref{prop_central}. To motivate our analysis, note that, on $(r_i,r_{i+1})$ \eqref{eq_ODE_central} is equivalent to a linear ODE satisfying a local Lipschitz condition. In fact, on $(r_i,r_{i+1})$, \eqref{eq_ODE_central} is solved by
		\[
		f_i=\exp\left(\frac{1}{K_i}\int\frac{1}{h}+C_i \right)
		\]
		with any constant of integration $C_i$.
		Further note that the tortoise function $R^*=\int\frac{1}{h}$ considered in the approach of Brill--Hayward \cite{brillhayward} and in the approach of Schindler--Aguirre \cite{Schindler2018AlgorithmsFT} explicitly appears in the above form of $f$. As shown by them in their local analysis, we expect the tortoise function $R^*$ to be of the form
		\begin{align}\label{eqbrillhaywardanalogue}
		R^*=K_i\ln\btr{h}+R_i\text{ on }(r_i,r_{i+1}),
		\end{align}
		where $R_i\colon(r_{i-1},r_{i+1})\to\R$ is a smooth remainder function. However, we now want to expand \eqref{eqbrillhaywardanalogue} onto $(r_{i-1},r_{i+1})$ without taking an absolute value inside the logarithm. Instead, we differentiate \eqref{eqbrillhaywardanalogue} once and multiply by $h$. Thus, instead of \eqref{eqbrillhaywardanalogue} we consider the equation
		\begin{align}\label{eq_ODE_central2}
		1=K_ih'+R_i'h\text{ on }(r_i,r_{i+1})
		\end{align}
	 	on $(r_{i-1},r_{i+1})$. The following proposition shows that \eqref{eq_ODE_central2} is intricately related to \eqref{eq_ODE_central} and might indeed be a more favorable way of looking at Equation \eqref{eqbrillhaywardanalogue}.
		\begin{prop}\label{propcentralA}
			Let $(M,g)$ be a spacetime of class $\mathfrak{H}$. Then the following are equivalent:
			\begin{enumerate}
				\item[\emph{(i)}] There exists a strictly increasing smooth solution $f_i$ of \eqref{eq_ODE_central} on $(r_{i-1},r_{i+1})$.
				\item[\emph{(ii)}] There exists a smooth function $R_i$ on $(r_{i-1},r_{i+1})$ such that $h$ satisfies \eqref{eq_ODE_central2}, and \linebreak ${h'(r_i)=\frac{1}{2K_i}>0}$.
				\item[\emph{(iii)}] $h'(r_i)\not=0$.
			\end{enumerate}
		\end{prop}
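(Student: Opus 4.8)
The plan is to establish the three-way equivalence by proving the cycle (i) $\Rightarrow$ (ii) $\Rightarrow$ (iii) $\Rightarrow$ (i). The conceptual bridge between (i) and (ii) is the substitution already anticipated in Remark \ref{bem_central}: a strictly increasing solution $f_i$ of \eqref{eq_ODE_central} and a remainder function $R_i$ solving \eqref{eq_ODE_central2} should correspond via $f_i = K_i h\exp(\tfrac{1}{K_i}R_i)$, equivalently $R_i = K_i\ln\btr{f_i/h}$. Under this correspondence (i) $\Leftrightarrow$ (ii) is essentially formal, and (ii) $\Rightarrow$ (iii) is immediate, so the only genuinely analytic content sits in the construction step (iii) $\Rightarrow$ (i), where one must extend $\tfrac{1-K_ih'}{h}$ across the zero $r_i$; this is exactly the l'H\^opital-type statement isolated in \Cref{lemmaauxilliary1}.

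For (i) $\Rightarrow$ (ii) I would first extract the normalization. Since $h(r_i)=0$, the rewritten ODE $f_i = K_i h f_i'$ forces $f_i(r_i)=0$; differentiating $f_i = K_i h f_i'$ and evaluating at $r_i$ gives $f_i'(r_i) = K_i h'(r_i) f_i'(r_i)$, and as $f_i'>0$ this yields $K_i h'(r_i)=1$. In particular $h'(r_i)\neq 0$ and $K_i$ is fixed, which is the normalization recorded in (ii). Next, from $f_i = K_i h f_i'$ one reads off $f_i/h = K_i f_i'$, a smooth and nowhere-vanishing function, so $R_i \definedas K_i\ln\btr{K_i f_i'}$ is smooth on all of $(r_{i-1},r_{i+1})$; then $R_i' = K_i f_i''/f_i'$, and combining with the differentiated ODE $f_i' = K_i h' f_i' + K_i h f_i''$ (divided by $f_i'$) gives $1 = K_i h' + R_i' h$, i.e.\ \eqref{eq_ODE_central2}. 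This proves (i) $\Rightarrow$ (ii), and (ii) $\Rightarrow$ (iii) holds trivially because (ii) already asserts $h'(r_i)\neq 0$.

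The heart of the matter is (iii) $\Rightarrow$ (i). Assuming $h'(r_i)\neq 0$, set $K_i \definedas \tfrac{1}{h'(r_i)}$ so that $1 - K_i h'(r_i) = 0$. The candidate remainder is $R_i(r) \definedas \int_{r_i}^r \tfrac{1-K_ih'(s)}{h(s)}\d s$, whose integrand is of the form $0/0$ at $r_i$ since both numerator and denominator vanish there. The key step is to show that $\tfrac{1-K_ih'}{h}$ extends through $r_i$ with the asserted regularity (to $C^k$ when $h\in C^k$ and is $(k+1)$-times differentiable near $r_i$); this is precisely where \Cref{lemmaauxilliary1} does the work, via a l'H\^opital / Taylor-expansion argument exploiting that $r_i$ is a simple zero of $h$ and simultaneously a zero of $1-K_ih'$. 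Granting this, $R_i$ is a well-defined function of the claimed regularity solving \eqref{eq_ODE_central2}, and I would then set $f_i \definedas K_i h\exp(\tfrac{1}{K_i}R_i)$. A direct computation using $K_i h' + R_i' h = 1$ gives $f_i' = \exp(\tfrac{1}{K_i}R_i) > 0$, so $f_i$ is strictly increasing, while $f_i/f_i' = K_i h$ shows $f_i$ solves \eqref{eq_ODE_central}; this closes the cycle.

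The main obstacle is exactly the smooth extension of $\tfrac{1-K_ih'}{h}$ through the simple zero $r_i$, together with the regularity bookkeeping that tracks how the available differentiability of $h$ passes to the quotient (and ultimately explains the gain from $C^k$ to $C^{k+1}$ for $f_i$ noted in Remark \ref{bem_central}). Everything else is the formal correspondence $f_i \leftrightarrow R_i$ and routine differentiation. Accordingly, I would state and prove \Cref{lemmaauxilliary1} first, so that the cycle above reduces to a short, clean sequence of implications.
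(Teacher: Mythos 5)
Your proposal is correct and is essentially the paper's own argument: the same substitution $R_i = K_i\ln(f_i')$ (up to an additive constant) linking \eqref{eq_ODE_central} and \eqref{eq_ODE_central2}, the same evaluation at $r_i$ giving $K_ih'(r_i)=1$, the same use of \Cref{lemmaauxilliary1} to extend $\frac{1-K_ih'}{h}$ across $r_i$, and the same exponential reconstruction $f_i = K_i h\exp(R_i/K_i)$. The only difference is organizational — you prove the cycle (i)$\Rightarrow$(ii)$\Rightarrow$(iii)$\Rightarrow$(i), while the paper proves the two equivalences (i)$\Leftrightarrow$(ii) and (ii)$\Leftrightarrow$(iii) separately — which changes nothing of substance.
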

		In particular, Proposition \ref{propcentralA} implies Proposition \ref{prop_central}. The uniqueness statement in Proposition \ref{prop_central} is easily derived from the proof of Proposition \ref{propcentralA}, see also the comments below. We first establish a higher order version of l'H\^opital's rule suited to our needs:
		\begin{lem}\label{lemmaauxilliary1}
			Let $f,g\colon(r_0-\varepsilon,r_0+\varepsilon)\to\R$ be $(k+1)$-times differentiable functions for some $r_0\in\R$, $\varepsilon>0$, where both functions have the unique zero $r_0$ in $(r_0-\varepsilon,r_0+\varepsilon)$, and $g>0$, $g'(r_0)\not=0$. Then
			\begin{align*}
			p\colon(r_0-\varepsilon,r_0+\varepsilon)\setminus\{r_0\}\to\R\colon r\mapsto\frac{f}{g},
			\end{align*}
			can be extended to $r_0$ in $C^{k}$.
		\end{lem}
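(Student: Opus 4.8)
The plan is to reduce the two-variable quotient to a one-variable ``smooth division'' statement and then divide by a non-vanishing function. Since both $f$ and $g$ vanish at $r_0$, I would first factor out the common linear factor, writing $\tilde f(r)\definedas f(r)/(r-r_0)$ and $\tilde g(r)\definedas g(r)/(r-r_0)$ (extended to $r_0$ by $f'(r_0)$ and $g'(r_0)$, respectively). Then $p=\tilde f/\tilde g$ away from $r_0$, while $\tilde g(r_0)=g'(r_0)\neq0$ forces $\tilde g\neq0$ on a possibly smaller neighborhood of $r_0$. Hence, once I show that $\tilde f,\tilde g\in C^{k}$, the quotient $p=\tilde f/\tilde g$ is automatically $C^{k}$ near $r_0$ as a quotient of $C^{k}$ functions with non-vanishing denominator, which is exactly the claimed extension.

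The heart of the matter is therefore the following smooth-division sub-lemma: if $\phi$ is $(k+1)$-times differentiable on $(r_0-\varepsilon,r_0+\varepsilon)$ with $\phi(r_0)=0$, then $\phi/(r-r_0)$ extends to $r_0$ in $C^{k}$. I would prove this by induction on $k$, the base case $k=0$ being the classical l'H\^opital rule, or simply the difference quotient $\phi(r)/(r-r_0)\to\phi'(r_0)$. For the inductive step the integral representation $\tilde\phi(r)=\int_0^1\phi'(r_0+t(r-r_0))\d t$ is convenient: differentiating under the integral sign $j$ times gives $\tilde\phi^{(j)}(r)=\int_0^1 t^{j}\phi^{(j+1)}(r_0+t(r-r_0))\d t$, and this is valid and yields continuous derivatives for every $j\le k-1$, because $\phi\in C^{k}$ makes each $\phi^{(j+1)}$ with $j+1\le k$ continuous. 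This already establishes $\tilde\phi\in C^{k-1}$ together with the explicit formula $\tilde\phi^{(k-1)}(r)=(r-r_0)^{-k}\int_{r_0}^r(s-r_0)^{k-1}\phi^{(k)}(s)\d s$ obtained after the substitution $s=r_0+t(r-r_0)$.

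The main obstacle is the continuity of the top-order derivative $\tilde\phi^{(k)}$ at $r_0$: under the stated minimal-regularity hypothesis, $\phi^{(k+1)}$ need not be continuous, so the naive l'H\^opital step is inconclusive. To handle this I would differentiate $\tilde\phi^{(k-1)}$ directly. Setting $J(r)\definedas(r-r_0)^{k}\phi^{(k)}(r)-k\int_{r_0}^r(s-r_0)^{k-1}\phi^{(k)}(s)\d s$, a short computation yields $J(r_0)=0$, $J'(r)=(r-r_0)^{k}\phi^{(k+1)}(r)$, and $(\tilde\phi^{(k-1)})'(r)=J(r)/(r-r_0)^{k+1}$ for $r\neq r_0$. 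The crucial point is that $J(r)/(r-r_0)^{k+1}$ has a finite limit as $r\to r_0$ despite possible oscillation of $\phi^{(k+1)}$. Here I would not apply l'H\^opital again but instead expand $\phi^{(k)}$ to first order at $r_0$, using that $\phi^{(k)}$ is continuous and differentiable at $r_0$ with $\phi^{(k)}(s)=\phi^{(k)}(r_0)+\phi^{(k+1)}(r_0)(s-r_0)+o(s-r_0)$; substituting this into the integral defining $J$ and integrating term by term, the constant contributions cancel and the linear term integrates to $\phi^{(k+1)}(r_0)(r-r_0)^{k+1}/(k+1)$ plus $o((r-r_0)^{k+1})$, so that $J(r)/(r-r_0)^{k+1}\to\phi^{(k+1)}(r_0)/(k+1)$. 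This is the precise ``version of l'H\^opital's Rule'' forced by the weak hypothesis: the averaging built into the remainder integral smooths out the (possibly wild) behaviour of $\phi^{(k+1)}$ away from $r_0$ and leaves only its value at $r_0$. Finally, since $\tilde\phi^{(k-1)}$ is continuous and differentiable near $r_0$ with $(\tilde\phi^{(k-1)})'(r)\to\phi^{(k+1)}(r_0)/(k+1)$ as $r\to r_0$, the derivative-limit theorem guarantees that $\tilde\phi^{(k-1)}\in C^{1}$, i.e.\ $\tilde\phi\in C^{k}$, closing the induction and, via the factorization above, proving the lemma.
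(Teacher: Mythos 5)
Your proof is correct, and its outer skeleton is the same as the paper's: you factor $p=\tilde f/\tilde g$ with $\tilde f=f/(r-r_0)$, $\tilde g=g/(r-r_0)$, observe $\tilde g(r_0)=g'(r_0)\neq0$, and reduce everything to the sub-lemma that $\phi/(r-r_0)$ extends in $C^k$ whenever $\phi$ is $(k+1)$-times differentiable with $\phi(r_0)=0$ (this is the paper's auxiliary ``claim'', with its $f_1,f_2$ playing the role of your $\tilde g,\tilde f$). The genuine difference lies in how the sub-lemma is proved. The paper establishes by induction the closed-form expression $u^{(l)}(r)=\bigl(\sum_{j=0}^{l}(-1)^{l+j}\tfrac{l!}{j!}v^{(j)}(r)(r-r_0)^{j}\bigr)/(r-r_0)^{l+1}$ for $u=v/(r-r_0)$, applies l'H\^opital once, telescopes the differentiated numerator to $v^{(l+1)}(r)(r-r_0)^{l}$, and reads off the limit $v^{(l+1)}(r_0)/(l+1)$; you instead use the representation $\tilde\phi(r)=\int_0^1\phi'(r_0+t(r-r_0))\,\mathrm{d}t$, differentiate under the integral sign up to order $k-1$, and handle the top derivative by a first-order Peano expansion of $\phi^{(k)}$ inside the remainder integral $J$. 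Your route buys something real at that top order: the paper's final step $\lim_{r\to r_0}v^{(k+1)}(r)=v^{(k+1)}(r_0)$ tacitly assumes that this limit exists, i.e.\ continuity of $v^{(k+1)}$ at $r_0$, which is not implied by the hypothesis that $v$ is merely $(k+1)$-times differentiable (already for $k=0$ and $v=(r-r_0)^2\sin(1/(r-r_0))$ the l'H\^opital quotient $v'(r)$ has no limit at $r_0$, although the difference quotient converges); when that limit fails to exist, l'H\^opital is simply inconclusive. Your computation $J(r)/(r-r_0)^{k+1}\to\phi^{(k+1)}(r_0)/(k+1)$, which uses only differentiability of $\phi^{(k)}$ at the single point $r_0$ together with the derivative-limit theorem, closes exactly this loophole, so your argument is not just a valid alternative but is tighter than the paper's own proof at the borderline regularity the paper actually invokes in Remark \ref{bem_central} and Theorem \ref{thm_main2}.
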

		\begin{proof}
			We first prove the following claim: Let $v\colon(r_0-\varepsilon,r_0+\varepsilon)\to\mathbb{R}$ be a $(k+1)$-times differentiable function function satisfying $v(r_0)=0$. Then \begin{align*}
			u\colon(r_0-\varepsilon,r_0)\cup(r_0,r_0+\varepsilon)\to\mathbb{R}\colon r\mapsto\frac{v(r)}{r-r_0}
			\end{align*}
			is extendable to $r_0$ in $C^{k}$ with
			\begin{align*}
			u^{(l)}(r_0)&=\frac{v^{(l+1)}(r_0)}{n+1}
			\end{align*}
			for all $0\le l\le k$.
			
			To see this, we note that for all $0\le l\le k$
			\begin{align*}
			u^{(l)}(r)&=\frac{\sum_{j=0}^{n}(-1)^{l+j}\frac{l!}{j!}v^{(j)}(r)(q-r_0)^{j}}{(r-r_0)^{l+1}},
			\end{align*}
			away from $r_0$, a fact that can be proven by induction.
			By l'H\^{o}pital's rule, we find
			\begin{align*}
			\lim_{r\to r_0}u^{(l)}(r)&=\lim_{r\to r_i}\frac{\sum_{j=0}^{l}(-1)^{l+j}\frac{l!}{j!}v^{(l)}(r)(r-r_0)^{j}}{(r-r_0)^{l+1}}\\
			&=\lim_{r\to r_0}\frac{\sum_{j=0}^{l}(-1)^{l+j}\frac{l!}{j!}\left(v^{(j+1)}(r)(r-r_0)^{j}+j\,v^{(j)}(r)(r-r_i)^{j-1}\right)}{(n+1)(r-r_0)^{l}}\\
			&=\lim_{r\to r_0}\frac{v^{(l+1)}(r)}{l+1}\\
			&=\frac{v^{(l+1)}(r_0)}{l+1},
			\end{align*} 
			so $u$ is extendable in $C^{k}$ to $r_0$.
			
			We define the auxiliary functions
			\begin{align*}
			f_1&\colon(r_0-\varepsilon,r_0)\cup(r_0,r_0+\varepsilon)\to\mathbb{R}, r\mapsto \frac{g(r)}{r-r_0},\\
			f_2&\colon(r_0-\varepsilon,r_0)\cup(r_0,r_0+\varepsilon)\to\mathbb{R}, r\mapsto \frac{f(r)}{r-r_0}
			\end{align*}
			which are $(k+1)$-differentiable away from $r_0$. By the above, $f_1,f_2$ are extendable in $C^{k}$ and in particular $f_1(r_0)=g'(r_0)\not=0$, so $p=\frac{f_2}{f_1}$ is extendable in $C^{k}$ to $r_0$.
		\end{proof}
		With this in mind, we return to proving Proposition \ref{propcentralA}
		\begin{proof}[Proof of Proposition \ref{propcentralA}]
			We will prove the equivalences of (i),(ii) and (ii),(iii) separately.
			\begin{itemize}
				\item [(i)$\Rightarrow$(ii)] Let $f_i$ be a strictly increasing, smooth solution of \eqref{eq_ODE_central} on $(r_{i-1},r_{i+1})$. Taking a derivative, this implies	
				\begin{align*}
				1=K_ih'+R_i'h,
				\end{align*}
				with $R_i:=K_i\ln(f_i')$ on $(r_{i-1},r_{i+1})$, and hence $h'(r_i)=\frac{1}{K_i}\not=0$.
				\item [(ii)$\Rightarrow$(i)] Let $h$ satisfy \eqref{eq_ODE_central2} on $(r_{i-1},r_{i+1})$ for some smooth function $R_i$ on $(r_{i-1},r_{i+1})$. Define
				\[
				f_i:=h\exp\left(\frac{R_i}{K_i}\right),
				\]
				Then $f_i'>0$ and one can directly verify that $f_i$ satisfies \eqref{eq_ODE_central} on $(r_{i-1},r_{i+1})$.
			\end{itemize}
			\begin{itemize}
				\item [(ii)$\Rightarrow$(iii)] Let $h$ satisfy \eqref{eq_ODE_central2} on $(r_{i-1},r_{i+1})$ for some smooth function $R_i$ on $(r_{i-1},r_{i+1})$. In particular,
				\[
				1=K_ih'(r_i)+R_i'(r_i)h(r_i)=K_ih'(r_i),
				\]
				so $h'(r_i)=\frac{1}{K_i}\not=0$.
				\item [(iii)$\Rightarrow$(ii)] Let $h'(r_i)\not=0$. Define $K_i\definedas\frac{1}{h'(r_i)}$. By Lemma \ref{lemmaauxilliary1}, $\frac{1-K_ih'}{h}$ is smoothly extendable onto $(r_{i-1},r_{i+1})$. Then $h$ satisfies \eqref{eq_ODE_central2} on $(r_{i-1},r_{i+1})$ for $R_i:=\int\frac{1-K_ih'}{h}$, and ${h'(r_i)=\frac{1}{K_i}}$ holds by definition.
			\end{itemize}
		\end{proof}
		Note that the constructions of $R_i$ respectively $f_i$ in (i)$\Rightarrow$(ii) and (i)$\Leftarrow$(ii) are inverse to one another. In particular, a strictly increasing solution $f_i$ of \eqref{eq_ODE_central} is fully determined by the choice of the smooth remainder function $R_i$, and vice versa. However, $R_i$ is uniquely determined by \eqref{eq_ODE_central2} up to a constant, so all strictly increasing solutions $f_i$ of \eqref{eq_ODE_central} are indeed uniquely determined up to scaling. Hence, there is a unique strictly increasing solution $f_i$ of \eqref{eq_ODE_central} on $(r_{i-1},r_{i+1})$ such that $f_i'(r_i)=1$.
	\section{Curvature components}\label{app_curvature}
	For convenience, let us collect the relevant Christoffel symbols and the Ricci and scalar curvature of the generalized Kruskal-Szekeres extension.
	\begin{prop}\label{prop_curvatureextension}
		Let $g=(F\circ\rho)(\d u\d v+\d v\d u)+r^2g_\mathcal{N}$ with $F$ as in Definition \ref{extdefi1}. Then the relevant Christoffel symbols of $g$ in coordinates $(u,v,x^I)$ are the following:
		\begin{align*}
		\Gamma_{vv}^v&=-\frac{u}{f'^2}f'',\\
		\Gamma_{uu}^u&=-\frac{v}{f' 2}f'',\\
		\Gamma_{uI}^K&=\frac{v}{\rho f'}\delta_I^K,\\
		\Gamma_{vI}^K&=\frac{u}{\rho f'}\delta_I^K,\\
		\Gamma_{IJ}^u&=-\frac{u\rho}{2K}\left(g_{\mathcal{N}}\right)_{IJ},\\
		\Gamma_{IJ}^v&=-\frac{v\rho}{2K}\left(g_{\mathcal{N}}\right)_{IJ},\\
		\Gamma_{IJ}^K&=\left(\Gamma^N\right)_{IJ}^K,
		\end{align*}
		where $\left(\Gamma^N\right)_{IJ}^K$ denote the Christoffel symbols on $(\mathcal{N},g_\mathcal{N})$ respectively. Moreover
		\begin{align*}
			\Ric_{uv}&=-\frac{K}{f'}\left(h''+\frac{(n-1)}{\rho}h'\right),\\
			\Ric_{IJ}&=\left(\Ric_{\mathcal{N}}\right)_{IJ}-\left((n-2)h+\rho h'\right)\left(g_{\mathcal{N}}\right)_{IJ},
		\end{align*}
		where $\Ric_\mathcal{N}$ denotes the Ricci curvature on $(\mathcal{N},g_\mathcal{N})$.
	All other Christoffel symbols and Ricci curvature components vanish.
	\end{prop}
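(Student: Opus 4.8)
The plan is to exploit the manifest warped-product structure: by Proposition~\ref{propisometry} the extension is the warped product $(\mathbb{P}_h,\d s^2)\times_{\rho}\mathcal{N}$, whose base is the Kruskal--Szekeres plane carrying the null metric $g_{uv}=F\circ\rho$ and whose warping function is $\rho=f^{-1}(uv)$. I would first record the elementary derivatives $\partial_u\rho=\tfrac{v}{f'}$, $\partial_v\rho=\tfrac{u}{f'}$ (which follow from $uv=f(\rho)$), and, since $F=\tfrac{2K}{f'}$, the logarithmic derivatives $\partial_u\ln F=-\tfrac{vf''}{(f')^2}$ and $\partial_v\ln F=-\tfrac{uf''}{(f')^2}$. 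These quantities feed directly into both the base and the warped objects below.

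For the Christoffel symbols I would compute directly from $\Gamma_{ab}^c=\tfrac12 g^{cd}(\partial_a g_{bd}+\partial_b g_{ad}-\partial_d g_{ab})$, using $g^{uv}=\tfrac1F$ and $g^{IJ}=\rho^{-2}(g_\mathcal{N})^{IJ}$. The block structure annihilates most terms: on the base only $\Gamma_{uu}^u=\partial_u\ln F$ and $\Gamma_{vv}^v=\partial_v\ln F$ survive, which are exactly the stated $-\tfrac{v}{(f')^2}f''$ and $-\tfrac{u}{(f')^2}f''$; the mixed symbols are the usual warped-product ones, $\Gamma_{uI}^K=(\partial_u\ln\rho)\delta_I^K=\tfrac{v}{\rho f'}\delta_I^K$ and $\Gamma_{IJ}^u=-\tfrac12 g^{uv}\partial_v(\rho^2)(g_\mathcal{N})_{IJ}=-\tfrac{\rho u}{F f'}(g_\mathcal{N})_{IJ}$, and the identity $Ff'=2K$ from Lemma~\ref{lem1} collapses the latter to $-\tfrac{\rho u}{2K}(g_\mathcal{N})_{IJ}$. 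The purely $\mathcal{N}$-symbols $\Gamma_{IJ}^K$ reduce to those intrinsic to $(\mathcal{N},g_\mathcal{N})$ since the warping factor is constant in the $\mathcal{N}$-directions.

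For the Ricci tensor I would invoke O'Neill's warped-product identities \cite[Ch.~7]{oneill}, which require the base Hessian, Laplacian and gradient norm of $\rho$ together with $\Ric^{\mathbb{P}}$. Using Lemma~\ref{lem1} in the form $\tfrac{f}{f'}=Kh$ one finds $g^{\mathbb{P}}(\nabla\rho,\nabla\rho)=\tfrac{2uv}{F(f')^2}=\tfrac{f}{Kf'}=h$; differentiating this ODE once gives $1-\tfrac{ff''}{(f')^2}=Kh'$, hence $(\operatorname{Hess}^{\mathbb{P}}\rho)_{uv}=\partial_u\partial_v\rho=\tfrac{Kh'}{f'}$ and therefore the base Laplacian $\Delta^{\mathbb{P}}\rho=2g^{uv}(\operatorname{Hess}^{\mathbb{P}}\rho)_{uv}=h'$. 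Since $\mathbb{P}_h$ is two-dimensional one has $\Ric^{\mathbb{P}}_{uv}=-\partial_u\partial_v\ln F$. Feeding these into O'Neill's formulas $\Ric_{uv}=\Ric^{\mathbb{P}}_{uv}-\tfrac{n-1}{\rho}(\operatorname{Hess}^{\mathbb{P}}\rho)_{uv}$ and $\Ric_{IJ}=\Ric^{\mathcal{N}}_{IJ}-\rho^2(g_\mathcal{N})_{IJ}\big(\tfrac{\Delta^{\mathbb{P}}\rho}{\rho}+(n-2)\tfrac{g^{\mathbb{P}}(\nabla\rho,\nabla\rho)}{\rho^2}\big)$ immediately yields the stated $\Ric_{IJ}=\Ric^{\mathcal{N}}_{IJ}-((n-2)h+\rho h')(g_\mathcal{N})_{IJ}$, while the vanishing of the mixed components $\Ric_{uI}=\Ric_{vI}=0$ is precisely the mixed-type O'Neill identity.

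The one genuinely delicate step is converting the base Ricci from $f$- to $h$-language: $\partial_u\partial_v\ln F$ produces the combination $-\tfrac{f''}{(f')^2}-\tfrac{f}{f'}\big(\tfrac{f'''}{(f')^2}-\tfrac{2(f'')^2}{(f')^3}\big)$, and showing that this equals $\tfrac{Kh''}{f'}$ requires differentiating the ODE $\tfrac{f}{f'}=Kh$ a \emph{second} time so that the $f'''$- and $(f'')^2$-terms cancel exactly against $Kh''$. This is where the precise algebraic form of \eqref{eq_ODE_central} is essential; once it is in hand one obtains $\Ric^{\mathbb{P}}_{uv}=-\tfrac{Kh''}{f'}$ and hence $\Ric_{uv}=-\tfrac{K}{f'}\big(h''+\tfrac{n-1}{\rho}h'\big)$, completing the proof. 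I expect the remaining work to be pure bookkeeping, the only subtlety being to keep $\dim\mathcal{N}=n-1$ consistent throughout the dimensional factors.
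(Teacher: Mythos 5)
Your proposal is correct and follows essentially the same route as the paper, whose proof is precisely the ``straightforward computation for warped products using \eqref{eq_ODE_central}'' that you carry out in detail: O'Neill's warped-product curvature identities applied to the base $(\mathbb{P}_h,\d s^2)$ with warping function $\rho$, combined with the first and second derivatives of the ODE $\tfrac{f}{f'}=Kh$ to convert everything into $h$-language. All the key identities you record ($\partial_u\partial_v\rho=\tfrac{Kh'}{f'}$, $|\nabla\rho|^2_{\mathbb{P}}=h$, $\Delta^{\mathbb{P}}\rho=h'$, and $\Ric^{\mathbb{P}}_{uv}=-\partial_u\partial_v\ln F=-\tfrac{Kh''}{f'}$) check out, and they reproduce the stated Christoffel symbols and Ricci components exactly.
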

	\begin{proof}
		Straightforward computation for warped products using \eqref{eq_ODE_central}.
	\end{proof}\nocite{*}
\end{appendix}

\bibliographystyle{plain}
\bibliography{biblio_kruskal}

\end{document}